\documentclass[nature,notitlepage,onecolumn,superscriptaddress,longbibliography]{revtex4-1}
\bibliographystyle{naturemag}
\usepackage{amsmath,amsfonts}
\usepackage{todonotes}
\usepackage{graphicx}
\definecolor{darkgreen}{rgb}{0,0.5,0}
\definecolor{shadecolor}{rgb}{1,0.8,0.3}
\definecolor{shad2}{gray}{.4}
\definecolor{lightgray}{RGB}{230, 230, 230}
\definecolor{darkred}  {rgb}{0.5,0,0}
\definecolor{darkblue} {rgb}{0,0,0.5}
\usepackage{hyperref}
\hypersetup{colorlinks, breaklinks, linkcolor=darkred, urlcolor=darkblue, citecolor=darkblue}

\newtheorem{theorem}{Theorem}

\newtheorem{corollary}{Corollary}

\newtheorem{lemma}{Lemma}

\newtheorem{proposition}{Proposition}

\newenvironment{proof}[1][Proof]{\noindent\textbf{#1.} }{\ \rule{0.5em}{0.5em}}

\def\sq{\operatorname{sq}}

\def\>{\rangle}
\def\<{\langle}
\def\({\left(}
\def\){\right)}

\newcommand{\ket}[1]{\left|{#1}\right\rangle}
\newcommand{\bra}[1]{\left\langle{#1}\right|}

\newcommand{\mc}[1]{\mathcal{#1}}

\begin{document}
\title{Linear programs for entanglement and key distribution in the quantum internet}

\author{Stefan B{\"a}uml}
\email{stefan.baeuml@icfo.eu}
\affiliation{QuTech, Delft University of Technology, Lorentzweg 1, 2628 CJ Delft, Netherlands}
\affiliation{ICFO-Institut de Ciencies Fotoniques, The Barcelona Institute of Science and Technology, Av. Carl Friedrich Gauss 3, 08860 Castelldefels (Barcelona), Spain}
\affiliation{NTT Basic Research Laboratories, NTT Corporation, 3-1 Morinosato-Wakamiya, Atsugi, Kanagawa 243-0198, Japan}
\affiliation{NTT Research Center for Theoretical Quantum Physics, NTT Corporation, 3-1 Morinosato-Wakamiya, Atsugi 243-0198, Japan}	
\author{Koji Azuma}
\affiliation{NTT Basic Research Laboratories, NTT Corporation, 3-1 Morinosato-Wakamiya, Atsugi, Kanagawa 243-0198, Japan}
\affiliation{NTT Research Center for Theoretical Quantum Physics, NTT Corporation, 3-1 Morinosato-Wakamiya, Atsugi 243-0198, Japan}	
\author{Go Kato}
\affiliation{NTT Communication Science Laboratories, NTT Corporation, 3-1 Morinosato-Wakamiya, Atsugi 243-0198, Japan}
\affiliation{NTT Research Center for Theoretical Quantum Physics, NTT Corporation, 3-1 Morinosato-Wakamiya, Atsugi 243-0198, Japan}	
\author{David Elkouss}
\affiliation{QuTech, Delft University of Technology, Lorentzweg 1, 2628 CJ Delft, Netherlands}

\date{\today}
\begin{abstract}
Quantum networks will allow to implement communication tasks beyond the reach of their classical counterparts. A pressing and necessary issue for the design of quantum network protocols is the quantification of the rates at which these tasks can be performed. Here, we propose a simple recipe that yields efficiently computable lower and upper bounds on the maximum achievable rates. For this we make use of the max-flow min-cut theorem and its generalization to multi-commodity flows to obtain linear programs. We exemplify our recipe deriving the linear programs for bipartite settings, settings where multiple pairs of users obtain entanglement in parallel as well as multipartite settings, covering almost all known situations.  We also make use of a generalization of the concept of paths between user pairs in a network to Steiner trees spanning a group of users wishing to establish Greenberger-Horne-Zeilinger states.
\end{abstract}

\maketitle
\tableofcontents

\section{Introduction}
Quantum entanglement allows for the implementation of communication tasks not possible by classical means. The most prominent examples are quantum key distribution and quantum teleportation between two parties \cite{EK91,BBM92,bennett1993teleporting}, but there is a host of other tasks also involving more then two parties \cite{wehner2018quantum}. An example of a protocol using multipartite entanglement is quantum conference key agreement \cite{augusiak2009multipartite}, where multiple parties who trust each other need to establish a common key. Another example is 
quantum secret sharing \cite{hillery1999quantum}, where multiple parties who do not trust each other wish to encrypt a message in such a way that it can only be decrypted if all parties cooperate. Multipartite entanglement can also be used for the synchronization of a network of clocks \cite{komar2014quantum} and plays an important role in quantum computing \cite{raussendorf2003measurement}. Quantum networks allow for the distribution of entanglement as a resource for such tasks among parties, which could, in principle, be spread out across different continents in an efficient manner. Whereas small-scale quantum networks can be designed in such a way that they perform optimally in distributing a particular resource to a particular set of users, a future quantum version of the internet will most likely grow to have a complex structure and involve a number of user pairs, or groups, requiring entangled resources for different tasks in parallel. 

Recently, in light of the experimental promise of short-term quantum network deployment, the community has begun to devote attention to communication problems for networks of noisy quantum channels and their general structures. Arguably, the most important one is the computation of the maximum rates at which the different tasks can be performed. Given that, even in the case of point-to-point links, entanglement makes the characterization of capacities notably more complicated than its classical counterpart, with phenomena such as superactivation \cite{smith2008quantum}, it was unclear how much it would be possible to borrow from the theory of classical networks. Besides, the usage of a quantum channel is much more expensive than that of its classical counterpart. This motivates the introduction of different capacities which account for resources in different ways. The results of \cite{pirandola2016capacities,pirandola2019capacities} introduced the quantum problem and successfully established upper and lower bounds on a capacity of a quantum network which quantifies the maximum size of bipartite maximally entangled states (for quantum teleportation) or private states (for quantum key distribution) per network use, as a generalization of the fundamental/established notion of classical network capacity \cite{el2011network}. These upper and lower bounds coincide when the network is composed only of a very relevant class of quantum channels, called distillable channels. The results of \cite{AML16,AK16,rigovacca2018versatile} derive analogous bounds, alternatively defining the capacity of a quantum network per total number of channel uses (related with a cost) or per time, rather than per network use, for generality. In any case, rather surprisingly, the series of fundamental works \cite{pirandola2016capacities,pirandola2019capacities,AML16,AK16,rigovacca2018versatile} have shown that these capacity of quantum networks for bipartite communication behave similar to that of classical networks. The distribution of bi- and multipartite entanglement in quantum networks has been in considered in a number of other works, including \cite{van2013designing,van2013path,epping2016quantum,epping2016large,epping2016quantum,wallnofer20162d,hahn2019quantum,chakraborty2019distributed}. These works differ from \cite{pirandola2016capacities,pirandola2019capacities,AML16,AK16,rigovacca2018versatile} in that they are not concerned with networks of general noisy channels.

Namely, given a network of quantum noisy channels and bounds on their capacities satisfying certain properties, one can conceptually construct a classical version of the quantum network where each quantum channel is replaced by a perfect classical channel with a capacity given by the bound on the quantum channel capacity. Then, by considering cuts between two nodes in the induced `classical' network, it is possible to obtain upper and lower bounds on a capacity of the network for distributing private keys or entanglement between two clients. The same techniques have found application for many user pairs \cite{pirandola2016capacities,pirandola2019bounds,bauml2017fundamental} and for the distribution of multipartite entanglement among multiple users \cite{bauml2017fundamental,yamasaki2017graph}. While the early work has laid down extremely useful techniques to characterize quantum network capacities, it has either not focused on their computation \cite{AML16,bauml2017fundamental} or left open the computability of several of the scenarios considered \cite{pirandola2019capacities}. However, this is rather important in practice, in the sense that the quantum network will be required to serve entanglement resources quickly according to the requests of clients, and, in so doing, efficient estimation of the quantum network capacities is a necessary basis for choosing a proper subnetwork to accomplish that. The goal of this paper is to provide a simple recipe to find such efficiently computable bounds for quantum network capacities. 

In this paper, using the approach taken in \cite{AML16,AK16,bauml2017fundamental,rigovacca2018versatile}, i.e., defining a network capacity as a rate per the total number of channel uses or per time, we introduce or generalize the capacities for private or quantum communication in the following scenarios: bipartite communication, concurrent communication between multiple user pairs with the objective of (1) maximizing the sum of rates achieved by the user pairs or (2) maximizing the worst-case rate that can be achieved by any pair, as well as multipartite state sharing where the goal is either to distribute Greenberger-Horne-Zeilinger (GHZ) or multipartite private states \cite{augusiak2009multipartite} for a group of network users. We then provide linear-program lower and upper bounds on the all these capacities. The size of the linear programs scales polynomially in the parameters of the network, making it computable in polynomial time by interior point algorithms \cite{ye1991n3l}. A central tool deriving upper bounds in the case of multiple user pairs are approximate min-cut max-flow theorems for multi-commodity flows \cite{aumann1998log,gunluk2007new,garg1996approximate}. Up to a factor of the logarithmic order of the number of user pairs, these results link quantities that occur in the known upper bounds \cite{bauml2017fundamental}, such as the minimum cut ratio (i.e. the smallest ratio of the capacity of a cut and the demand across the cut) and the minimum capacity multicut (i.e. the smallest capacity set of edges whose removal disconnects all user pairs), both of which are NP-hard problems to calculate in general graphs \cite{garg1997primal,aumann1998log}, to multi-commodity flow maximizations that can be computed by linear programs (LPs). A challenge we address in this work is to find protocols that can achieve the upper bounds. In the bipartite case, protocols involving distillation of Bell pairs across all edges of a network, and entanglement swapping along paths have been used to provide lower bounds on the network capacities \cite{pirandola2016capacities,pirandola2019capacities,AK16}. Using such simple routing methods, it was shown in \cite{pirandola2016capacities,pirandola2019capacities,AK16} that the bipartite upper bounds can be achieved for networks consisting of a wide class of channels, known as distillable channels \cite{PLOB17}, which include erasure channels, dephasing channels, bosonic quantum amplifier channels and lossy optical channels. Here, we extend the bipartite protocol presented in \cite{AK16} to the case of many user pairs and to the distribution of GHZ states among a set of users. We do so by considering edge-disjoint Steiner trees spanning the set of users.

\begin{figure}
\centering
\includegraphics[width=1\textwidth]{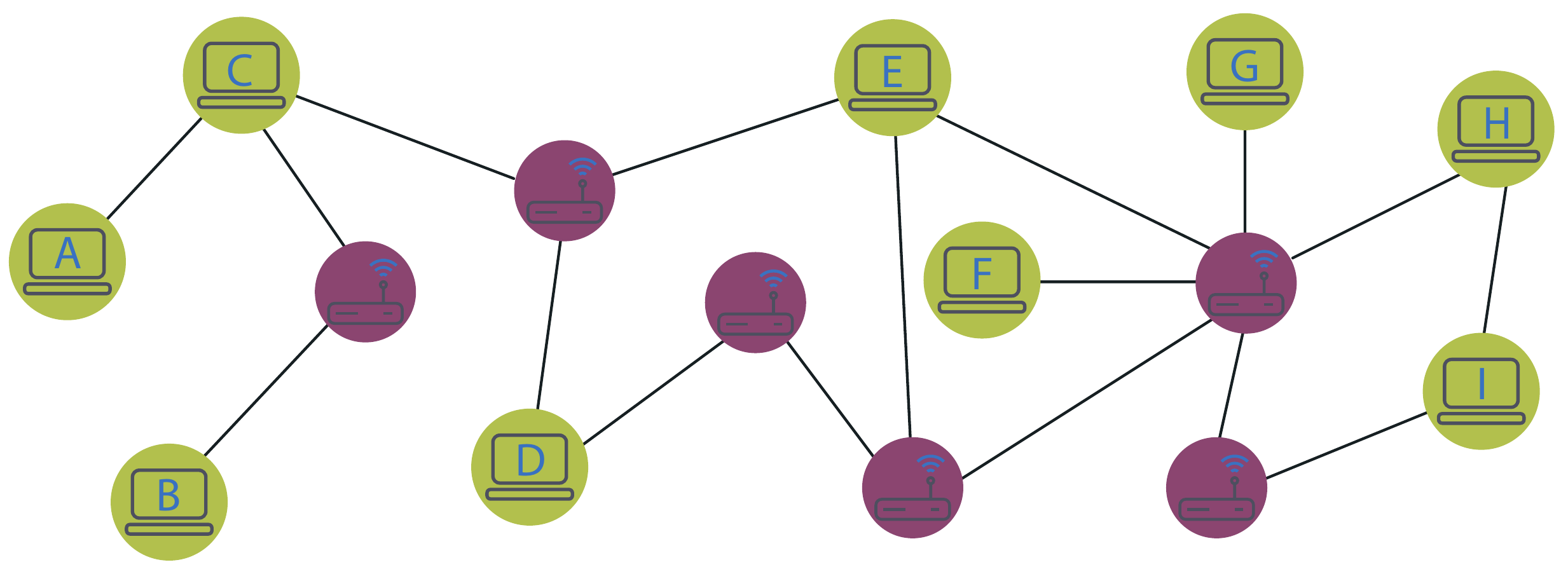}     
\caption{Example of a quantum network: It consists of quantum channels, repeater stations and end users $A,B,\ldots,I$. In such a network there are many possible communication tasks. Some examples are the distribution of private states, Bell states and Greenberger-Horne-Zeilinger (GHZ) states. The first two are bipartite tasks. We study the implementation of these tasks between a single pair of users, for instance $A$ and $I$, and between multi-pairs of users in parallel, for instance $A$ and $I$, $C$ and $D$ and $F$ and $H$. The last task, the distribution of GHZ states, is a multipartite user scenario, for instance $A,B,E,G,I$ could distill a five-partite GHZ state.  }\label{fig:net}
\end{figure}

\section{Results}
\label{sec:summary}

Our base setup is a network of nodes connected by noisy quantum channels (see Fig.~\ref{fig:net}). The nodes act either as end users or as repeater stations and have the ability to store and process quantum information locally. In addition, all nodes are connected by classical lines of communication, which can be used freely. 

We are interested in the possibilities and limitations of quantum networks for different communication tasks and usage scenarios. Fortunately, most tasks of interest can be rephrased as the distribution of an entangled target state among users of the quantum network \cite{wilde2013quantum}. Here, we consider the distribution of a bipartite entangled target state between a pair of users, of multiple bipartite entangled target states between multiple pairs of users in parallel as well as of a multipartite entangled target state among a group consisting of more than two users. The distribution of these states is known to be equivalent to the problems of quantum information transmission, private classical communication, quantum key distribution and quantum conference key agreement among others.

As we are interested in emergent, organically grown quantum networks, such as the classical internet, we do not make any assumptions on the structure of the network except that it can be described  by a finite directed graph. Let the quantum network be given by the directed graph $G=(V,E)$, where $V$ denotes the set of the finite vertices and $E$ the set of the finite directed edges, which represent quantum channels. Each directed edge $e\in E$ has tail $v\in V$ and head $w\in V$.  We also denote $e$ by $vw$. We can also assign a nonnegative edge capacity \footnote{In order to avoid confusion, we will use the terms `edge capacity' when referring to edges and `capacity', when referring to quantum or private capacities of a channel or the entire network.} $c(e)$ to every edge. Each edge $vw$ corresponds to a channel $\mc{N}^e=\mc{N}^{vw}$ with input in $v$ and output in $w$. We assume that each vertex has the capability to store and process quantum information locally and that all vertices are connected by public lines of classical communication, the use of both of which is considered to be a free resource. Let us assume there is a subset $U\subset V$ of the vertices, the users who wish to establish a target state containing the desired resource, whereas the remaining vertices serve as repeater stations. In the following section we will elaborate on the exact form of the target state.

We assume that initially there is no entanglement between any of the vertices. A target state can be distributed by means of an adaptive protocol, consisting of local operations and classical communication (LOCC) among the vertices in the network interleaved by channel uses \cite{pirandola2016capacities,pirandola2019capacities,AML16}.  In this work we are not concerned with the inner workings of the protocol but describe a protocol only by the total number of channel uses, and usage frequencies of each channel.  We describe a protocol as follows:  Given upper bounds $n_e$ on the average of the number of uses of each channel ${\cal N}^e$, we define a set of usage frequencies $\{p_e\}_{e\in E}$ of each channel ${\cal N}^e$ as $p_e := n_e/ n (\ge 0)$. Here $n$ can be regarded as time or $n$ with $\sum_{e \in E}p_e=1$ can be considered to be an upper bound on the average of total channel uses  (see \cite{AK16}). Further, we introduce an error parameter $\epsilon$ such that after the final round of LOCC a state $\epsilon$-close in trace distance to the target state is obtained. Depending on the user scenario, the target state can be a maximally entangled state, a tensor product of multiple maximally entangled states between multiple pairs of users or a GHZ state. By average we mean that parameters of a protocol are averaged over all possible LOCC outcomes. We call such a protocol an $(n,\epsilon, \{p_e\}_{e\in E})$ adaptive protocol. In the asymptotic limit where $n\to\infty$ it then holds ${n}_e\to\infty$ for edge $e$ with $p_e>0$ while $\{p_e\}_{e\in E}$ remains fixed \cite{AK16}.

Note that whereas quantum channels are directed, the direction does not play a role when we use them to distribute entanglement under the free use of (two-way) classical communication. For example, once a channel has been used to distribute a Bell state, which is invariant under permutations of nodes across the channel. This motivates the introduction of an undirected graph $G'=(V,E')$, where $E'$ is obtained from $E$ as follows:
For any edge $vw\in E$ with $wv \in E$, the directed edges $vw$ and $wv$ are replaced by single undirected edge $\{vw\}$ (or, equivalently $\{wv\}$) with $c'(\{vw\})=c(vw)+c(wv)$, while, for any edge $vw\in E$ with $wv \notin E$, the directed edge $vw$ is replaced by undirected edge $\{vw\}$ with $c'(\{vw\})=c(vw)$. For more details about our notations see Supplementary Note 1.


Let us also note that whereas it is common from a quantum information theory point of view to allow for free LOCC operations, there are practical challenges to implement quantum memories with long storage times. By a slight abuse of our notation, however, it is possible to include such effects into our scenario, as well. Namely one could divide a vertex into a pre- and post storage vertex and add an additional noisy channel describing the noisy quantum memory (for instance, see \cite{AML16}).

\subsection{Bipartite user scenario}\label{sec:bip}
In this section we obtain linear-program upper and lower bounds on the entanglement and key generation capacities of a network for bipartite scenarios. While some of the discussion have been made implicitly in earlier results \cite{pirandola2016capacities,pirandola2019capacities,AK16}, it is worth giving an explicit formulation here, given its relevance. 
It will also serve as a good starting point to demonstrate our method and introduce some notation. Let us suppose that the set of users only contains two vertices, $s\in E$, a.k.a Alice, and $t\in E$, also known as Bob. A possible target state could be a maximally entangled state  $\ket{\Phi^d}_{M_sM_t}=\frac{1}{\sqrt{d}}\sum_{i=1}^d\ket{ii}_{M_sM_t}$ with $\log d$ ebits. We also use the notation $\Phi^d_{M_sM_t}=\ket{\Phi^d}\bra{\Phi^d}_{M_sM_t}$. In the case of $d=2$, this state is called a Bell state. The target state could also be a general private state \cite{horodecki2005secure,horodecki2009general}, which is of the form $\gamma^d_{K_sK_tS_sS_t}=U^{\text{twist}}|\Phi^d\>\<\Phi^d|_{K_sK_t}\otimes\sigma_{S_sS_t}U^{\text{twist}\dagger}$, where $\sigma_{S_sS_t}$ is an arbitrary state and $U^{\text{twist}}=\sum_{ik}|ik\>\<ik|_{K_sK_t}\otimes U^{(ik)}_{S_sS_t}$ is a controlled unitary that `twists' the entanglement in the subsystem $K_sK_t$ to a more involved form also including the subsystem $S_sS_t$. It has been shown that, by measuring the `key part' $K_sK_t$, while keeping the `shield part' $S_sS_t$ away from an eavesdropper Eve, $\log d$ bits of a private key can be obtained. The number of ebits or private bits is treated as the figure of merit.

We can now define a quantum network capacity ${\cal Q}_{ \{p_e \}_{e \in E} } (G, \{ {\cal N}^e \}_{e \in E} )$ per time [${\cal Q} (G, \{ {\cal N}^e \}_{e \in E} )$ per total channel use] as the largest rate $\langle\log d^{(k)}\rangle_{k}/n$ achievable by an adaptive $(n,\epsilon, \{p_e\}_{e\in E})$ protocol such that after $n$ uses the finally obtained state $\rho_{M_sM_t}^{(n,k)}$ is $\epsilon$-close to $\Phi^{d^{(k)}}_{M_sM_t}$, in the limit $n\to\infty$ and $\epsilon\to0$ [maximized over all user frequencies $p_e\geq0$ such that $\sum_ep_e=1$]. Here $k$ is a vector keeping the track of outcomes of the LOCC rounds and the notation $\langle\cdots\rangle_{k}$ corresponds to averaging over all LOCC outcomes. Similarly, we define a private network capacity ${\cal P}_{ \{p_e \}_{e \in E} } (G, \{ {\cal N}^e \}_{e \in E} )$ per time [${\cal P} (G, \{ {\cal N}^e \}_{e \in E} )$ per total channel use] as the largest rate $\langle\log d^{(k)}\rangle_{k}/n$ achievable by an adaptive $(n,\epsilon, \{p_e\}_{e\in E})$ protocol such that after $n$ uses the state $\rho_{K_sK_tS_sS_t}^{(n,k)}$ is $\epsilon$-close to $\gamma^{d^{(k)}}_{K_sK_tS_sS_t}$, in the limit $n\to\infty$ and $\epsilon\to0$ [maximized over all user frequencies $p_e\geq0$ such that $\sum_ep_e=1$].

As the class of private states includes maximally entangled states, the private capacity is an upper bound on the quantum capacity \cite{horodecki2005secure,horodecki2009general}. Our main results in this section will be efficiently computable upper and lower bounds on the private and quantum capacities, respectively. 

In a number of recent works upper bounds on private network capacities have been obtained \cite{AML16,pirandola2019capacities,rigovacca2018versatile}. The main idea behind those results is to assign nonnegative edge capacities to each edge $e$ and to find the minimum edge capacity cut between $s$ and $t$. By cut between $s$ and $t$ we mean a set of the edges whose removal disconnects $s$ and $t$. The edge capacity of a cut can be defined as the sum of edge capacities of the edges in the cut. For details see the Methods section. If the edge capacity $c(e)$ of an edge $e$ is given by the usage frequency $p_e$ of channel $\mathcal{N}^e$, multiplied by an entanglement measure $\mathcal{E}(\mathcal{N}^e)$ upper bounding the private capacity of $\mathcal{N}^e$, which is continuous near the target state and cannot be increased by amortization (see properties P1 and P2 of Ref. \cite{rigovacca2018versatile} or Supplementary Note 2), the minimum edge capacity cut provides an upper bound on the private network capacity. Examples of suitable quantities $\mathcal{E}(\mathcal{N}^e)$ include the squashed entanglement $E_{\sq}$ \cite{christandl2004squashed}, the max-relative entropy of entanglement $E_{\max}$ \cite{datta2009min} and, for a particular class of so-called Choi-stretchable channels/teleportation-simulable \cite{BDSW96,HHH99,Mul12,PLOB17}, the relative entropy of entanglement $E_R$ \cite{vedral1997quantifying} of the channel. If we know such quantities for all channels constituting the network, all that is left to do is finding the minimum edge capacity cut, which is a well known problem in graph theory. However, it is not necessarily efficient to solve this optimization directly, because there is a case where we need to maximize further such a minimized edge capacity. 
For instance, it is not clear a priori how to maximize over channel frequencies the minimum edge capacity to find the capacity of the network per channel use. To tackle this issue we resort to the duality of the problem.

In particular, using the max-flow min-cut theorem \cite{EFS96,ford1956maximal}, we rephrase the problem of finding the minimum edge capacity cut as a network flow maximization problem in the undirected graph $G'$. Thanks to this, it becomes sufficient for us to consider maximization only, in every case. In a network flow maximization problem in an undirected graph, the idea is to assign a variable $f_{vw}$ and $f_{wv}$ to each undirected edge $\{vw\}$ which can take non-negative values. $f_{vw}$ is interpreted as an abstract flow of some commodity from vertex $v$ to vertex $w$. As such, it has to fulfill the following constraint: Interpreting the edge capacity $c'(\{vw\})$ of an undirected edge $\{vw\}$ as the capacity of its edge to transmit an abstract commodity, we require that the sum of edge flows $f_{vw}$ and $f_{wv}$ does not exceed the edge capacity $c'(\{vw\})$. We call this the edge capacity constraint. 

Having defined a flow of an abstract commodity through an edge, the obvious next step is to consider a flow through the entire network. Namely, we mark two vertices, the source $s$ and the sink $t$ and define a flow from $s$ to $t$ as the sum of all `outgoing' flows $f_{sv}$, where $v$ is a vertex adjacent to $s$, such that for every edge the edge capacity constraint is fulfilled and that for every vertex $w\notin\{s,t\}$ the sum over $v$ of `incoming' edge flows $f_{vw}$ is equal to the sum over $v$ of `outgoing' flows $f_{wv}$, where $v$ are the vertices adjacent to $w$, which is known as flow conservation constraint. If the graph is undirected, the roles of the source and the sink can be exchanged, without changing the value of the flow. As both the edge capacity and the flow conservation constraint are linear, the maximization of the flow from $s$ to $t$ can be efficiently computed by means of linear programming \cite{murty1983linear}. The max-flow min-cut theorem now states that the minimum edge capacity cut that separates $s$ and $t$ is equal to the maximum flow from $s$ to $t$. Figure \ref{fig:unicast} illustrates this with an example. For detailed definitions of cuts, flows and the max-flow min-cut theorem see the Methods section. 

\begin{figure}
\centering
\includegraphics[width=\textwidth, trim={3cm 23cm 3cm 3cm},clip=true]{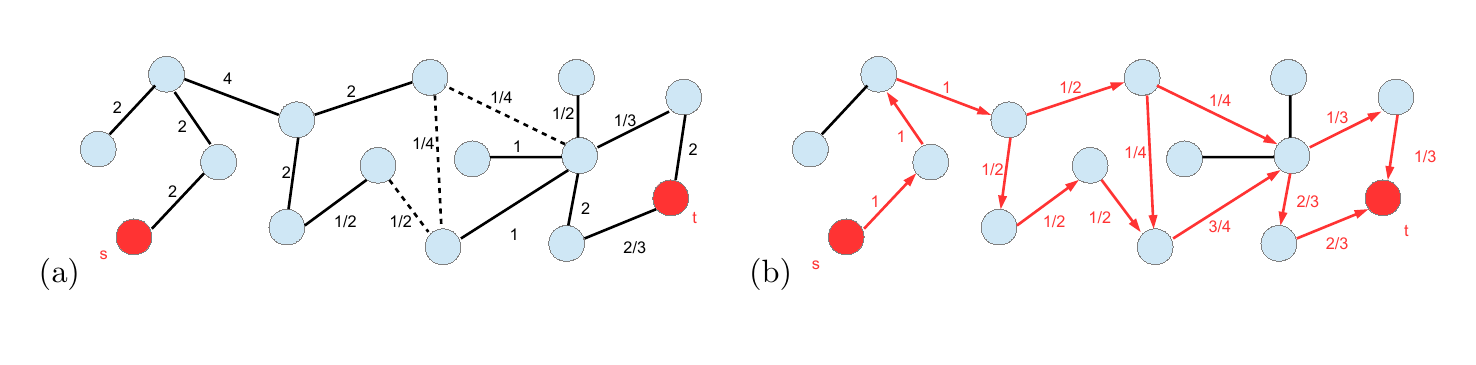}
\caption{Example of the max-flow min-cut theorem (a) Min-cut: Example of an undirected graph with a source-sink pair (in red). The labels of the edges denote their edge capacities. The edges in dashed lines represent a source-sink cut, i.e. their removal completely disconnects the source from the sink. The capacity of the cut is given by the sum over the edge capacities in the cut, in this case equal to $1$, which is the minimum capacity of all source-sink cuts in this network. In other words, the min-cut is equal to $1$. Note that the minimizing cut is not unique. (b) Max-flow: By the max-flow min-cut theorem the min-cut is equal to the maximum flow from the source to the sink. Here we have provided an example of a flow from the source to the sink. The labels denote the directed edge flows $f_e$. The flow from the source to the sink achieves the min-cut value of $1$.}
\label{fig:unicast}
\end{figure}

Precisely, we use the max-flow min-cut theorem to transform the min-cut upper bounds on the private network capacity given in \cite{rigovacca2018versatile} into an efficiently computable linear program. To do so we define directed edge capacities $c(e)=p_{e}\mathcal{E}(\mathcal{N}^{e})$, for every directed edge $e\in E$. Thus, entanglement takes the role of the abstract commodity considered above. 

The interpretation of entanglement as a commodity raises the question if there exists a protocol that can distribute entanglement in a way similar to the flow of a commodity through a network. Ideally, one could construct such a protocol using the edge flows obtained in the flow maximization. To some extend this can be achieved by a quantum routing protocol, such as the aggregated repeater protocol introduced in \cite{AK16}. The aggregated repeater protocol consists of two steps: First each channel is used to distribute Bell states at a rate $p_eR^\leftrightarrow(\mathcal{N}^e)$ such that $R^\leftrightarrow(\mathcal{N}^e)$ reaches the quantum capacity $Q^\leftrightarrow(\mathcal{N}^e)$ in the asymptotic limit of many channel uses. This results in a network of Bell states, which can be described by an indirected multigraph, where each edge corresponds to one Bell pair. The second step of the protocol is to find edge-disjoint paths from Alice to Bob and connect them by means of entanglement swapping. The number of Bell pairs is thus equal to the number of edge disjoint paths between Alice and Bob in the Bell network. Hence, in order to obtain a lower bound on the capacity, one would have to find the number of edge disjoint paths in the the multigraph corresponding to the Bell network.

Finding the maximum number of edge-disjoint paths between $s$ and $t$ in a multigraph is the same as maximizing the flow in a graph where the edge capacities are given by the number of parallel edges in the multigraph, however with the additional constraint that each edge flow takes integer values \cite{nishizeki1990planar}. Such an integer flow maximization can no longer be formulated as a LP. Physically, the integer constraint corresponds to the fact that there is no such thing as `half a Bell state'. Hence, if a LP provides an edge flow of 0.5 for some edge and 1 for another, we cannot translate this into a protocol distributing half a Bell sate over the first edge and one Bell state over the other. What we can do, however, is to multiply all edge flows obtained in the optimization by a factor of $2$, and distribute one Bell state along the edge where we have obtained flow $0.5$ and two Bell states along the edge where we have obtained $1$. For rational edge flows obtained in the optimization and finite graphs, we can always find a large enough number to multiply the edge flows with to obtain integer values associated with each edge that can be translated into a number of Bell pairs distributed along this edge. If the edge flows obtained are real numbers, we can approximate them by rational numbers with arbitrary accuracy. This allows us to compute lower bounds on the quantum network capacities by means of maximizing the flow over a network with edge capacities given by $c(e)=p_{e}{Q^\leftrightarrow}(\mathcal{N}^{e})$, providing us with a lower bounds that can be efficiently computed by linear programming. Finally, we can include an optimization over usage frequencies $p_e$ into both LPs, providing us with:

\begin{theorem}\label{Theo_bip}
For a network described by a finite directed graph $G$ and an undirected graph $G'$ as defined above, the private and quantum network capacities per total channel use, $\mathcal{P}\left(G,\{\mc{N}^e\}_{e\in E}\right)$ and $\mathcal{Q}\left(G,\{\mc{N}^e\}_{e\in E}\right)$, satisfy
\begin{equation}
\bar{f}^{s\to t}_{\max}(G',\{Q^\leftrightarrow(\mc{N}^e)\}_{e\in E})\leq\mathcal{Q}\left(G,\{\mc{N}^e\}_{e\in E}\right)\leq\mathcal{P}\left(G,\{\mc{N}^e\}_{e\in E}\right)\leq \bar{f}^{s\to t}_{\max}(G',\{\mathcal{E}(\mc{N}^e)\}_{e\in E}),
\end{equation}
where $\bar{f}^{s\to t}_{\max}$ is given by the linear program Eq. (\ref{LP1bar}) in the Methods section.

Further, $\cal E$ can be chosen to be the squashed entanglement $E_{\sq}$, the max-relative entropy of entanglement $E_{\max}$ and, for a Choi-stretchable channels, the relative entropy of entanglement $E_R$. 
\end{theorem}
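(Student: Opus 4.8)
The plan is to prove the chain of three inequalities separately and then justify the admissible choices of $\mathcal{E}$. The middle inequality $\mathcal{Q}\left(G,\{\mc{N}^e\}_{e\in E}\right)\leq\mathcal{P}\left(G,\{\mc{N}^e\}_{e\in E}\right)$ is immediate from the definitions: a maximally entangled state $\Phi^{d}$ is a private state $\gamma^{d}$ with trivial shield, so any $(n,\epsilon,\{p_e\}_{e\in E})$ protocol whose output is $\epsilon$-close to $\Phi^{d^{(k)}}_{M_sM_t}$ is in particular a protocol whose output is $\epsilon$-close to $\gamma^{d^{(k)}}$ at the same rate. Hence every achievable quantum rate is an achievable private rate, and taking suprema over protocols and over frequencies $p_e$ gives $\mathcal{Q}\leq\mathcal{P}$.

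For the upper bound $\mathcal{P}\leq\bar{f}^{s\to t}_{\max}(G',\{\mathcal{E}(\mc{N}^e)\}_{e\in E})$, I would first fix the usage frequencies $\{p_e\}_{e\in E}$ and recall the cut bound of Ref.~\cite{rigovacca2018versatile}: if $\mathcal{E}$ is continuous near the target state and non-increasing under amortization (properties P1 and P2), then assigning directed edge capacities $c(e)=p_e\,\mathcal{E}(\mc{N}^e)$, the minimum-edge-capacity $s$--$t$ cut upper bounds the private network capacity at those frequencies. Passing to the undirected graph $G'$ with $c'(\{vw\})=c(vw)+c(wv)$, any undirected cut still disconnects $s$ and $t$ and has the same capacity, so the $G'$ min-cut remains a valid bound. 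By the max-flow min-cut theorem this min-cut equals the maximum $s\to t$ flow in $G'$ subject to the edge-capacity and flow-conservation constraints, a linear program. Finally I would take the supremum over $p_e\geq 0$ with $\sum_e p_e=1$; folding this optimization into the flow LP yields exactly the quantity $\bar{f}^{s\to t}_{\max}$ of Eq.~(\ref{LP1bar}), completing the upper bound.

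For the lower bound $\bar{f}^{s\to t}_{\max}(G',\{Q^\leftrightarrow(\mc{N}^e)\}_{e\in E})\leq\mathcal{Q}$, I would invoke the aggregated repeater protocol of Ref.~\cite{AK16}. Fixing $\{p_e\}_{e\in E}$ with $\sum_e p_e=1$, use each channel $\mc{N}^e$ a total of $n_e=n p_e$ times to distill Bell pairs at rate $R^\leftrightarrow(\mc{N}^e)\to Q^\leftrightarrow(\mc{N}^e)$, producing a Bell multigraph in which edge $e$ carries $\approx n\, p_e\, Q^\leftrightarrow(\mc{N}^e)$ parallel copies. The maximum number of edge-disjoint $s\to t$ paths in this multigraph equals its integer-valued max-flow, which for large $n$ lies within $o(n)$ of $n\,\bar{f}^{s\to t}_{\max}(G',\{Q^\leftrightarrow(\mc{N}^e)\}_{e\in E})$, since the fractional LP optimum scales linearly in the capacities. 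Connecting the Bell pairs along each path by entanglement swapping yields one end-to-end Bell pair per path; dividing by $n$ and letting $n\to\infty$, $\epsilon\to 0$ shows the rate $\bar{f}^{s\to t}_{\max}$ is achievable, and maximizing over $\{p_e\}_{e\in E}$ gives the claimed lower bound.

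The main obstacle is the lower bound, specifically reconciling the fractional optimum of the LP with the integrality forced by the fact that one cannot distribute a fraction of a Bell pair: I expect the careful part to be showing that the rounding incurred in passing to an integer flow, combined with the asymptotic attainment $R^\leftrightarrow(\mc{N}^e)\to Q^\leftrightarrow(\mc{N}^e)$, costs only a rate loss vanishing as $n\to\infty$, and that the error accumulated over path-wise entanglement swapping stays within the trace-distance budget $\epsilon$. Finally, for the admissible measures I would cite the established facts that the squashed entanglement $E_{\sq}$ and the max-relative entropy of entanglement $E_{\max}$ satisfy P1 and P2 on arbitrary channels, and that the relative entropy of entanglement $E_R$ does so on Choi-stretchable channels, so each is a valid choice of $\mathcal{E}$ in the upper bound.
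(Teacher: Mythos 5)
Your proposal is correct and follows essentially the same route as the paper: the middle inequality from the inclusion of maximally entangled states among private states, the upper bound via the cut bound of Ref.~\cite{rigovacca2018versatile} for measures with properties P1--P2 combined with the max-flow min-cut theorem and a final optimization over usage frequencies, and the lower bound via the aggregated repeater protocol, path decomposition of the flow, and entanglement swapping along edge-disjoint paths. You also correctly isolate the one genuinely delicate step, which the paper handles in its Lemma~\ref{intFlow} by rounding path-flows to rationals and rescaling so that the integrality loss is $O(1/k)$ and vanishes in the limit, together with continuity of the LP optimum in the capacities to pass from $R^\leftrightarrow_\epsilon$ to $Q^\leftrightarrow$.
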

For the proof see Supplementary Note 2. As described in the Methods section, the LPs scale polynomially with the size of the network.

Note that for a subset of Choi-stretchable channels, known as distillable channels, which include erasure channels, dephasing channels, bosonic quantum amplifier channels and lossy optical channels, the relative entropy of entanglement of the channel $\mc{N}^e$ (and its Choi state $\sigma^e$) is equal to the two-way classical assisted quantum capacity \cite{PLOB17}, $E_R(\mc{N}^e)=E_R(\sigma^e)=Q^\leftrightarrow(\mc{N}^e)$. Hence the bounds in Theorem \ref{Theo_bip} become tight.

\subsection{Multiple pairs of users}\label{sec:multiuni}

We now move on to the scenario of multiple pairs of users $(s_1,t_1), \cdots  ,(s_r,t_r)$ who wish to establish maximally entangled states or private states concurrently, i.e. we have  target states of the form $\bigotimes_{i=1}^r\Phi^{d_i}_{M_{s_i}M_{t_i}}$ or
$\bigotimes_{i=1}^r\gamma^{d_i}_{K_{s_i}K_{t_i}S_{s_i}S_{t_i}}$. This would be a typical scenario in a future `quantum internet', where a number of user pairs might wish to perform QKD in parallel. In contrast to the bipartite scenario discussed in the previous section, where the goal is to simply optimize the rate at which entanglement is distributed between a user pair, there are a number of different figures of merit in the multi-pair scenario. We define the following three figures of merit: (1) a total multi-pair quantum (private) network capacity ${\cal Q}^{\rm total}(G, \{ {\cal N}^e \}_{e \in E} )$ per total channel use [${\cal Q}^{\rm total}_{\{p_e\}_{e \in E}} (G, \{ {\cal N}^e \}_{e \in E} )$ per time] (${\cal P}^{\rm total}(G, \{ {\cal N}^e \}_{e \in E} )$ per total channel use [${\cal P}^{\rm total}_{\{p_e\}_{e \in E}} (G, \{ {\cal N}^e \}_{e \in E} )$ per time]), defined as the largest sum, over all user pairs, of the entanglement distribution rates achievable by an adaptive $(n,\epsilon, \{p_e\}_{e\in E})$ protocol such that after $n$ uses we are $\epsilon$-close to the target state, again taking the limit $n\to\infty$ and $\epsilon\to0$ [and maximizing over all user frequencies $p_e\geq0$ such that $\sum_ep_e=1$]. Whereas maximizing the sum of rates is a good approach when the goal is to distribute as much entanglement as possible, it has the drawback that the protocol can be unfair in the sense that some pairs might get more entanglement than others, while some might not get anything at all. This drawback can be overcome by using our second figure of merit: (2) a worst-case multi-pair quantum (private) network capacity ${\cal Q}^{\rm worst}(G, \{ {\cal N}^e \}_{e \in E} )$ per total channel use [${\cal Q}^{\rm worst}_{\{p_e\}_{e \in E}} (G, \{ {\cal N}^e \}_{e \in E} )$ per time] (${\cal P}^{\rm worst}(G, \{ {\cal N}^e \}_{e \in E} )$ per total channel use [${\cal P}^{\rm worst}_{\{p_e\}_{e \in E}} (G, \{ {\cal N}^e \}_{e \in E} )$ per time]), i.e. the least entanglement distribution rate that can be achieved by any user pair [by maximizing over all user frequencies $p_e \ge 0$ with $\sum_e p_e =1$]. This approach is good in a scenario where the goal is to distribute entanglement in a fair way, in the sense that the amount of entanglement that each user pair obtains is maximized. Finally we consider (3) the case where we assign weight $q_i$ to each user pair $(s_i,t_i)$. This approach can be used if user pairs are given different priorities. We call the  corresponding figure of merit weighted multi-pair quantum (private) network capacity ${\cal Q}^{q_1,\cdots, q_r} (G, \{ {\cal N}^e \}_{e \in E} )$ per total channel use [${\cal Q}^{q_1,\cdots, q_r}_{\{p_e\}_{e \in E}} (G, \{ {\cal N}^e \}_{e \in E} )$ per time] (${\cal P}^{q_1,\cdots, q_r} (G, \{ {\cal N}^e \}_{e \in E} )$ per total channel use [${\cal P}^{q_1,\cdots, q_r}_{\{p_e\}_{e \in E}} (G, \{ {\cal N}^e \}_{e \in E} )$ per time]) and define it as the largest achievable weighted sum of rates [with maximization over all user frequencies $p_e \ge 0$ with $\sum_e p_e =1$]. We will now present our results for the total and worst-case scenario. For bounds on the weighted multi-pair network capacities see Supplementary Note 3.

Let us begin with scenario (1). As in the bipartite case, we can assign edge capacity $c(e)=p_eE_{\sq}(\mathcal{N}^e)$ to each edge $e$ in the graph corresponding to the network. From \cite{bauml2017fundamental} we can obtain upper bounds on the total multi-pair private network capacity which are given the minimum capacity multicut. A multicut is defined as a set of edges whose removal disconnects all pairs. The capacity of a multicut is defined by summing over the edge capacities of all edges in the multicut. Whereas this is a straightforward generalization of the problem of finding the minimum capacity cut that separates a single pair, there is no exact generalization of the max-flow min-cut theorem to multicuts. In fact, finding the minimum multicut in a general graph has been shown to be NP-hard \cite{garg1997primal}. 

It is however possible to upper bound the minimum multicut by means of a total multi-commodity flow optimization, also known as total multi-commodity flow, up to a factor $g_{\text{t}}(r)$ of order $\mathcal{O}(\log r)$ \cite{garg1996approximate} . A multi-commodity flow is a generalization of a flow to more than one source-sink pair, each exchanging a separate abstract `commodity'. In order to maximize the total multi-commodity flow one introduces separate edge flow variables $f_{e}^{(i)}$ for each commodity $i$ as well as each edge $e$ and maximizes the sum of flows from $s_i$ to $t_i$ over all commodities $i\in\{1,...,r\}$. In the optimization, one requires that for each commodity $i$ the flow is conserved in all edges except at the corresponding source $s_i$ and sink $t_i$, resulting in $r$ separate flow conservation constraints. Thus, it is ensured that for each commodity the net flow leaving the source will reach the corresponding sink. A multi-commodity flow is concurrent if all commodities can be distributed in parallel without exceeding the edge capacities in any edge. In order to ensure this, one adds the constraint that for each undirected edge $\{vw\}$ the sum of flows of all commodities passing through the edge, $\sum_{i=1}^r\left(f_{vw}^{(i)}+f_{wv}^{(i)}\right)$ does not exceed the edge capacity $c'(\{vw\})$. For details on multicuts and multi-commodity flows and the gaps that separate them see the Methods section. Figure \ref{fig:multiunicast} (a) shows an example of a minimum multicut separating all three source-sink pairs pairs. Figure \ref{fig:multiunicast} (b) shows a corresponding concurrent multi-commodity flow. The value of the minimum multicut  is 3.5, which is equal to the sum of the three source-sink flows. So in this simple example there is no gap. 

\begin{figure}
\centering
\includegraphics[width=\textwidth, trim={3cm 20cm 3cm 3cm},clip=true]{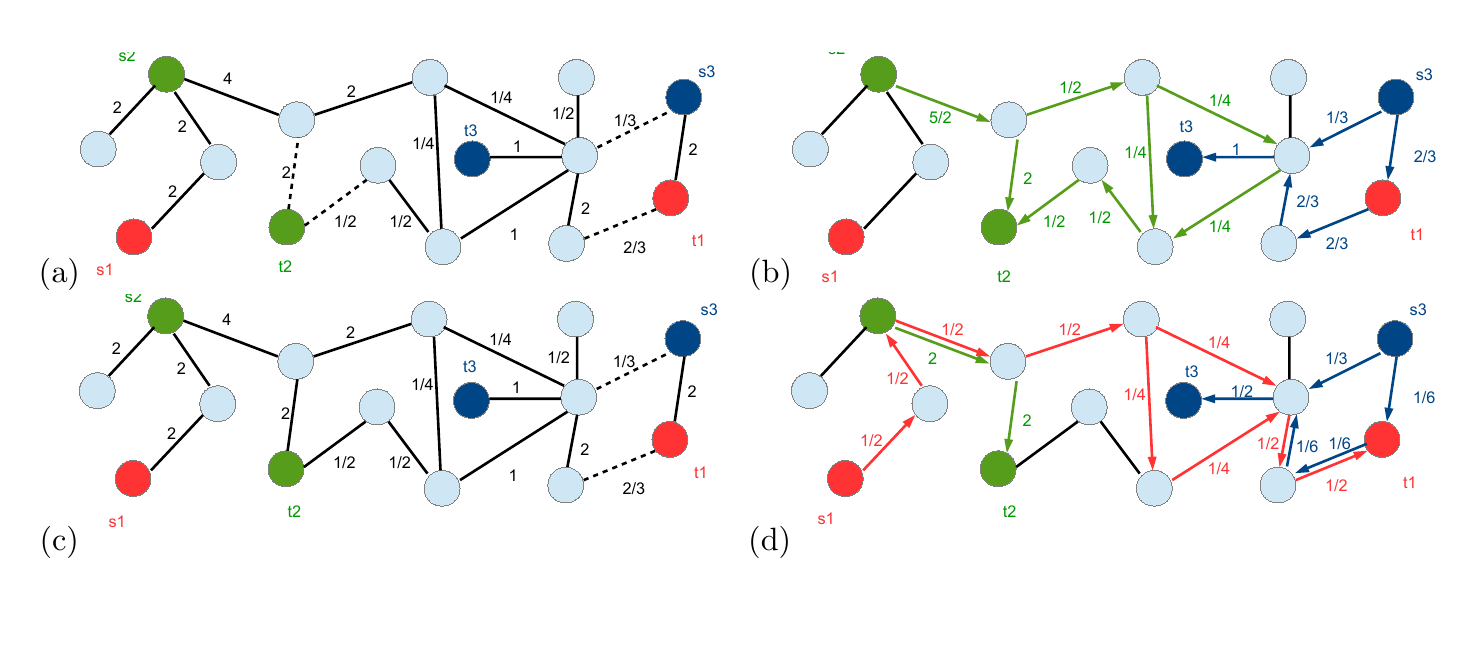}
\caption{Example of a multi-user scenario with source-sink pairs, denoted by the pairs of red, green and blue vertices respectively. (a) A multicut (dashed edges) that separates all three source sink pairs. The capacity of the multicut is equal to $7/2$, which is the minimum value possible in this network. (b) A concurrent multi-commodity flow instance, with values $0$, $5/2$ and $1$ for the red, green and blue pairs, respectively. The total multi-commodity flow is hence equal to minimum multicut capacity, however at the price that there is no flow for the red pair. (c) Same network and same edge capacities as in (a), with an example of a bipartite cut, denoted in dashed lines, with capacity $1$ that separates two source-sink pairs, the red and the blue ones. Hence its cut ratio is given by $1/2$, which is also the minimum cut ratio in this graph. (d) Example of a corresponding multi-commodity flow instance, with concurrent flows of values $1/2$, $2$ and $1/2$ for the red, green and blue source-sink pairs. Hence, the worst-case multi-commodity flow is equal to $1/2$, in this case matching the minimum cut ratio. Whereas the flows sum up to $3$, which is less than the sum of flows in (b), this  multi-commodity flow instance is fairer than the one in (b) as it also provides a flow for the red user pair.
}\label{fig:multiunicast}
\end{figure}

Applying the aggregated repeater protocol \cite{AK16} to multiple user pairs and using the same reasoning as in the bipartite case, we also obtain lower bounds in terms of the maximum concurrent multi-commodity flows, providing us with the following efficiently computable bounds:

\begin{theorem}\label{Theo_Total}
In a network described by a graph $G$ with associated undirected graph $G'$and a scenario of $r$ user pairs $(s_1,t_1),...,(s_r,t_r)$, the total multi-pair quantum and private network capacities per total channel use, ${\cal Q}^{\rm total}(G, \{ {\cal N}^e \}_{e \in E} )$ and ${\cal P}^{\rm total}(G, \{ {\cal N}^e \}_{e \in E} )$, satisfy
\begin{align}
&\bar{f}^{\text{\emph{total}}}_{\max}(G',\{Q^{\leftrightarrow}(\mc{N}^e)\}_{e\in E})\leq{\cal Q}^{\rm total}(G, \{ {\cal N}^e \}_{e \in E} )\leq{\cal P}^{\rm total}(G, \{ {\cal N}^e \}_{e \in E} )\leq g_{\text{\emph{t}}}(r)\bar{f}^{\text{\emph{total}}}_{\max}(G',\{E_{\sq}(\mc{N}^e)\}_{e\in E}),
\end{align}
where $\bar{f}^{\emph{total}}_{\max}$ is given by the polynomial sized linear program Eq. (\ref{LP3bar}) presented in the Methods section and $g_{\rm t}(r)$ is of order $\mathcal{O}(\log r)$ as described in \cite{garg1996approximate}.
\end{theorem}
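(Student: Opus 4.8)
The plan is to establish the three inequalities separately. The middle inequality ${\cal P}^{\rm total}\geq{\cal Q}^{\rm total}$ is immediate: since every maximally entangled target $\Phi^{d_i}$ is a private state $\gamma^{d_i}$ (with trivial shield), any protocol distributing the entangled target is in particular a protocol distributing a private target, so the private capacity dominates the quantum one, exactly as in the bipartite case of Theorem \ref{Theo_bip} and \cite{horodecki2005secure}.

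For the lower bound I would generalise the aggregated repeater protocol of \cite{AK16} from a single pair to $r$ pairs. First, run each channel $\mc{N}^e$ a total of $n_e=np_e$ times to distribute Bell pairs at a rate approaching $Q^\leftrightarrow(\mc{N}^e)$; in the limit $n\to\infty$ this produces a Bell multigraph in which the number of parallel edges across $\{vw\}$ is asymptotically $n\,c'(\{vw\})$ with $c'(\{vw\})=p_{vw}Q^\leftrightarrow(\mc{N}^{vw})+p_{wv}Q^\leftrightarrow(\mc{N}^{wv})$. Second, instead of edge-disjoint $s$-$t$ paths, route the $r$ commodities along an \emph{integer} multi-commodity flow and fuse them by entanglement swapping, so that the number of end-to-end states for pair $i$ equals the integer flow of commodity $i$ and the total number of distributed targets equals the total integer flow. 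The subtlety here --- absent in the bipartite case --- is that the integer multi-commodity flow can be strictly smaller than the fractional optimum $\bar{f}^{\text{total}}_{\max}$. I would resolve this asymptotically: take a rational optimal fractional flow, clear denominators to an integer flow feasible in a scaled graph, and observe that because the Bell-multigraph capacities $n\,c'(e)$ grow linearly in $n$, the rounding loss is $\mathcal{O}(|E|)$ and hence negligible against the total value $\approx n\,\bar{f}^{\text{total}}_{\max}$. Dividing by $n$ and maximising over $\{p_e\}$ then yields the rate $\bar{f}^{\text{total}}_{\max}(G',\{Q^\leftrightarrow(\mc{N}^e)\}_{e\in E})$.

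For the upper bound I would compose two imported results with the frequency optimisation. Fix $\{p_e\}$ and assign $c(e)=p_eE_{\sq}(\mc{N}^e)$. By \cite{bauml2017fundamental}, using that $E_{\sq}$ is continuous near the target and non-increasing under amortisation (properties P1--P2 of \cite{rigovacca2018versatile}), the total private rate at frequencies $\{p_e\}$ is bounded above by the minimum-capacity multicut separating all $r$ pairs, since any such multicut upper bounds the joint key extractable across it. I would then invoke the approximate max-flow min-multicut theorem of \cite{garg1996approximate}: the maximum total multi-commodity flow $f^{\text{total}}_{\max}$ always lower-bounds the minimum multicut, while the minimum multicut never exceeds $g_{\rm t}(r)\,f^{\text{total}}_{\max}$ with $g_{\rm t}(r)=\mathcal{O}(\log r)$. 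Chaining these gives ${\cal P}^{\rm total}_{\{p_e\}}\leq g_{\rm t}(r)\,f^{\text{total}}_{\max}(G',\{E_{\sq}(\mc{N}^e)\}_{e\in E})$ at fixed frequencies; maximising both sides over $\{p_e\}$ with $\sum_ep_e=1$ and recognising the right-hand maximisation as the LP Eq.~(\ref{LP3bar}) produces the claimed bound $g_{\rm t}(r)\,\bar{f}^{\text{total}}_{\max}(G',\{E_{\sq}(\mc{N}^e)\}_{e\in E})$.

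The main obstacle is the upper bound. Unlike the bipartite setting, where max-flow min-cut is an exact equality, the multi-commodity case admits only the one-sided exact relation (flow $\leq$ multicut) together with the lossy $\mathcal{O}(\log r)$ approximate converse of \cite{garg1996approximate}; this is precisely what forces the gap factor $g_{\rm t}(r)$ into the statement and why the lower and upper bounds no longer coincide. A second delicate point, on the achievability side, is that the integer multi-commodity-flow polytope is not integral, so the clean path-decomposition used in the bipartite proof must be replaced by the asymptotic scaling argument above; care is needed to verify that the frequency optimisation and the $n\to\infty$ limit commute, so that the per-channel-use LP value is genuinely attained.
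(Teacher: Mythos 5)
Your proposal is correct and follows essentially the same route as the paper: the lower bound via the multi-pair aggregated repeater protocol with an asymptotic rounding of the fractional multi-commodity flow to an integer one (the paper's Lemma \ref{TpIntFlow} implements exactly your ``clear denominators in a scaled multigraph'' idea via path decompositions and the two-parameter limit $m,k\to\infty$), and the upper bound by chaining the squashed-entanglement multicut bound from \cite{bauml2017fundamental} with the approximate max-flow min-multicut theorem of \cite{garg1996approximate} before maximizing over usage frequencies. The only cosmetic difference is that the paper fixes $\{p_e\}$ throughout and appends the frequency maximization at the very end, which sidesteps the limit-interchange concern you raise.
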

For the proof see Supplementary Note 3. 


Let us now move on to scenario (2). Let us, again, describe the network by a capacitated graph with edge capacities $c(e)=p_e\mathcal{E}(\mathcal{N}^e)$, where $\mathcal{E}(\mathcal{N}^e)$ can be chosen to be the squashed entanglement $E_{\sq}$, the max-relative entropy of entanglement $E_{\max}$ and, for Choi-stretchable channels, the relative entropy of entanglement $E_R$ of the channel. Using the results of \cite{bauml2017fundamental,rigovacca2018versatile}, it is possible to show that the worst-case multi-pair private network capacity is upper bounded by the so-called minimum cut ratio with unit demands of the capacitated graph. Given a (bipartite) cut, which separates the set of vertices into two subsets, the cut ratio is defined as its capacity of the cut, i.e. the sum over edge capacities of the edges, divided by the demand across the cut, in this case the number of pairs separated by the cut. The minimum cut ratio is obtained by a minimization over all bipartite cuts. See Figure \ref{fig:multiunicast} (c) for an example of a minimum cut ratio. As for the minimum multicut discussed above, the computation of the minimum cut ratio is an NP hard problem in general graphs \cite{aumann1998log}. 

Whereas, as in the case of multicuts, there is no exact version of the max-flow min-cut theorem for the minimum cut ratio, there is again a connection to concurrent multi-commodity flows up to a factor $g_{\text{w}}(r)$, which can be of order up to $\mathcal{O}(\log r)$ \cite{aumann1998log}. Namely, it has been shown that the minimum cut ratio is upper bounded by $g_{\text{w}}(r)$ times what we call the maximum worst-case multi-commodity flow, also known as maximum concurrent multi-commodity flow, which corresponds to the maximum flow that can be achieved by any of the commodities concurrently, with respect to the same edge capacity and flow conservation constraints as in the case of the total multi-commodity flow, discussed previously. Figure \ref{fig:multiunicast} (d) contains an example of a maximum worst-case multi-commodity flow that achieves the cut ratio in figure \ref{fig:multiunicast} (c). Note that this flow is different from the one achieving the minimum multicut in figure \ref{fig:multiunicast} (b). In particular it is `fairer' in the sense that it also provides a flow for the red user pair $(s_1,t_1)$. See the Methods section for a detailed definition of the minimum cut ratio, the worst-case multi-commodity flow and the gap that separates them.

As in the previous scenarios, we can obtain a lower bound by application of the aggregated repeater protocol \cite{AK16} to multiple user pairs and include an optimization over usage frequencies, resulting in the following result:

\begin{theorem}\label{Theo_WC}
In a network described by a graph $G$ with associated undirected graph $G'$ and a scenario of $r$ user pairs $(s_1,t_1),...,(s_r,t_r)$, the worst-case multi-pair quantum and private network capacities per total channel use, ${\cal Q}^{\rm worst}(G, \{ {\cal N}^e \}_{e \in E} )$ and ${\cal P}^{\rm worst}(G, \{ {\cal N}^e \}_{e \in E} )$, satisfy
\begin{align}
&\bar{f}^{\text{\emph{worst}}}_{\max}(G',\{Q^{\leftrightarrow}(\mc{N}^e)\}_{e\in E})\leq{\cal Q}^{\rm worst}(G, \{ {\cal N}^e \}_{e \in E} )\leq{\cal P}^{\rm worst}(G, \{ {\cal N}^e \}_{e \in E} )\leq g_{\text{\emph{w}}}(r)\bar{f}^{\text{\emph{worst}}}_{\max}(G',\{{\cal E}(\mc{N}^e)\}_{e\in E}),
\end{align}
where $\bar{f}^{\text{\emph{worst}}}_{\max}$ is given by the polynomially sized linear program Eq. (\ref{LP4bar}) presented in the Methods section.
Further $g_{\text{\emph{w}}}(r)$  is the flow-cut gaps described in the Methods section. $\cal E$ can be chosen to be the squashed entanglement $E_{\sq}$, the max-relative entropy of entanglement $E_{\max}$ and, for a Choi-stretchable channels, the relative entropy of entanglement $E_R$. 
\end{theorem}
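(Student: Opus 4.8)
The plan is to establish the claimed sandwich by proving the three inequalities separately, mirroring the structure already used for the bipartite Theorem~\ref{Theo_bip} and the total-rate Theorem~\ref{Theo_Total}. The central inequality $\mathcal{Q}^{\rm worst}\le\mathcal{P}^{\rm worst}$ is immediate: every target $\bigotimes_i\Phi^{d_i}_{M_{s_i}M_{t_i}}$ is a special case of a product of private states $\bigotimes_i\gamma^{d_i}_{K_{s_i}K_{t_i}S_{s_i}S_{t_i}}$, so any adaptive $(n,\epsilon,\{p_e\})$ protocol distributing maximally entangled states is in particular a protocol distributing private states at the same worst-case rate; taking the appropriate limits gives the inequality. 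It therefore remains to prove the outer two bounds.

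For the upper bound I would proceed in two stages. First, fix usage frequencies $\{p_e\}$ and the associated capacitated graph with $c(e)=p_e\mathcal{E}(\mathcal{N}^e)$, and invoke the cut-set argument of \cite{bauml2017fundamental,rigovacca2018versatile}: for any bipartite cut $(S,\bar S)$ the properties P1 (continuity near the target) and P2 (no increase under amortisation) of $\mathcal{E}$ imply that the total number of private bits transmittable across the cut by any adaptive protocol is bounded by the sum of edge capacities crossing it. Since in the worst-case figure of merit every user pair separated by the cut must individually achieve at least the worst-case rate, dividing by the demand (the number of separated pairs) and minimising over all cuts yields that the per-time capacity is at most the minimum cut ratio of the capacitated graph. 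Second, I would apply the approximate flow-cut duality for concurrent multi-commodity flows with unit demands \cite{aumann1998log}: the minimum cut ratio is at most $g_{\rm w}(r)$ times the maximum worst-case concurrent multi-commodity flow of the same capacitated graph. Composing the two stages and then maximising over $\{p_e\}$ with $\sum_e p_e=1$—which is legitimate since $g_{\rm w}(r)$ is independent of $\{p_e\}$ and $\bar{f}^{\text{worst}}_{\max}$ has this maximisation built into the LP Eq.~(\ref{LP4bar})—produces the right-hand inequality for $\mathcal{E}\in\{E_{\sq},E_{\max},E_R\}$ exactly as in the bipartite case.

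For the lower bound I would exhibit a protocol, namely the aggregated repeater protocol of \cite{AK16} generalised to $r$ pairs, whose worst-case rate matches $\bar{f}^{\text{worst}}_{\max}(G',\{Q^\leftrightarrow(\mathcal{N}^e)\}_{e\in E})$. Step one uses each channel $\mathcal{N}^e$ a fraction $p_e$ of the time to distil Bell pairs at a rate approaching $p_e Q^\leftrightarrow(\mathcal{N}^e)$, producing a Bell multigraph in which edge $\{vw\}$ carries asymptotically $n\,c'(\{vw\})$ Bell pairs. Step two reads an optimal concurrent multi-commodity flow as a routing prescription: after rescaling the rational optimal flow by a common integer so that all edge flows become integers not exceeding the available Bell pairs, I would decompose each commodity's integral flow into source-sink paths and connect the Bell pairs along each such path by entanglement swapping. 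The concurrency constraint $\sum_i(f^{(i)}_{vw}+f^{(i)}_{wv})\le c'(\{vw\})$ guarantees these paths are collectively edge-disjoint in the Bell multigraph, so every swapping chain succeeds and each pair $(s_i,t_i)$ receives a number of Bell pairs proportional to its flow value. Because the worst-case flow maximises the minimum flow over all commodities, the worst-case distribution rate equals $\bar{f}^{\text{worst}}_{\max}$ in the limit $n\to\infty$, $\epsilon\to0$; absorbing the optimisation over $\{p_e\}$ gives the left-hand inequality.

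I expect the main obstacle to be the first stage of the upper bound—turning the cut-set/amortisation estimate into a clean statement about the \emph{worst-case} rate across an arbitrary bipartite cut—rather than the flow-cut gap, which can be quoted directly from \cite{aumann1998log}. The delicate point is that P1 and P2 must be applied to a general adaptive LOCC-and-channel protocol while correctly accounting for the fact that a single cut may separate several pairs simultaneously, so that the relevant denominator is the cut demand rather than one; handling the averaging over LOCC outcomes $\langle\cdots\rangle_k$ and the $\epsilon\to0$ continuity in this multi-pair setting is where the care is needed. The integrality gap in the lower bound is comparatively routine, being resolved by the rational-rescaling argument already spelled out in Section~\ref{sec:bip}.
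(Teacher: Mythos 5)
Your proposal is correct and follows essentially the same route as the paper: the upper bound via the cut-set/amortisation argument of \cite{bauml2017fundamental,rigovacca2018versatile} (the paper formalises your ``total private bits across a cut'' step with a superuser grouping of the separated pairs), the minimum cut ratio, the flow-cut gap of \cite{aumann1998log}, and the lower bound via the aggregated repeater protocol with rational rescaling, path decomposition and entanglement swapping. The delicate point you flag is exactly the one the paper addresses in deriving its Eq.~(19), so no gap remains.
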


For the proof, see Supplementary Note 3. As a proof of principle demonstration, we have numerically computed the worst-case and total multi-commodity flows for an example network. See Supplementary Note 5 for details and plots.

\subsection{Multipartite target states}\label{sec:multi}

In this section we present our results on the distribution of multipartite entanglement. Let us consider a set of disjoint users $S=\{s_1,...,s_l\}$, who wish to establish a multipartite target state, such as a GHZ state \cite{greenberger1989going} $\ket{\Phi^{\text{GHZ},d}}_{M_{s_1}...M_{s_l}}=\frac{1}{\sqrt{d}}\sum_{i=0}^{d-1}\ket{i}_{M_{s_1}}\otimes\cdots\otimes\ket{i}_{M_{s_l}}$ or a multipartite private state \cite{augusiak2009multipartite}, $\gamma^d_{K_{s_1}S_{s_1}...K_{s_l}S_{s_l}}=U^{\text{twist}}|\Phi^{\text{GHZ},d}\>\<\Phi^{\text{GHZ},d}|_{K_{s_1}...K_{s_l}}\otimes\sigma_{S_{s_1}...S_{s_l}}U^{\text{twist}\dagger}$, where $\sigma_{S_{s_1}...S_{s_l}}$ is an arbitrary state and $U^{\text{twist}}=\sum_{i_1,\cdots,i_l}|i_1,\cdots,i_l\>\<i_1,\cdots,i_l|_{K_{s_1}...K_{s_l}}\otimes U^{(i_1,\cdots,i_l)}_{S_{s_1}...S_{s_l}}$ is a controlled unitary operation. The corresponding multipartite quantum and private network capacities $\mathcal{Q}^S$ and $\mathcal{P}^S$ are defined analogously to the bipartite case.

As a consequence of \cite{bauml2017fundamental}, the private capacity is upper bounded by the connectivity of the set $S$ of user vertices in the graph capacitated by $p_eE_{sq}(\mathcal{N}^e)$. By connectivity of the set $S$ we mean the minimum edge capacity cut that separates any two vertices $s_i$ and $s_j$ ($i\neq j$) in $S$. Such a cut is also known as minimum $S$-cut or minimum Steiner cut with respect to $S$.  See figure \ref{fig:multicast} (a) for an example of a minimum Steiner cut with respect to the set of the  red, green, blue and yellow nodes. The computation of the connectivity consists of a minimization of all possible disjoint vertex pairs within $S$ as well as a cut minimization. See Eq. (\ref{eq:connectivity}) in the Methods section. Applying the max-flow min cut theorem for every possible pair $s_i$ and $s_j$ ($i\neq j$) in $S$, we can transform the computation of the connectivity of $S$ into another linear program that upper bounds the multipartite network private capacity. See figure \ref{fig:multicast} (b) for an example.

\begin{figure}

\centering
\includegraphics[width=\textwidth, trim={3cm 20cm 3cm 3cm},clip=true]{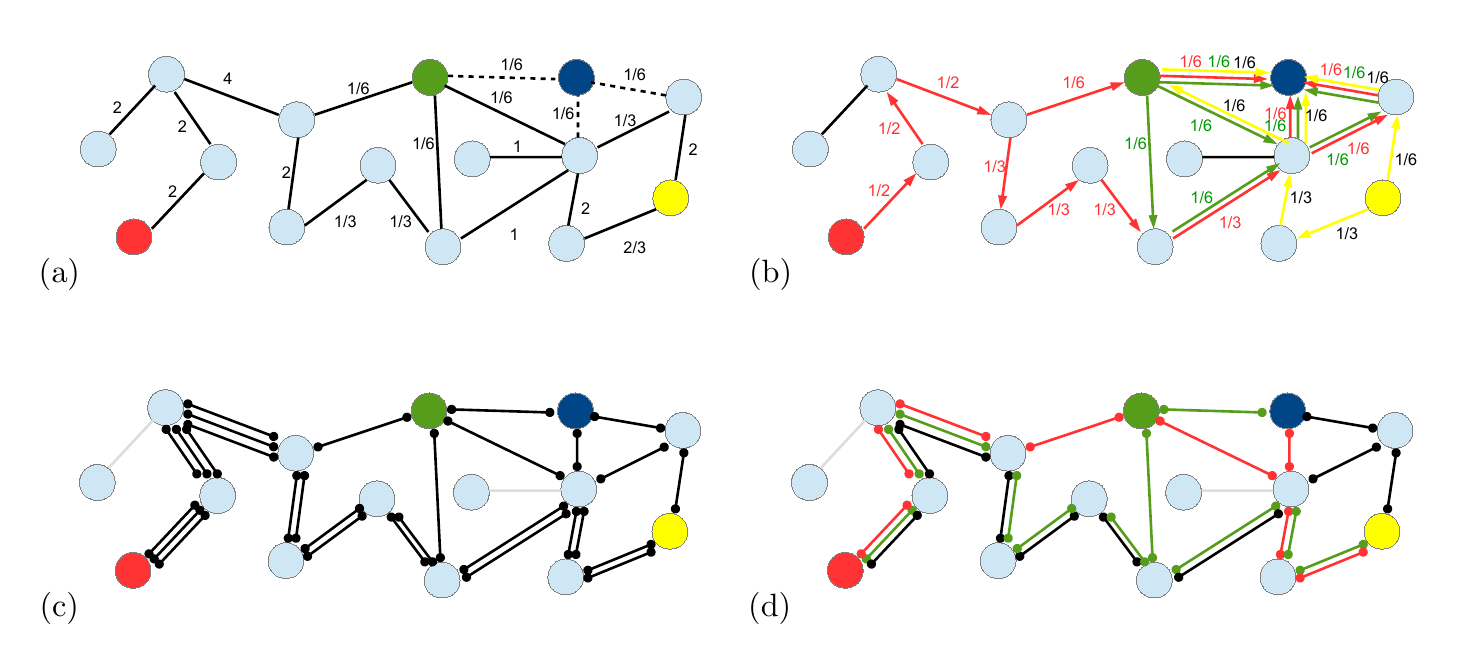}


\caption{Example of a setting where a group of four user nodes (red, green, blue, yellow) wishes to establish a Greenberger-Horne-Zeilinger (GHZ)  state. (a) The graph $G'$ with labeled edge capacities. The dashed edges correspond to a minimum Steiner cut with respect to the set of the four users, i.e. it is a smallest capacity cut that separates at least one pair of vertices in the set. In this case it separates the red-blue, green-blue and yellow-blue pairs and has capacity $1/2$. In other words, the set of users is $1/2$-connected. (b) Part of a flow instance corresponding to linear program (LP) given by Eq. (\ref{LP5})-Eq. (\ref{LP5d}). Here the flows from the red, green and yellow vertices to the blue vertex are shown in red, green and yellow, respectively. For simplicity, flows between other nodes are not shown in this picture. The directed edge-flows correspond to the variables $f^{(ij)}_{vw}$ of the LP given by Eq. (\ref{LP5})-Eq. (\ref{LP5d}). By providing flows of value of at least $1/2$ between all pairs in the set of users, the LP shows that the set of users is $1/2$-connected. (c) Aggregated Repeater Protocol: Assuming that the edges in (a) correspond to quantum channels (of some direction) and their capacities to non-asymptotic quantum capacities, one could, by using each channel (at most) $6$ times, create a network of Bell states that is described by a $3$-connected undirected multigraph. (d) Steiner Trees: In our example the multigraph contains two edge-disjoint Steiner trees, depicted in red and green. The Bell pairs forming the Steiner trees can then be connected by means of a generalized entangled swapping protocol to form $2$ qubit GHZ states among the four users.
}\label{fig:multicast}
\end{figure}

Finding a lower bound on the multipartite network quantum capacity, i.e. the maximum rate at which we can distribute a GHZ state among $S$, is slightly more involved than in the previously considered scenarios. As we did in all previous scenarios, we begin by performing an aggregated repeater protocol to create a network of Bell states that can be described by an undirected multigraph. See figure \ref{fig:multicast} (c) for an example. Whereas it is possible to create a GHZ state locally in one of the nodes and use chains of Bell pairs to teleport the respective subsystems of the GHZ state to all other nodes in $S$, it is easy to find a network where this is not the optimal strategy. Instead, the idea is to generalize the concept of paths linking two nodes to Steiner trees spanning the set $S$ of users. In an undirected multigraph, a Steiner tree spanning $S$, or short $S$-tree, is an acyclic subgraph that connects all nodes in $S$. See figure \ref{fig:multicast} (d) for an example of two edge disjoint Steiner trees spanning the set of the red, green, blue and yellow nodes. See the Methods section for more information on Steiner trees. 

A Steiner tree spanning $S$ in the network of (qubit) Bell states can be transformed into a (qubit) GHZ  state among all nodes in $S$ by means of a protocol introduced in \cite{wallnofer20162d}, which can be seen as a generalization of entanglement swapping. Hence, the number of (qubit) GHZ states obtainable from the Bell state network is equal to the number of edge disjoint Steiner trees spanning $S$. Computing this number is a referred to as a Steiner tree packing, which is anther NP-complete problem \cite{cheriyan2006hardness}. However, the number of edge-disjoint Steiner trees in a multigraph can be lower bounded by its $S$-connectivity up to constant factor $1/2$ and an additive constant \cite{kriesell2003edge,lau2004approximate,petingi2009packing}. Combining this with the max-flow min cut theorem, this allows us to derive a linear-program lower bound on the multipartite quantum network capacity. Hence we obtain the following:

\begin{theorem}\label{Theo_multi}
In a network described by a graph $G$ with associated undirected graph $G'$ and a scenario of a set $S=\{s_1,...,s_r\}$ of users, the quantum and private network capacities per total channel use, $\mathcal{Q}^{S}\left(G,\{\mc{N}^e\}_{e\in E}\right)$ and $\mathcal{P}^{S}\left(G,\{\mc{N}^e\}_{e\in E}\right)$, satisfy
\begin{equation}
\frac{1}{2}\bar{f}^{S}_{\max}(G',\{Q^{\leftrightarrow}(\mc{N}^e)\}_{e\in E})\leq\mathcal{Q}^{S}\left(G,\{\mc{N}^e\}_{e\in E}\right)\leq\mathcal{P}^{S}\left(G,\{\mc{N}^e\}_{e\in E}\right)\leq\bar{f}^{S}_{\max}(G',\{E_{\sq}(\mc{N}^e)\}_{e\in E}),
\end{equation}
where $\bar{f}^{S}_{\max}$ is given by the polynomially sized linear program Eq. (\ref{LP5bar}) presented in the Methods section.
\end{theorem}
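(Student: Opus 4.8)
The plan is to establish the chain of three inequalities separately, reserving the real work for the outer two. The middle inequality $\mathcal{Q}^{S}\le\mathcal{P}^{S}$ is immediate: the GHZ state $\Phi^{\text{GHZ},d}$ is the special case of the multipartite private state $\gamma^{d}$ in which the shield systems $S_{s_i}$ are trivial, so any protocol producing an $\epsilon$-approximate GHZ state is in particular a protocol producing an $\epsilon$-approximate private state with the same $\log d$; hence the supremum defining $\mathcal{Q}^{S}$ cannot exceed that defining $\mathcal{P}^{S}$.

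For the upper bound $\mathcal{P}^{S}\le\bar{f}^{S}_{\max}(G',\{E_{\sq}(\mc{N}^e)\})$ I would first invoke \cite{bauml2017fundamental}: for any fixed frequency vector $\{p_e\}$, the private capacity per time is bounded above by the $S$-connectivity of the graph $G'$ capacitated by $c'(\{vw\})=p_eE_{\sq}(\mc{N}^e)$, that is by the minimum over all pairs $s_i\neq s_j$ in $S$ of the minimum edge-capacity cut separating them (Eq.~(\ref{eq:connectivity})). Since $E_{\sq}$ is continuous near the target and non-increasing under amortization (properties P1--P2 of \cite{rigovacca2018versatile}), this bound survives the $\epsilon\to0$ limit. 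The next step is to apply the max-flow min-cut theorem pairwise: the min-cut separating $s_i$ and $s_j$ equals the maximum $s_i\to s_j$ flow, so introducing an independent flow field $f^{(ij)}_{vw}$ for each pair and maximizing a common lower bound on all pairwise flow values (subject to per-pair flow-conservation and edge-capacity constraints $f^{(ij)}_{vw}+f^{(ij)}_{wv}\le c'(\{vw\})$) computes the $S$-connectivity as a linear program. A standard root reduction shows it suffices to fix one user and route flows from the remaining users to it, as in Fig.~\ref{fig:multicast}(b). Adjoining the simplex constraint $\sum_e p_e=1$ and maximizing over $\{p_e\}$ turns this into the program $\bar{f}^{S}_{\max}$ of Eq.~(\ref{LP5bar}), yielding the stated upper bound on the capacity per total channel use.

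The lower bound $\frac{1}{2}\bar{f}^{S}_{\max}(G',\{Q^{\leftrightarrow}(\mc{N}^e)\})\le\mathcal{Q}^{S}$ is the core of the argument and proceeds by an explicit protocol. Running the aggregated repeater protocol of \cite{AK16} for $n$ total channel uses with frequencies $\{p_e\}$ distills on each edge $e$ roughly $n\,p_e\,Q^{\leftrightarrow}(\mc{N}^e)$ Bell pairs, yielding an undirected multigraph $G_{\text{Bell}}$ in which edge $e$ is replaced by that many parallel edges. I would then use the generalized entanglement-swapping protocol of \cite{wallnofer20162d} to collapse any Steiner tree spanning $S$ in $G_{\text{Bell}}$ into a single qubit GHZ state shared by the users, so that the number of distributable GHZ states is at least the maximum number of edge-disjoint $S$-trees. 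Although this Steiner-tree packing number is NP-hard to compute \cite{cheriyan2006hardness}, the packing bounds of \cite{kriesell2003edge,lau2004approximate,petingi2009packing} guarantee that it is at least $\frac{1}{2}$ times the $S$-connectivity of $G_{\text{Bell}}$ minus an additive constant. Approximating the real frequencies by rationals and rounding the $n\,p_e\,Q^{\leftrightarrow}(\mc{N}^e)$ to integers (an $O(|E|)$ error), the integer $S$-connectivity of $G_{\text{Bell}}$ approaches $n\,\bar{f}^{S}_{\max}(G',\{Q^{\leftrightarrow}(\mc{N}^e)\})$; dividing the GHZ count by $n$, letting $n\to\infty$ so the additive constant becomes negligible, and optimizing over $\{p_e\}$ delivers the rate with its factor $\frac{1}{2}$.

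The main obstacle is precisely this last transition. In contrast to the bipartite case, where max-flow min-cut makes routing exactly tight, the passage from the fractional $S$-connectivity computed by $\bar{f}^{S}_{\max}$ to an integral packing of edge-disjoint Steiner trees is only approximate, and controlling both the factor $\frac{1}{2}$ and the additive constant of \cite{kriesell2003edge} in the asymptotic limit is the delicate step. I would need to verify that the additive constant is genuinely $O(1)$ while the connectivity of $G_{\text{Bell}}$ grows as $\Theta(n)$, so that only the factor $\frac{1}{2}$ persists in the rate. It is exactly this unavoidable gap between Steiner-tree packing and connectivity that leaves the lower and upper bounds a factor of two apart in general, unlike the tight bounds of Theorem~\ref{Theo_bip} for distillable channels.
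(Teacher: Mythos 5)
Your proposal follows essentially the same route as the paper: the upper bound via the Steiner-cut ($S$-connectivity) bound of \cite{bauml2017fundamental} converted into a linear program by pairwise application of the max-flow min-cut theorem, and the lower bound via the aggregated repeater protocol, Steiner-tree packing with the bounds of \cite{kriesell2003edge,petingi2009packing}, and the generalized entanglement swapping of \cite{wallnofer20162d}, with the additive packing constant $g_2=\frac{|V\setminus S|}{2}+1$ indeed finite so that only the factor $\frac{1}{2}$ survives in the asymptotic rate. The only deviation is your ``root reduction'' remark: the paper's LP given by Eq. (\ref{LP5})--Eq. (\ref{LP5d}) keeps flow variables for all ${{|S|}\choose{2}}$ pairs rather than routing to a fixed user, but since any minimum Steiner cut separates some pair containing any fixed $s_1$, the two formulations compute the same value and the difference is harmless.
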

For the proof see Supplementary Note 4.

\section{Discussion}\label{sec:concl}

We have provided linear-program upper and lower bounds on the entanglement and key generation capacities in quantum networks for various user scenarios. We have done so by reducing the corresponding network routing problems to flow optimizations, which can be written as linear programs. The user scenarios we have considered are the distribution of Bell or private states between a single pair of users, the parallel distribution of such states between multi-pairs of users  and the distribution of GHZ or multipartite private states among a group of multiple users. The size of the linear programs scales polynomially in the parameters of the networks, and hence the LPs can be computed in polynomial time. In order to perform the LPs, upper and lower bounds on the two-way assisted private or quantum capacities of all the channels constituting the network have to be provided as input parameters. Thus the problem of bounding capacities for the entire network is reduced to bounding capacities of single channels, as well as performing an LP which scales polynomially in the network parameters.

For a large class of practical channels, including erasure channels, dephasing channels, bosonic quantum amplifier channels and lossy optical channels, tight bounds can be obtained in the bipartite case. In the multi-pair case, however, there still remains a gap of order up to $\log r^*$ between the upper and lower bounds. This gap, also known as flow-cut gap, is due to the lack of an exact max-flow min-cut theorem for multi-commodity flows. From a complexity theory standpoint, the flow-cut gap separates the NP-hard problem of determining the minimum cut ratio from the problem of finding the maximum concurrent multi-commodity flow, which can be done in polynomial time \cite{aumann1998log}. From a network theoretic view the gap also leaves room for a possible advantage of network coding over network routing in undirected networks, which is still an open problem \cite{li2004network,harvey2004comparing}. Another gap, of value $1/2$, occurs between our upper and lower bounds in the multi-pair case. As in the multiple-pair case, this gap is significant in terms of computational complexity, as it separates our polynomial LP from the problem of Steiner tree packing, which is NP-complete \cite{cheriyan2006hardness}.

While our linear programs cover an important set of user scenarios and tasks, we believe that our recipe will find broader use. In the bipartite case, we could assign costs to the links and consider the problem of minimizing the total cost for a given set of user demands \cite{ford2015flows}. In the multipartite case, we could apply it to the distribution of multipartite entanglement between multiple groups of users, for which one could leverage results connecting the minimum ratio Steiner cut problem and the Steiner multicut problem with concurrent Steiner flows \cite{klein1997approximation}. As another example, beyond network capacities, many algorithms for graph clustering and community detection in complex networks rely on the sparsest cut of graph \cite{kannan2004clusterings,schaeffer2007graph}. This quantity is bounded from below by the uniform multi-commodity flow problem, which is an instance of our multi-pair entanglement distribution maximizing the worst-case multi-commodity flow, and from above by the same quantity multiplied by a value that scales logarithmically with the number of nodes in the network. Hence, the direct solution of this instance could be used to solve the analogous problem in complex networks where the links are evaluated for their capability to transmit quantum information or private classical information. Although we have focused on a linear program to bound capacities, rather than actual rates in practical scenarios with other imperfections, such as storage limitation or overheads, 
we believe that our program could be the basis to develop an algorithm to treat such practical scenarios as well.

\section{Methods}

\subsection{Bipartite user scenario}
In this section we will explicitly define all quantities that occur in our result for bipartite user scenarios, Theorem \ref{Theo_bip}, and briefly review the main ingredient in its proof, the max-flow min-cut theorem. Let us begin with the definition of the capacities: The quantum and private network capacities per total channel use that occur in Theorem \ref{Theo_bip} are defined as
\begin{align}
&\mathcal{Q}\left(G,\{\mc{N}^e\}_{e\in E}\right)=\max_{p_e\geq0,\sum_ep_e=1}\lim_{\epsilon\to0}\lim_{n\to\infty}\sup_{\Lambda}\left\{\frac{\langle\log d^{(k)}\rangle_{k}}{n}:\left\|\rho_{M_sM_t}^{(n,k)}-{\Phi^{d^{(k)}}}_{M_sM_t}\right\|_1\leq\epsilon\right\},\label{cap2}\\
&\mathcal{P}\left(G,\{\mc{N}^e\}_{e\in E}\right)=\max_{p_e\geq0,\sum_ep_e=1}\lim_{\epsilon\to0}\lim_{n\to\infty}\sup_{\Lambda}\left\{\frac{\langle\log d^{(k)}\rangle_{k}}{n}:\left\|\rho_{K_sK_tS_sS_t}^{(n,k)}-\gamma^{d^{(k)}}_{K_sK_tS_sS_t}\right\|_1\leq\epsilon\right\},\label{cap1}
\end{align}
where the suprema are over all adaptive $(n,\epsilon, \{p_e\}_{e\in E})$ protocols $\Lambda$. Further $k=(k_1,  \ldots  ,k_{n+1})$ is a vector keeping the track of outcomes of the $n+1$ LOCC rounds in $\Lambda$, the averaging, denoted by the parenthesis $\langle...\rangle_{k}$, is over all those outcomes and $\rho_{M_sM_t}^{(n,k)}$ is the final state of $\Lambda$ for given outcomes $k$.

Let us discuss the difference between the above quantities and network capacities introduced in \cite{pirandola2016capacities,pirandola2019capacities}, which consider rates per network use. There are two strategies considered in \cite{pirandola2016capacities,pirandola2019capacities}, sequential (or single-path) routing and multi-path routing. Both strategies are adaptive in the same sense as defined above, i.e. the channel uses are interleaved by LOCC operations among all nodes, the number of LOCC rounds being equal to the total number of channel uses. 

In the case of sequential (or single-path) routing, one use of the network involves usage of channels along a single path from Alice to Bob. The path, and its length, can change with every use of the network. This strategy could correspond to the external provider offering a path for the users (similar to the paradigm of circuit switching networks \cite{leon2003communication}) instead of allowing the users to precisely determine the usage frequencies of each channel. 

In the case of multi-path routing, a flooding strategy is applied, where during each use of the network each channel is used exactly once. Hence the total number of channel uses is given by $|E|$ times the number of network uses. As shown in \cite{pirandola2016capacities,pirandola2019capacities}, there are examples of networks, such as the so-called diamond network, for which such a strategy provides an advantage over single-path routing. The multi-path scenario could correspond to a private quantum network where the users are willing to use the whole of their resources each clock cycle to implement the desired communication task. 

In the present paper, an alternative approach is taken. Instead of considering rates per use of the network, we consider rates per the total number of channel uses. By setting our usage frequencies  $p_e$ constant for all nodes $e\in E$ in the network, we can incorporate the flooding strategy used in the multi-path routing scenario of \cite{pirandola2016capacities,pirandola2019capacities}. Hence, although phrased with the channel use metric, our results also can be used for the network use metric generalizing the original results in \cite{pirandola2016capacities,pirandola2019capacities} to multipartite settings. There is, however, no direct relation between our capacities and the single path capacities. In fact they can differ by a factor $\mc{O}(|E|)$, which is the order of the number of vertices in the network, as shown in figure \ref{fig:SinglePath}. 
\begin{figure}

\centering
\includegraphics[width=0.7\textwidth, trim={1cm 10cm 2cm 2cm},clip=true]{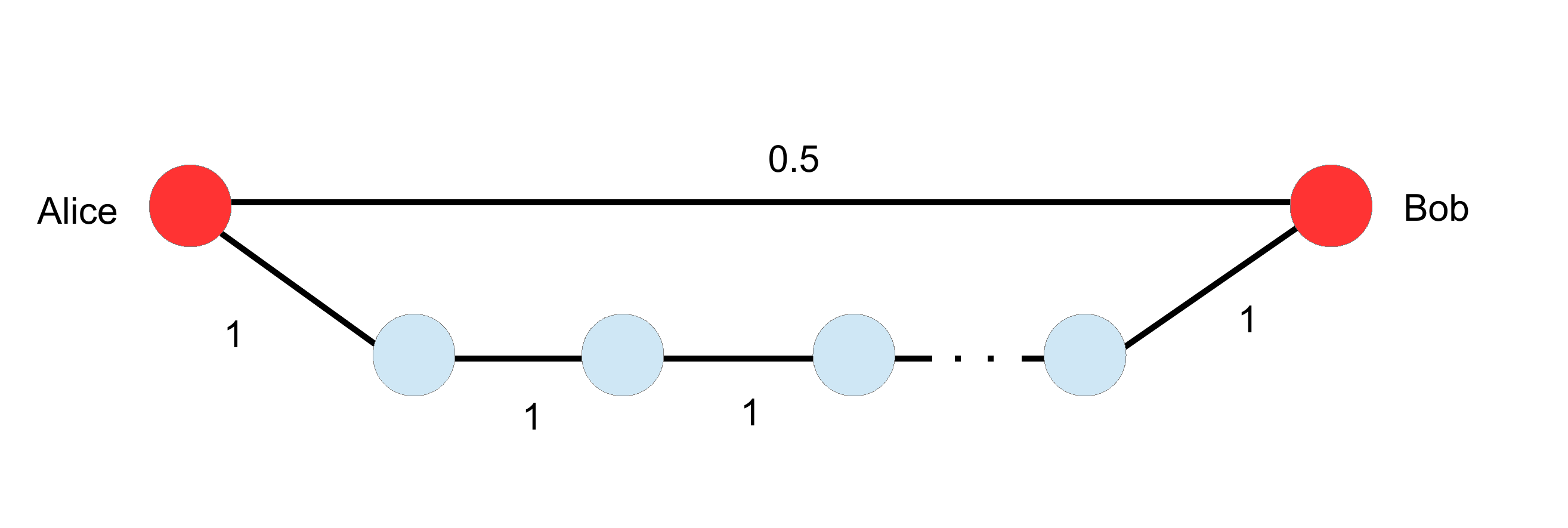}     

\caption{Per-network-use versus per-channel-use capacities: Simple example of a network, where our capacity Eq. (\ref{cap2}) can differ arbitrarily from the single-path capacity introduced in \cite{pirandola2016capacities,pirandola2019capacities}. The numbers refer to capacities of the single channel. When the goal is to maximize the transmission per total number of channel uses, the upper route is preferable. It can achieve a transmission of $0.5$ using a single channel, i.e. a rate per channel use of $0.5$. The lower route can achieve a transmission of $1$ using $n$ (greater than two) channels, i.e. a rate per channel use of $1/n$. When the goal is to maximize the transmission per uses of the network over a single path as in \cite{pirandola2016capacities,pirandola2019capacities}, the lower route is preferable as it can achieve a transmission of $1$ per use of the network, whereas the upper route can achieve $0.5$.}\label{fig:SinglePath}
\end{figure}

We will now introduce the linear program that provides upper and lower bounds on the capacities Eq. (\ref{cap2}) and Eq. (\ref{cap1}), respectively. Let us consider undirected graph $G'=(V,E')$, as defined at the beginning of the Result section, with edge capacities $c'(\{vw\})$ for all $\{vw\}\in E'$.  We assume that we have two special nodes $s,t\in V$, which we call the source and the sink. As the entanglement across each edge can be used in both directions, we assign two edge flows $f_{wv}\geq0$ and $f_{vw}\geq0$ to each edge $\{wv\}\in E'$, where $f_{wv}$ corresponds to a flow from $w$ to $v$ and $f_{vw}$ to a flow in the opposite direction. 

The goal is now to maximize the flow from $s$ to $t$ over the graph $G'$. In order to be a feasible flow, it should not exceed the capacity of each edge. Namely, for each edge $\{vw\}$ we need
\begin{equation}\label{capconstr}
f_{wv}+f_{vw}\leq c'(\{wv\}).
\end{equation}
We also need that for each edge  $w\neq s,t$ 
\begin{equation}\label{flowcons}
\sum_{v:\{vw\}\in E'} f_{vw} = \sum_{v:\{vw\}\in E'}f_{wv},
\end{equation}
which is known as flow conservation. By this flow conservation the flow from $s$ to $t$ is equal to the flow leaving the source minus the flow entering the source,
\begin{equation}\label{obj}
f^{s\to t}=\sum_{v:\{sv\}\in E'}(f_{sv}-f_{vs}).
\end{equation}
In order to obtain the maximum flow from $s$ to $t$ over the graph $G'$, we need to maximize Eq. (\ref{obj}) over edge flows with respect to constraints Eq. (\ref{capconstr}) and Eq. (\ref{flowcons}), which is a linear program:

\begin{align}\label{LP1}
f^{s\to t}_{\max}(G',\{c'(\{wv \}) \}_{\{wv\}\in E'})=\textrm{max}& \sum_{v:\{sv\}\in E'} (f_{sv}-f_{vs})\\
&\forall \{vw\}\in E':\ f_{wv}+f_{vw}\leq c'(\{wv\})\nonumber\\
&\forall \{vw\}\in E':\ f_{wv}, f_{vw}\geq 0\nonumber\\
&\forall w\in V: w\neq s,t,\ \sum_{v:\{vw\}\in E'} f_{vw} = \sum_{v:\{vw\}\in E'}f_{wv}. \nonumber
\end{align}
In Theorem \ref{Theo_bip}, we set the capacities $c'(\{wv\})$ to
\begin{align}
&c'_{C}(\{vw\},p_{wv},p_{vw})=p_{wv}{C}(\mathcal{N}^{wv})+p_{vw}{C}(\mathcal{N}^{vw})\label{cap_lower},
\end{align}
where $C=Q^\leftrightarrow$ for the lower bound and $C=\mathcal{E}$ for the upper bound, respectively. In the case where $wv\in E$ but $vw\notin E$, we set $p_{vw}{C}(\mathcal{N}^{vw})=0$. Further we add an optimization over the usage frequencies
\begin{equation}\label{LP1bar}
\bar{f}^{s\to t}_{\max}(G',\{C(\mc{N}^e)\}_{e\in E})=\max_{0\leq p_e \leq 1,\ \sum_ep_e=1}f^{s\to t}_{\max}(G',\{c'_{C}(\{vw\},p_{wv},p_{vw})\}_{\{vw\}\in E'})
\end{equation}

In the following we will make use of the max-flow min-cut theorem: Given a subset $V'\subset V$ we define a cut of $G'$  as the set 
\begin{equation}
\partial(V'):=\{\{vw\}\in E':  v\in V', w\in V\setminus V'\}.
\end{equation}
If, for given vertices $s$ and $t$, and a set $V_{s;t}\subset V$,  $s\in V_{s;t}$ and $t\in V\setminus V_{s;t}$ we call $\partial(V_{s;t})$ an $st$-cut. Let us note that the first and second indices in the subscript of $V_{s;t}$ have different meanings. 
The minimum $st$-cut of $G'$ is defined as
\begin{equation}\label{mincut}
\min_{V_{s;t}}\sum_{\{vw\}\in \partial(V_{s;t})}c'(\{vw\}),
\end{equation}
where the minimization is over all $V_{s;t}\subset V$ such that $s\in V_{s;t}$ and $t\in V\setminus V_{s;t}$. By the max-flow min-cut theorem \cite{dantzig1955max,EFS96} it holds 
\begin{equation}\label{maxFlowminCut}
f^{s\to t}_{\max}(G',\{c'(\{wv \}) \}_{\{wv\}\in E'})=\min_{V_{s;t}}\sum_{\{vw\}\in \partial(V_{s;t})}c'(\{vw\}).
\end{equation}
See figure \ref{fig:unicast} for an example illustrating the connection between cuts and flows. 


\subsection{Multiple pairs of users}
In this section we will explicitly define all quantities that occur in our results for multiple pairs of users, Theorems \ref{Theo_Total} and \ref{Theo_WC}. We also briefly introduce multi-commodity flows and the corresponding generalizations of the max-flow min-cut theorem, which are used in the proofs of Theorems \ref{Theo_Total} and \ref{Theo_WC}.
 
We begin by defining a total multi-pair quantum network capacity and a worst-case multi-pair quantum network capacity per total channel use respectively as
\begin{align}
&{\cal Q}^{\rm total}\left(G,\{\mc{N}^e\}_{e\in E}\right)=\max_{\substack{ p_e\geq0\\ \sum_ep_e=1}}\lim_{\epsilon\to0}\lim_{n\to\infty}\sup_{\Lambda}\left\{\frac{\sum_{i=1}^r\langle\log d_i^{(k)}\rangle_{k}}{n}:\left\|\rho_{M_{s_1}M_{t_1}  \cdots  M_{s_r}M_{t_r}}^{(n,k)}-\bigotimes_{i=1}^r\Phi^{d_i^{(k)}}_{M_{s_i}M_{t_i}}\right\|_1\leq\epsilon\right\},\label{cap3a}\\
&{\cal Q}^{\rm worst}\left(G,\{\mc{N}^e\}_{e\in E}\right)=\max_{\substack{ p_e\geq0\\ \sum_ep_e=1}}\lim_{\epsilon\to0}\lim_{n\to\infty}\min_{i\in\{1 \cdots  r\}}\sup_{\Lambda}\left\{\frac{\langle\log d_i^{(k)}\rangle_{k}}{n}:\left\|\rho_{M_{s_1}M_{t_1}  \cdots  M_{s_r}M_{t_r}}^{(n,k)}-\bigotimes_{i=1}^r\Phi^{d_i^{(k)}}_{M_{s_i}M_{t_i}}\right\|_1\leq\epsilon\right\}\label{cap3},
\end{align}
where the supremum is over all adaptive $(n,\epsilon, \{p_e\}_{e\in E})$ protocols $\Lambda$ and $k=(k_1,  \ldots  ,k_{m+1})$ is a vector of outcomes of the $m+1$ LOCC rounds in $\Lambda$, the averaging is over all those outcomes and $\rho^{(n,k)}$ is the final state of $\Lambda$ for given outcomes $k$. The corresponding private capacities ${\cal P}^{\rm total}\left(G,\{\mc{N}^e\}_{e\in E}\right)$ and ${\cal P}^{\rm worst}\left(G,\{\mc{N}^e\}_{e\in E}\right)$  are defined by replacing $\Phi^{d_i^{(k)}}_{M_{s_i}M_{t_i}}$ by $\gamma^{d_i^{(k)}}_{K_{s_i}S_{s_i}K_{t_i}S_{t_i}}$ in Eqs. Eq. (\ref{cap3a}) and Eq. (\ref{cap3}).


The bounds on Eqs. Eq. (\ref{cap3a}) and Eq. (\ref{cap3}) given in Theorems \ref{Theo_Total} and \ref{Theo_WC}, respectively are in terms of multi-commodity flow optimizations, which we will introduce in this section. A flow instance involving multiple sources and sinks $s_1,\cdots,s_r$ and $t_1,\cdots ,t_r$ is known as a multi-commodity flow, each flow $f^{(i)}$ from $s_i$ to $t_i$ being considered to be a separate commodity. The the maximum total multi-commodity flow is then obtained by maximizing the sum over all single-commodity flows. Generalizing LP Eq. (\ref{LP1}) accordingly, we obtain the following LP:
\begin{align}
f^{\text{total}}_{\max}\left(G',\{c'(\{vw \}) \}_{\{vw\}\in E'}\right)=\textrm{max} &\sum_{i=1}^r\sum_{v:\{s_iv\}\in E'} \left(f^{(i)}_{s_iv}-f^{(i)}_{vs_i}\right)\\
\forall \{vw\}\in E':\ &\  \sum_{i=1}^r\left(f^{(i)}_{vw}+f^{(i)}_{wv}\right)\le  c'(\{vw\})\nonumber\\
\forall \{vw\}\in E', \ \forall i:\ &\  f^{(i)}_{vw}, f^{(i)}_{wv}\geq 0\nonumber\\
\forall i,\ \forall w\in V,w\neq s_i,t_i:\ &\ \sum_{v:\{vw\}\in E'} \left(f^{(i)}_{vw} -f^{(i)}_{wv}\right)=0.\nonumber
\end{align}
Again, we can define
\begin{equation}\label{LP3bar}
\bar{f}^{\text{total}}_{\max}(G',\{C(\mc{N}^e)\}_{e\in E})=\max_{0\leq p_e \leq 1,\ \sum_ep_e=1}f^{\text{total}}_{\max}(G',\{c'_{C}(\{vw\},p_{wv},p_{vw})\}_{\{vw\}\in E'}).
\end{equation}
Further, the maximum worst-case multi-commodity flow is obtained by adding additional variable $f$ and maximizing  $f$, while demanding that every single-commodity flow is greater or equal to $f$. Hence $f$ corresponds to the least flow between any source sink pair. This provides us with the following linear program \cite{shahrokhi1990maximum}:
\begin{align}
f^{\text{worst}}_{\max}\left(G',\{c'(\{vw \}) \}_{\{vw\}\in E'}\right)=\textrm{max} &\ f\\
\forall i:\ &\ f-\sum_{v:\{s_iv\}\in E'} \left(f^{(i)}_{s_iv}-f^{(i)}_{vs_i}\right)\leq0\nonumber\\
\forall \{vw\}\in E':\ &\  \sum_{i=1}^r\left(f^{(i)}_{vw}+f^{(i)}_{wv}\right)\le  c'(\{vw\})\nonumber\\
\forall \{vw\}\in E', \ \forall i:\ &\  f^{(i)}_{vw}, f^{(i)}_{wv}\geq 0\nonumber\\
\forall i,\ \forall w\in V,w\neq s_i,t_i:\ &\ \sum_{v:\{vw\}\in E'} \left(f^{(i)}_{vw} -f^{(i)}_{wv}\right)=0.\nonumber
\end{align}
Again, we can define
\begin{equation}\label{LP4bar}
\bar{f}^{\text{worst}}_{\max}(G',\{C(\mc{N}^e)\}_{e\in E})=\max_{0\leq p_e \leq 1,\ \sum_ep_e=1}f^{\text{worst}}_{\max}(G',\{c'_{C}(\{vw\},p_{wv},p_{vw})\}_{\{vw\}\in E'}).
\end{equation}

Next, we will consider generalisations of the max-flow min-cut theorem to multiple source sink pairs: Given source sink pairs $(s_1,t_1), \cdots  ,(s_r,t_r)$, one can define a multicut $\{S\}\leftrightarrow \{T\}$ as a set of edges in $E'$ whose removal disconnects all source sink pairs and the capacity of a multicut as the sum over the capacity of its edges $\{S\}\leftrightarrow \{T\}$, namely
\begin{equation}
c'(\{S\}\leftrightarrow \{T\})=\sum_{\{vw\}\in \{S\}\leftrightarrow \{T\}} c'(\{vw\}).
\end{equation}
Whereas there is no known exact max-flow minimum cut-ratio theorem in the case of multiple flows, there exists a relation between the minimum multicut and the maximum total multi-commodity flow up to a factor $g_\text{t}(r)$ that scales as $\mc{O}(\log r)$ \cite{garg1996approximate}. Namely it holds 
\begin{equation}\label{maxFlowMinMultiCut}
f^{\text{total}}_{\max}\left(G',\{c'(\{vw \}) \}_{\{vw\}\in E'}\right)\leq \min_{\{S\}\leftrightarrow \{T\}}c'(\{S\}\leftrightarrow \{T\})\leq g_\text{t}(r)f^{\text{total}}_{\max}\left(G',\{c'(\{vw \}) \}_{\{vw\}\in E'}\right),
\end{equation}
An example of the relation Eq. (\ref{maxFlowMinMultiCut}) is given in figure \ref{fig:multiunicast} (a) and (b). In the example $g_{\rm t}(r)=1$. In the case of the maximum worst-case multi-commodity flow there exists a similar relation with the minimum cut ratio, which is defined as
\begin{equation}\label{cutratio}
R_{\min}\left(G',\{c'(\{vw \}) \}_{\{vw\}\in E'}\right)=\min_{V'\subset V}\frac{\sum_{\{vw\} \in \partial V'}c'(\{vw\})}{d(\partial (V'))},
\end{equation} 
where the minimization is over (bipartite) cuts $\partial V'$ and
\begin{equation}
d(\partial (V'))=\left|\{ i : (s_i \in V', t_i \in V \setminus V') \vee  (t_i \in V', s_i \in V \setminus V') \}\right|
\end{equation}
describes the demand across a cut $\partial V'$. Note that in the case of only one source sink pair the minimum cut ratio Eq. (\ref{cutratio}) reduces to the min-cut Eq. (\ref{mincut}). Whereas there is no known exact max-flow minimum cut-ratio theorem in the case of multiple flows, there is a relation up to some factor $g_{\text{w}}(r)$ \cite{leighton1999multicommodity},
\begin{equation}\label{maxFlowMinCutRatio}
f^{\text{worst}}_{\max}\left(G',\{c'(\{vw \}) \}_{\{vw\}\in E'}\right)\leq R_{\min} \left(G',\{c'(\{vw \}) \}_{\{vw\}\in E'}\right)\leq g_{\text{w}}(r)f^{\text{worst}}_{\max}\left(G',\{c'(\{vw \}) \}_{\{vw\}\in E'}\right).
\end{equation}
An example of the relation Eq. (\ref{maxFlowMinCutRatio}) is given in figure \ref{fig:multiunicast} (c) and (d). In the example $g_{\text{w}}(r)=1$. The gap $g_{\text{w}}(r)$ is known as the flow-cut gap. In \cite{leighton1999multicommodity} it has been shown to be of $\mc{O}(\log |E|)$. This was then improved to $\mc{O}(\log r)$, where $r$ is the number of source sink pairs, in \cite{linial1995geometry,aumann1998log}. In the case of overlapping source and sink vertices, i.e.  $s_i=s_j$, $s_i=t_j$, $t_i=s_j$ or $t_i=t_j$ for some $i\neq j$, the flow-cut gap has further been improved to $\mc{O}(\log r^*)$, where $r^*$ is the size of the smallest set of vertices that contains at least one of such $s_i$ or $t_i$ for all $i=1,...,r$ \cite{gunluk2007new}. For a number of particular classes of graphs, it has been shown that the flow-cut gap can even be of $\mc{O}(1)$ \cite{gupta2004cuts,chekuri2006embedding,lee2009geometry,chakrabarti2012cut,salmasi2017constant}.

\subsection{Multipartite target states}

In this section we will explicitly define all quantities that occur in our result for multipartite target states, Theorem \ref{Theo_multi}. We also briefly introduce the concept of Steiner cuts and Steiner trees, which are used in the proof of Theorem \ref{Theo_multi}.

Again, we begin with the definition of the capacities: Given a set $S\subset V$ of users that wish to establish a GHZ or multipartite private state, the multipartite quantum and private network capacities are defined as
\begin{align}
&\mathcal{Q}^S\left(G,\{\mc{N}^e\}_{e\in E}\right)=\max_{p_e\geq0,\sum_ep_e=1}\lim_{\epsilon\to0}\lim_{n\to\infty}\sup_{\Lambda}\left\{\frac{\langle\log d^{(k)}\rangle_{k}}{n}:\left\|\rho_{M_{s_1}...M_{s_l}}^{(n,k)}-\Phi^{d^{(k)}}_{M_{s_1}...M_{s_l}}\right\|_1\leq\epsilon\right\},\label{multicap2}\\
&\mathcal{P}^S\left(G,\{\mc{N}^e\}_{e\in E}\right)=\max_{p_e\geq0,\sum_ep_e=1}\lim_{\epsilon\to0}\lim_{n\to\infty}\sup_{\Lambda}\left\{\frac{\langle\log d^{(k)}\rangle_{k}}{n}:\left\|\rho_{K_{s_1}S_{s_1}...K_{s_l}S_{s_l}}^{(n,k)}-\gamma^{d^{(k)}}_{K_{s_1}S_{s_1}...K_{s_l}S_{s_l}}\right\|_1\leq\epsilon\right\},\label{mulricap1}
\end{align}
where the suprema are over all adaptive $(n,\epsilon, \{p_e\}_{e\in E})$ protocols $\Lambda$. As the class of multipartite private states includes GHZ states, the multipartite private capacity is an upper bound on the multipartite quantum capacity.

Let us mow introduce the concept of Steiner cuts and Steiner trees: For a subset $S\subset V$ of vertices in $G'$ we define a Steiner cut with respect to $S$, in short $S$-cut, as a cut $\partial (V_S)$ with respect to a set $V_S\subset V$ such that there is at least one pair of vertices $s_i,s_j\in S$ with $s_i\in V_S$ and $s_j\in V\setminus V_S$. When considering a minimization of the capacity over all $S$-cuts, we can divide the minimization into a minimization over pairs of vertices in $S$ and a minimization over cuts separating the pairs,
\begin{equation}
\min_{V_S}\sum_{\{vw\}\in \partial (V_S)}c'(\{vw\})=\min_{s_i,s_j\in S,s_i\neq s_j}\min_{V_{s_i;s_j}}\sum_{\{vw\}\in \partial(V_{s_i;s_j})}c'(\{vw\}),
\end{equation}
where $\min_{V_S}$ is a minimization over all $V_S\subset V$ such that there is at least one pair of vertices $s_i,s_j\in S$ with $s_i\in V_S$ and $s_j\in V\setminus V_S$. Further $\min_{V_{s_i;s_j}}$ is a minimization over all $V_{s_i;s_j}\subset V$ such that $s_i\in V_{s_i;s_j}$ and $s_j\in V\setminus V_{s_i;s_j}$.
Note that, as $\min_{s_i,s_j\in S,s_i\neq s_j}\min_{V_{s_i;s_j}}\sum_{\{vw\}\in \partial(V_{s_i;s_j})}c'(\{vw\})$ 
does not depend on the order, we can, without loss of generality restrict to disjoint $s_i$ and $s_j$ with $j>i$, reducing the number of resources needed in the outer minimization. We can then apply the max-flow min-cut theorem Eq. (\ref{maxFlowminCut}) to the inner minimization,
\begin{equation}\label{minSCutMaxSflow}
\min_{s_i,s_j\in S,s_i\neq s_j}\min_{V_{s_i;s_j}}\sum_{\{vw\}\in \partial(V_{s_i;s_j})}c'(\{vw\})
=\min_{\substack{s_i,s_j\in S,s_i\neq s_j\\j> i}}f^{s_i\to s_j}_{\max}(G',\{c'(\{wv \}) \}_{\{wv\}\in E'}),
\end{equation}
where $f^{s_i\to s_j}_{\max}(G',\{c'(\{wv \}) \}_{\{wv\}\in E'})$ is given by LP Eq. (\ref{LP1}). As there are finitely many disjoint $s_i,s_j$-pairs in $S$, we could solve $f^{s_i\to s_j}_{\max}(G',\{c'(\{wv \}) \}_{\{wv\}\in E'})$ for every pair and then find the smallest solution. A more efficient way is to introduce flow variables $f_e^{(ij)}$ for every disjoint $s_i,s_j$-pair (and every edge) and maximize a slack variable $f$, while requiring the flow value for every $s_i,s_j$-pair to be greater or equal than $f$ and all other constraints of LP Eq. (\ref{LP1}) to be fulfilled for every disjoint $s_i,s_j$-pair:

\begin{equation}\label{eq:connectivity}
\min_{\substack{s_i,s_j\in S,s_i\neq s_j\\j> i}}f^{s_i\to s_j}_{\max}(G',\{c'(\{wv \}) \}_{\{wv\}\in E'})=f^{S}_{\max}\left(G',\{c'(\{vw \}) \}_{\{vw\}\in E'}\right),
\end{equation}
where
\begin{align}
f^{S}_{\max}\left(G',\{c'(\{vw \}) \}_{\{vw\}\in E'}\right)=\textrm{max} &\ f\label{LP5}\\
\forall i,j> i:\ &\ f-\sum_{v:\{s_iv\}\in E'} \left(f^{(ij)}_{s_iv}-f^{(ij)}_{vs_i}\right)\leq0\label{LP5a}\\
\forall i,j> i,\{vw\}\in E':\ &\  f^{(ij)}_{vw}+f^{(ij)}_{wv}\le  c'(\{vw\})\label{LP5b}\\
\forall i,j> i,\forall \{vw\}\in E':\ &\  f^{(ij)}_{vw},f^{(ij)}_{wv}\geq 0\label{LP5c}\\
\forall i,j> i,\ \forall w\in V,w\neq s_i,s_j:\ &\ \sum_{v:\{vw\}\in E'} \left(f^{(ij)}_{vw} -f^{(ij)}_{wv}\right)=0.\label{LP5d}
\end{align}
Adding a maximization over usage frequencies, we obtain
\begin{equation}\label{LP5bar}
\bar{f}^{S}_{\max}(G',\{C(\mc{N}^e)\}_{e\in E})=\max_{0\leq p_e \leq 1,\ \sum_ep_e=1}f^{S}_{\max}(G',\{c'_{C}(\{vw\},p_{wv},p_{vw})\}_{\{vw\}\in E'}).
\end{equation}

It will be convenient to introduce an undirected multigraph $G_{\lfloor c' \rfloor}''$, by replacing each edge $\{vw\}\in E'$ with $\lfloor c'(\{vw\}) \rfloor$ identical edges with unit-capacity connecting $v$ and $w$. An $S$-cut in an undirected unit-capacity multigraph $G''$ is defined as a set of edges whose removal disconnects at least two vertices in $S$. The size $\lambda_S(G'')$ of the minimum $S$-cut in $G''$ is called the $S$-connectivity of $G''$.

In $G''$ we can also define a Steiner tree spanning $S$, in short $S$-tree, as a subgraph of $G''$ that contains all vertices in $S$ and is a tree, i.e. does not contain any cycles. If $S$ only consists of two vertices, we call an $S$-tree a path. We call two Steiner trees edge-disjoint, if they do not contain a common edge. The problem of finding the number $t_S(G'')$ of edge-disjoint Steiner trees in a general undirected multigraph is NP-complete \cite{cheriyan2006hardness}. However, there is a connection between $S$-connectivity and the number of edge-disjoint $S$-trees in an undirected unit-capacity multigraph \cite{kriesell2003edge,lau2004approximate,petingi2009packing}:
\begin{equation}\label{KreiselConj}
t_S(G'')\geq\lfloor g_1 \lambda_S(G'')\rfloor-g_2.
\end{equation}
In \cite{kriesell2003edge} it has been conjectured that Eq. (\ref{KreiselConj}) holds for $g_1=\frac{1}{2}$ and $g_2=0$. In \cite{lau2004approximate} it has been shown that the relation holds for $g_1=\frac{1}{26}$ and $g_2=0$, whereas the authors of \cite{petingi2009packing} show that it holds for $g_1=\frac{1}{2}$ and $g_2=\frac{|V\setminus S|}{2}+1$, which is finite in the graphs we are considering.

\subsection{On complexity}

Let us briefly discuss the computational complexity of our linear programs Eq. (\ref{LP1bar}), Eq. (\ref{LP3bar}), Eq. (\ref{LP4bar}) and Eq. (\ref{LP5bar}). Using interior point methods, e.g. \cite{ye1991n3l}, a linear program in standard form
\begin{align}\label{LPs}
\textrm{min} &\ c^Tx\\
& Ax=b, \ x\geq 0,\nonumber
\end{align}
where $c,x\in\mathbb{R}^N$, $b\in\mathbb{R}^M$ and $A\in\mathbb{R}^{M\times N}$, can be solved using ${\cal O}(\sqrt{N}L)$ iterations and ${\cal O}(N^3L)$ total arithmetic operations. Here $L$ is the size of the problem data, $A,b,c$, which scales as ${\cal O}(MN+M+N)$ \cite{wright1997primal}. If we assume $A$ to be of full rank, it holds $M\leq N$, and hence, $L$ scales as ${\cal O}(N^2)$. Using slack variables \cite{ye1991n3l}, all inequality constraints in our linear programs can be converted into equality constraints. Linear equality constraints can be easily written in the form $Ax=b$. Hence $N$ can be obtained by adding the number of variables and the number of inequality constraints in our linear programs. 

For LP Eq. (\ref{LP1bar}) we have $N=3|E'|+|E|$. LP Eq. (\ref{LP3bar}) has $2r|E'|+|E|$ variables and $|E'|+|E|$ inequality constraints. Thus $N=(2r+1)|E'|+2|E|$ for LP Eq. (\ref{LP3bar}). LP Eq. (\ref{LP4bar}) has $2r|E'|+|E|+1$ variables and $|E'|+|E|+r$ inequality constraints. Thus $N=(2r+1)|E'|+2|E|+1+r$ for LP Eq. (\ref{LP4bar}). LP Eq. (\ref{LP5bar}) has $2{{|S|}\choose{2}}|E'|+|E|+1$ variables and ${{|S|}\choose{2}}|E'|+|E|+{{|S|}\choose{2}}$ inequality constraints. Thus $N=3{{|S|}\choose{2}}|E'|+2|E|+1+{{|S|}\choose{2}}$ for LP Eq. (\ref{LP5bar}). Hence, all our linear programs, the number of iterations as well as the number of total arithmetic operations scale polynomially with the size of the network. 

\newpage
\section{Supplementary Information}
\subsection{Preliminaries}\label{sec:main}

Let a quantum network be given by a directed graph $G=(V,E)$, where $V$ denotes the set of the finite vertices and $E$ the set of the finite directed edges, which represent quantum channels. Each directed edge $e\in E$ has tail $v\in V$ and head $w\in V$.  We also denote $e$ by $vw$.  $\mc{N}^e=\mc{N}^{vw}$ corresponds to a channel with input in $v$ and output in $w$. We can also assign graph theoretic capacity functions $c:E\to\mathbb{R}^+_0$ to each edge. We assume that each vertex has the capability to store and process quantum information locally and that all vertices are connected by public lines of classical communication, the use of both of which is considered to be a free resource. Let us assume there is a subset $U\subset V$ of the vertices, the users who wish to establish a target state $\theta$ containing the desired resource, whereas the remaining vertices serve as repeater stations. In the following section we will elaborate on the exact form of $\theta$.

We assume that initially there is no entanglement between any of the vertices. In order to obtain $\theta$, all vertices apply an adaptive protocol consisting of (generally probabilistic) local operations and classical communications (LOCC) among the nodes in the network interleaved by channel uses. In particular, during each round of LOCC it is determined which channel is used next and which state is inserted into the channel \cite{AML16,pirandola2016capacities}. We describe a protocol by given upper bound $n_e$ on the average of the number of uses of each channel ${\cal N}^e$, which is associated with a set of usage frequencies $\{p_e\}_{e\in E}$, where $p_e := n_e/ n (\ge 0)$ of each channel ${\cal N}^e$ for a single parameter $n$ which can be regarded as time or an upper bound on the average of total channel uses with $\sum_{e \in E} p_e =1$ (see \cite{AK16}), and an error parameter $\epsilon$ such that after the final round of LOCC a state $\epsilon$-close in trace distance to $\theta$ is obtained. By average we mean that parameters of a protocol are averaged over all possible LOCC outcomes. We call such a protocol an $(n,\epsilon, \{p_e\}_{e\in E})$ adaptive protocol. In the asymptotic limit where $n\to\infty$ it then holds ${n}_e\to\infty$ for edge $e$ with $p_e>0$ while $\{p_e\}_{e\in E}$ remains fixed \cite{AK16}.

Note that whereas quantum channels are directed, the direction does not play a role when we use them to distribute entanglement under the free use of (two-way) classical communication. For example, if a channel is used to distribute a Bell state, it is invariant under permutations of nodes across the channel. This motivates the introduction of an \emph{undirected} graph $G'=(V,E')$, where $E'$ is obtained from $E$ as follows:
For any edge $vw\in E$ with $wv \in E$, the directed edges $vw$ and $wv$ are replaced by single undirected edge $\{vw\}$ (or, equivalently $\{wv\}$) with $c'(\{vw\})=c(vw)+c(wv)$, while, for any edge $vw\in E$ with $wv \notin E$, the directed edge $vw$ is replaced by undirected edge $\{vw\}$ with $c'(\{vw\})=c(vw)$.


In order to describe networks consisting only of Bell states, it will also be convenient to introduce an undirected unit-capacity multigraph $G_{\lfloor c'\rfloor}''=(V,E_{\lfloor c'\rfloor}'')$, which we derive from $G'$ by replacing every edge $\{vw\}$ in $G$ by $\lfloor c'(\{vw\})\rfloor$ unit-capacity undirected edges linking $v$ and $w$. See Supplementary Figure \ref{figG} for an example of the various graphs.

\begin{figure}

\centering
(a) \includegraphics[width=0.3\textwidth, trim={1cm 10cm 10cm 2cm},clip=true]{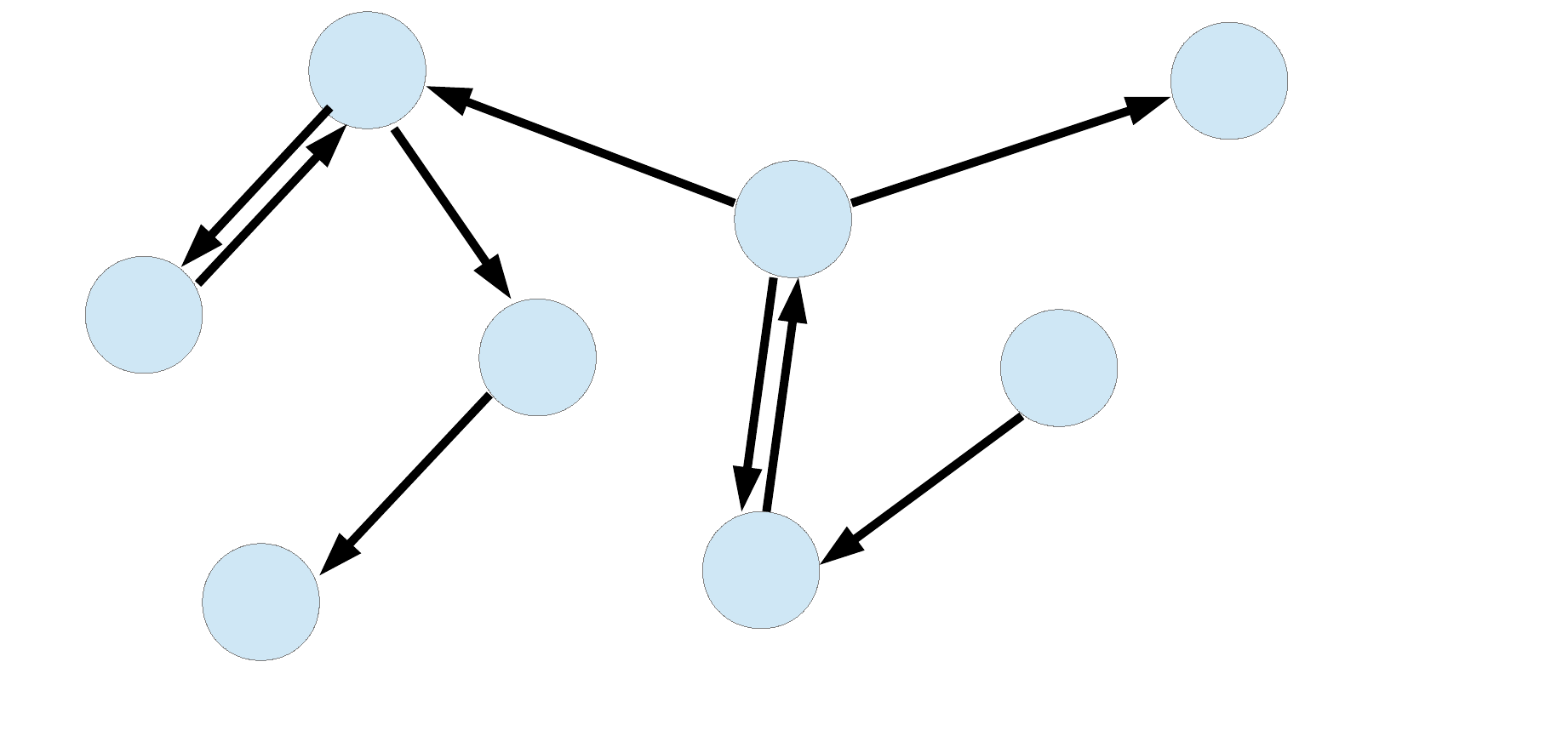}     
(b)\includegraphics[width=0.3\textwidth, trim={1cm 10cm 10cm 2cm},clip=true]{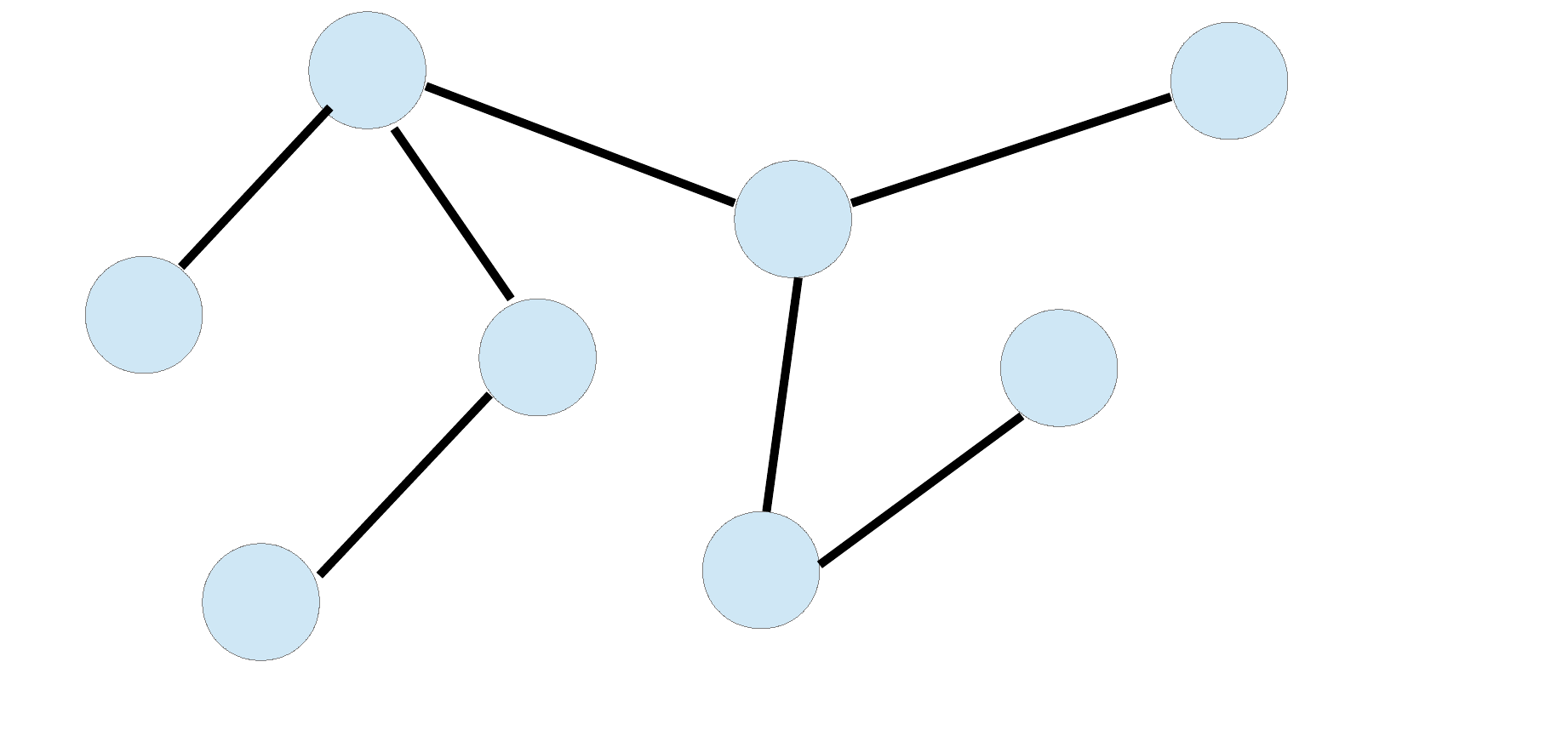}     
(c) \includegraphics[width=0.3\textwidth, trim={1cm 10cm 10cm 2cm},clip=true]{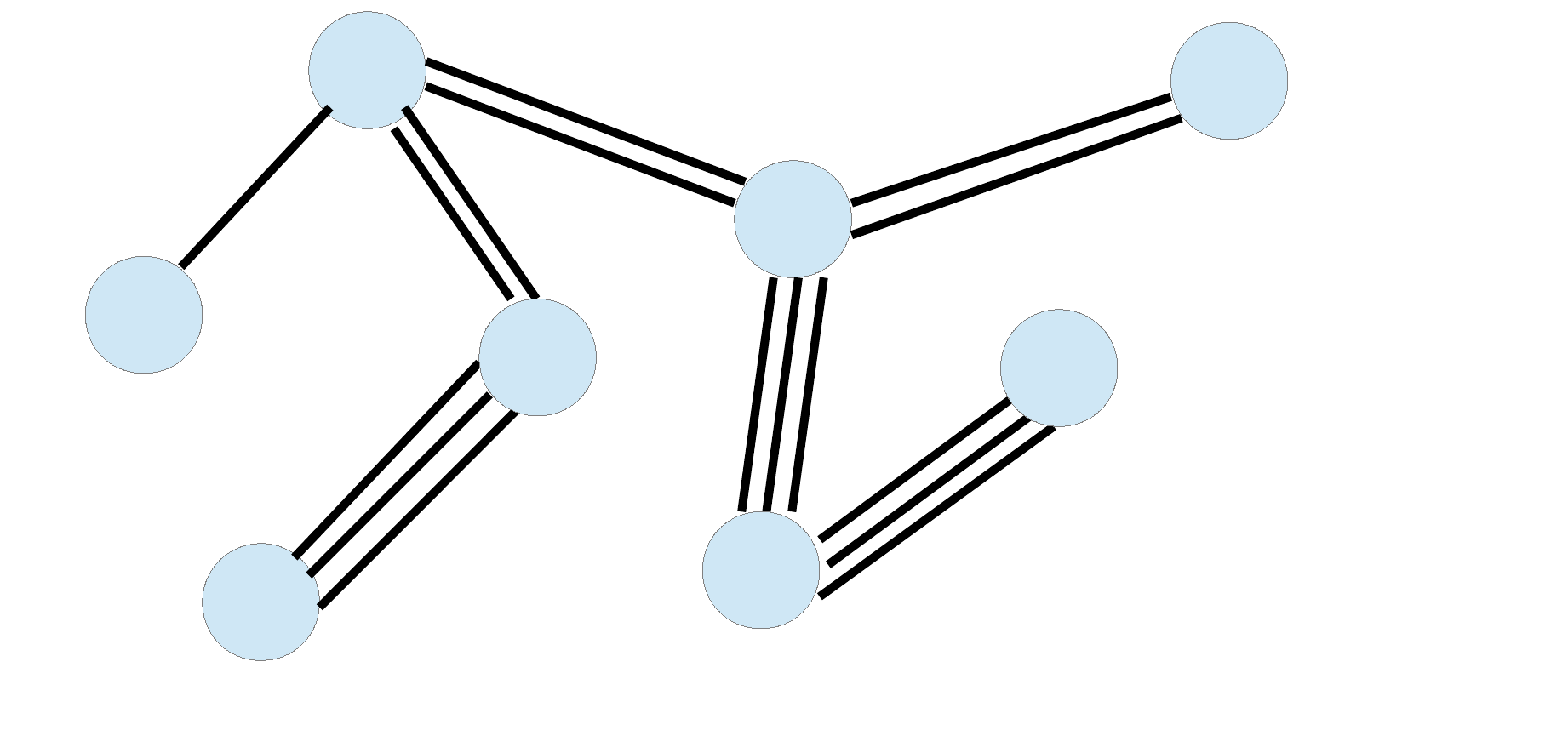}     
\caption{Our notation for graphs: (a) Directed graph $G$, with the edges representing quantum channels. (b) Undirected graph $G'$ as defined. (c) Undirected multigraph $G''$, with the edges corresponding to Bell states. For the definitions of $G$, $G'$ and $G''$, see \ref{sec:main}.}\label{figG}
\end{figure}

\subsection{Bipartite user scenario}\label{sec:bip}

Let us suppose that the set $U$ of users only contains two vertices, $s\in E$, called Alice, and $t\in E$, called Bob. We can now define a \emph{network capacity with respect to target state $\theta$ with fixed average usage frequencies $\{p_e\}_{e\in E}$} as the maximum asymptotic average rate at which we can obtain the target state by means of adaptive operations,
\begin{equation}\label{cap1}
\mathcal{C}^{\theta}_{\{p_e\}_{e\in E}}\left(G,\{\mc{N}^e\}_{e\in E}\right)=\lim_{\epsilon\to0}\lim_{n\to\infty}\sup_{\Lambda(n,\epsilon, \{p_e\}_{e\in E})}\left\{\frac{\langle\log d^{(k)}\rangle_{k}}{n}:\left\|\rho_{X_sX_t}^{(n,k)}-\theta^{d^{(k)}}_{X_sX_t}\right\|_1\leq\epsilon\right\},
\end{equation}
where the supremum is over all adaptive $(n,\epsilon, \{p_e\}_{e\in E})$ protocols $\Lambda$. Further $k=(k_1,  \ldots  ,k_{m+1})$ is a vector keeping track of outcomes of the $m+1$ LOCC rounds in $\Lambda$, the averaging, denoted by the parenthesis $\langle...\rangle_{k}$, is over all those outcomes and $\rho_{X_sX_t}^{(n,k)}$ is the final state of $\Lambda$ for given outcomes $k$. A more general quantity will be the \emph{network capacity with respect to target state} $\theta$, which is obtained by maximizing Supplementary Equation (\ref{cap1}) over the usage frequencies,
\begin{equation}\label{cap2}
\mathcal{C}^{\theta}\left(G,\{\mc{N}^e\}_{e\in E}\right)=\max_{p_e\geq0,\sum_ep_e=1}\mathcal{C}_{\{p_e\}_{e\in E}}^{\theta}\left(G,\{\mc{N}^e\}_{e\in E}\right).
\end{equation}
If the target is a maximally entangled state $\Phi^d_{M_sM_t}$, we refer to Supplementary Equations (\ref{cap1}) and (\ref{cap2}) as \emph{quantum capacities} of the network, while if it is a private state $\gamma^d_{K_sS_sK_tS_t}$, we refer to them as \emph{private capacities} of the network. Respectively, we also use the notation $\mc{Q}:=\mc{C}^\Phi$ and $\mc{P}:=\mc{C}^\gamma$.

Let us discuss the qualitative difference between the capacities given by Supplementary Equations (\ref{cap1}) and (\ref{cap2}). In the first scenario of Supplementary Equation (\ref{cap1}), the channel frequencies are fixed. In this case the optimization task reduces to finding the protocol that achieves the largest rate per channel use while using the channels with the given frequencies. This could be, in practice, related with the rate of entanglement distribution per time $n$. In particular, in this scenario, $p_e$ represents how frequently the use of channel $\mc{N}^e$ occurs for a given time $n$. In the scenario of Supplementary Equation (\ref{cap2}), on the other hand, we allow the users to choose the frequency usage with $\sum_{e \in E} p_e=1$ for maximizing the rate at which the desired communication task can be performed. This would be meaningful whenever we want to minimize the total channel uses $n=\sum_e n_e $ to obtain one resource state. The number $n$ of channel uses could be related with a cost (or usage fee), which has to be minimized. Note that, the solution for Supplementary Equation (\ref{cap2}) is always achieved by a single path of repeaters with (in general) different frequencies on the path. Both scenarios could correspond to the implementation of a communication task over a quantum network that belongs to an external provider who charges per time or number of channel uses.
\subsubsection{Upper bounding the capacity}
We will now show that the capacities given by Supplementary Equations (\ref{cap1}) and (\ref{cap2}) can be upper bounded by linear programs. In  \cite{rigovacca2018versatile}, it has been shown that with any bipartite entanglement measure $\cal E$ that satisfies the properties
\begin{enumerate}
\item{if $\|\rho_{AB}-\theta^d_{AB}\|_1=\epsilon$, there exist real functions $f$ and $g$, satisfying $\lim_{\epsilon\to0}f(\epsilon)=0$ and $\lim_{\epsilon\to0}g(\epsilon)=1$,  such that ${\cal E}(\rho_{AB})\geq g(\epsilon)\log d-f(\epsilon)$,}
\item{for $\tilde\rho_{AB'B}=\mathcal{N}_{A'\to B'}(\rho_{AA'B})$, it holds 
\begin{equation}
{\cal E}(\tilde\rho_{AB'B})\leq {\cal E}(\mathcal{N})+{\cal E}(\rho_{AA'B}),
\end{equation}
where ${\cal E}(\mathcal{N})=\max_{\rho_{AA'}}{\cal E}(\mathcal{N}_{A'\to B'}(\rho_{AA'}))$, i.e. $\cal E$ cannot be increased by amortization \cite{kaur2017amortized},}
\end{enumerate}
it is possible to upper bound the fixed usage frequencies the capacity given by Supplementary Equation (\ref{cap1}) as follows:
\begin{equation}\label{eq:upbound}
\mathcal{C}^{\theta}_{\{p_e\}_{e\in E}}\left(G,\{\mc{N}^e\}_{e\in E}\right)\leq\min_{V_{s;t}} \sum_{e\in E:\{e\} \in \partial(V_{s;t})} p_{e} {\cal E}(\mc{N}^{e}),
\end{equation}
where the minimization is over all $V_{s;t}\subset V$ such that $s\in V_{s;t}$ and $t\in V\setminus V_{s;t}$, and $\partial(V_{s;t})$ is an $st$-cut as defined in equations (13)-(14) of the Methods section. In the case of private states, entanglement measures satisfying properties 1 and 2 include the squashed entanglement $E_{\sq}$ \cite{TGW14,takeoka2014squashed,wilde2016squashed}, the max-relative entropy of entanglement $E_{\max}$ \cite{christandl2017relative} and, for a teleportation simulable/Choi-stretchable channels \cite{BDSW96,HHH99,Mul12,PLOB17}, the relative entropy of entanglement $E_R$ \cite{horodecki2009general,PLOB17}.

Using the max-flow min-cut theorem (see the Methods sections for details), the question of finding an upper bound on the network capacity is reduced to analyzing the entangling properties of single channels combined with finding the maximum flow from $s$ to $t$ in the graph $G'$, which can be formulated as a linear program (LP) of the form
\begin{align}
f^{s\to t}_{\max}(G',\{c'(\{wv\})\}_{\{wv\}\in E'})=\textrm{max}& \sum_{v:\{sv\}\in E'} (f_{sv}-f_{vs})\label{LP2}\\
&\forall \{vw\}\in E':\ f_{wv}+f_{vw}\leq c'(\{wv\})\label{LP2a}\\
&\forall w\in V: w\neq s,t,\ \sum_{v:\{vw\}\in E'} f_{vw} = \sum_{v:\{vw\}\in E'}f_{wv}\label{LP2b}
\end{align}
where the maximization is over edge flows $f_{vw}\geq 0$ and $f_{wv}\geq 0$ for all edges $\{vw\}\in E'$ and $c'(\{wv\})$ denotes an edge capacity of edge $\{wv\}$. Let us discuss the the LP given by Supplementary Equations (\ref{LP2})-(\ref{LP2b}) in detail: The objective in Supplementary Equation (\ref{LP2}) is the net sum of all outgoing flow at the source $s$. Namely, we maximize the sum over all vertices $v$, that are adjacent to the source vertex $s$, of the differences between the corresponding outgoing flows $f_{sv}$ and incoming flows $f_{vs}$. The constraint given by Supplementary Equation (\ref{LP2a}) is the capacity constraint. Namely, for every undirected edge $\{vw\}$ the sum of flows that pass through the edge in both directions have to be less than the capacity. The constraint given by Supplementary Equation (\ref{LP2b}) is the edge flow conservation constraint, ensuring that for every vertex other than the source or sink the sum of outgoing all flows is equal to the sum of all incoming flows. Because of the constraint given by Supplementary Equation (\ref{LP2b}), the net sum of outgoing flow at the source is equal to the flow from source to sink. Let us now set the edge capacities to
\begin{equation}\label{cE}
c'(\{wv\})=c'_{\cal E}(\{wv\}, p_{vw},p_{wv}):=p_{wv}{\cal E}(\mc{N}^{wv})+p_{vw}{\cal E}(\mc{N}^{vw}),
\end{equation}
for all $\{vw\}\in E'$, where $p_{vw}{\cal E}(\mc{N}^{vw})=0$ if $vw\notin E$. Note that Supplementary Equation (\ref{cE}) is a linear function in $p_{vw}$ and $p_{wv}$. Hence, we can formulate the following result:

\begin{proposition}\label{uni:upper}
For any entanglement measure $\cal E$ satisfying properties 1 and 2, it holds
\begin{equation}
\mathcal{C}^{\theta}_{\{p_e\}_{e\in E}}\left(G,\{\mc{N}^e\}_{e\in E}\right)\leq f^{s\to t}_{\max}(G',\{c'_{\cal E}(\{wv\},p_{vw},p_{wv})\}_{\{wv\}\in E'}).
\end{equation}
\end{proposition}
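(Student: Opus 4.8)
The plan is to combine three ingredients that are already available in the excerpt: the cut-based upper bound of \cite{rigovacca2018versatile} stated in Supplementary Equation (\ref{eq:upbound}), the definition (\ref{cE}) of the aggregated undirected edge capacity $c'_{\cal E}$, and the max-flow min-cut theorem (\ref{maxFlowminCut}). The whole argument reduces the information-theoretic capacity to a purely graph-theoretic quantity by two rewriting steps, with no new estimates required.

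First I would invoke Supplementary Equation (\ref{eq:upbound}). Since by hypothesis $\cal E$ satisfies properties 1 and 2, the result of \cite{rigovacca2018versatile} gives directly
\begin{equation}
\mathcal{C}^{\theta}_{\{p_e\}_{e\in E}}\left(G,\{\mc{N}^e\}_{e\in E}\right)\leq\min_{V_{s;t}} \sum_{e\in E:\{e\} \in \partial(V_{s;t})} p_{e} {\cal E}(\mc{N}^{e}),
\end{equation}
where the minimization runs over all $V_{s;t}\subset V$ with $s\in V_{s;t}$ and $t\in V\setminus V_{s;t}$. This step carries all the quantum-information content of the proposition, so everything afterwards is bookkeeping on the graph.

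Next I would fix an arbitrary such $V_{s;t}$ and rewrite the inner sum in terms of undirected capacities. Every undirected edge $\{vw\}\in\partial(V_{s;t})$ is the image of the directed edge $vw$, of the directed edge $wv$, or of both, and the directed edges of $E$ crossing the cut are precisely those whose undirected image lies in $\partial(V_{s;t})$. Grouping the directed-edge sum by undirected image and using the definition (\ref{cE}), together with the convention $p_{vw}{\cal E}(\mc{N}^{vw})=0$ when $vw\notin E$, yields
\begin{equation}
\sum_{e\in E:\{e\} \in \partial(V_{s;t})} p_{e} {\cal E}(\mc{N}^{e})=\sum_{\{vw\}\in \partial(V_{s;t})}c'_{\cal E}(\{vw\},p_{vw},p_{wv}).
\end{equation}
As this holds term by term for every cut, the two corresponding minimizations over $V_{s;t}$ agree. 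I expect this accounting---matching the directed sum over $E$ with the aggregated undirected capacities of $G'$ while correctly handling edges present in only one direction---to be the only delicate point, though it is a modest one.

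Finally I would apply the max-flow min-cut theorem (\ref{maxFlowminCut}) with the capacity assignment $c'=c'_{\cal E}$, identifying the undirected minimum $st$-cut with the maximum flow,
\begin{equation}
\min_{V_{s;t}}\sum_{\{vw\}\in \partial(V_{s;t})}c'_{\cal E}(\{vw\},p_{vw},p_{wv})=f^{s\to t}_{\max}(G',\{c'_{\cal E}(\{wv\},p_{vw},p_{wv})\}_{\{wv\}\in E'}).
\end{equation}
Chaining the three displays then gives the asserted inequality, completing the proof.
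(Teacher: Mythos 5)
Your proposal is correct and follows the same route the paper takes: invoke the cut-based bound of Supplementary Equation (\ref{eq:upbound}), identify the directed-edge sum over a cut with the aggregated undirected capacities $c'_{\cal E}$ of Supplementary Equation (\ref{cE}), and convert the minimum $st$-cut into the max-flow LP via Equation (\ref{maxFlowminCut}). The only difference is that you spell out the regrouping of directed edges by their undirected image explicitly, which the paper leaves implicit; this is a correct and welcome clarification rather than a deviation.
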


In order to upper bound capacities of the form of Supplementary Equation (\ref{cap2}), we can include an optimization over all usage frequencies $p_e$ into the optimization given by Supplementary Equations (\ref{LP2})-(\ref{LP2b}). To this end we treat the $p_e$ as variables, define
\begin{equation}\label{fbar}
\bar{f}^{s\to t}_{\max}(G',\{{\cal E}(\mc{N}^e)\}_{e\in E})=\max_{p_e\geq0,\sum_ep_e=1} f^{s\to t}_{\max}(G',\{c'_{\cal E}(\{wv\},p_{vw},p_{wv})\}_{\{wv\}\in E'}),
\end{equation}
and obtain

\begin{corollary}\label{Corr1}
For a network described by a finite directed graph $G$ and an undirected graph $G'$ as defined above, it holds
\begin{equation}
\mathcal{C}^{\theta}\left(G,\{\mc{N}^e\}_{e\in E}\right)\leq \bar{f}^{s\to t}_{\max}(G',\{{\cal E}(\mc{N}^e)\}_{e\in E}),
\end{equation}
for any entanglement measure $\cal E$ satisfying properties 1 and 2.
\end{corollary}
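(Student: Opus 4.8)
The plan is to obtain the frequency-optimized bound as an immediate consequence of the fixed-frequency bound already established in Proposition \ref{uni:upper}, by maximizing both sides over the admissible usage frequencies. The conceptual work is already done: Proposition \ref{uni:upper} combines properties 1 and 2 of $\mathcal{E}$ with the min-cut upper bound Eq.~(\ref{eq:upbound}) and the max-flow min-cut theorem to show that, for each \emph{fixed} frequency vector $\{p_e\}_{e\in E}$ with $p_e\geq0$,
\[
\mathcal{C}^{\theta}_{\{p_e\}_{e\in E}}\left(G,\{\mc{N}^e\}_{e\in E}\right)\leq f^{s\to t}_{\max}(G',\{c'_{\cal E}(\{wv\},p_{vw},p_{wv})\}_{\{wv\}\in E'}).
\]
What remains is simply to carry this pointwise inequality through the outer maximization over frequencies that defines $\mathcal{C}^\theta$ in Eq.~(\ref{cap2}).

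Concretely, I would fix an arbitrary feasible frequency vector $\{p_e\}_{e\in E}$ (that is, $p_e\geq0$ with $\sum_e p_e=1$) and chain the inequality of Proposition \ref{uni:upper} with the trivial observation that the LP value at this particular $\{p_e\}$ is at most its maximum over all feasible frequencies. By the definition Eq.~(\ref{fbar}), that maximum is exactly $\bar{f}^{s\to t}_{\max}(G',\{\mathcal{E}(\mc{N}^e)\}_{e\in E})$, so I obtain $\mathcal{C}^{\theta}_{\{p_e\}_{e\in E}}\leq \bar{f}^{s\to t}_{\max}$ for every feasible $\{p_e\}$. Since the right-hand side no longer depends on $\{p_e\}$, I can then take the maximum of the left-hand side over all feasible frequencies; by the definition Eq.~(\ref{cap2}) this maximum is precisely $\mathcal{C}^{\theta}(G,\{\mc{N}^e\}_{e\in E})$, and the claimed inequality follows.

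I expect essentially no obstacle beyond bookkeeping: the argument is just monotonicity of the maximum applied to a pointwise inequality, which is why the statement is phrased as a corollary. The only points worth a remark are that the feasible set $\{p_e\geq0,\ \sum_e p_e=1\}$ is the compact standard simplex, and that the LP value $f^{s\to t}_{\max}$ depends continuously (indeed piecewise-linearly) on the edge capacities $c'_{\cal E}$, which are themselves linear in the $p_e$ by Eq.~(\ref{cE}); this guarantees that the outer maxima are attained, so that writing $\max$ rather than $\sup$ in Eqs.~(\ref{cap2}) and (\ref{fbar}) is justified. No further properties of $\mathcal{E}$ are invoked at this stage, since properties 1 and 2 were already consumed in the proof of Proposition \ref{uni:upper}.
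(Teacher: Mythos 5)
Your argument is correct and is exactly the route the paper takes: the corollary is obtained by applying the fixed-frequency bound of Proposition \ref{uni:upper} pointwise on the simplex of usage frequencies and then maximizing both sides, with the right-hand side maximum being the definition of $\bar{f}^{s\to t}_{\max}$ in Eq.~(\ref{fbar}). Your added remark on compactness and continuity justifying the use of $\max$ rather than $\sup$ is a harmless extra observation not spelled out in the paper.
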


\subsubsection{Lower bounding the capacity}

For the lower bound, we set the edge capacities in the LP given by Supplementary Equations (\ref{LP2})-(\ref{LP2b}) to
\begin{equation}\label{cQ}
c'_{Q^\leftrightarrow}(\{wv\}, p_{vw},p_{wv})=p_{wv}{Q^\leftrightarrow}(\mc{N}^{wv})+p_{vw}{Q^\leftrightarrow}(\mc{N}^{vw}),
\end{equation}
for all $\{vw\}\in E'$, where $p_{vw}{Q^\leftrightarrow}(\mc{N}^{vw})=0$ if $vw\notin E$. We can now show

\begin{proposition}\label{lowerbound}
For a network described by a finite directed graph $G=(E,V)$ it holds\begin{equation}
\mathcal{Q}_{\{p_e\}_{e\in E}}\left(G,\{\mc{N}^e\}_{e\in E}\right)\geq f^{s\to t}_{\max}(G',\{c'_{Q^\leftrightarrow}(\{wv\},p_{vw},p_{wv})\}_{\{vw\}\in E'}),
\end{equation}
where $G'$ is defined above.
\end{proposition}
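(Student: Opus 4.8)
The plan is to exhibit an explicit achievability scheme, namely the aggregated repeater protocol of \cite{AK16} run with the fixed frequencies $\{p_e\}_{e\in E}$, and to show that its rate converges to the flow value on the right-hand side. The bridge between the (possibly fractional) LP optimum and a physical count of Bell pairs is the integrality of the max-flow problem, which I invoke after scaling the whole construction by $n$.

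First I would set up the distillation phase. For each directed edge $vw$ I use the channel $\mc{N}^{vw}$ exactly $n_{vw}=np_{vw}$ times and $\mc{N}^{wv}$ exactly $n_{wv}=np_{wv}$ times. Fixing any per-edge rate $R^{\leftrightarrow}(\mc{N}^{vw})<Q^{\leftrightarrow}(\mc{N}^{vw})$, the definition of the two-way assisted quantum capacity guarantees $\lfloor n_{vw}R^{\leftrightarrow}(\mc{N}^{vw})\rfloor$ output pairs, each within trace distance $\delta_n$ of a perfect Bell state, with $\delta_n\to0$ as $n_{vw}\to\infty$. Pooling the pairs produced in both directions of every edge yields a Bell network, i.e. an undirected unit-capacity multigraph in which the number of parallel edges joining $v$ and $w$ is asymptotically $n\,c'_{Q^{\leftrightarrow}}(\{vw\},p_{vw},p_{wv})$, matching the capacities of Supplementary Equation (\ref{cQ}) up to the scaling by $n$ and the floor functions.

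Next comes the swapping phase together with the counting and the limit. In the Bell multigraph I extract a maximal family of edge-disjoint $s$--$t$ paths; by Menger's theorem their number equals the minimum $s$--$t$ edge cut, and since all edge capacities are integers (unit), the integral max-flow min-cut theorem identifies this with the value of the max-flow LP on the multigraph. Each path is a chain of Bell pairs that I fuse into a single $s$--$t$ Bell pair by entanglement swapping (Bell measurements at the intermediate repeaters), so the number of Bell pairs shared by Alice and Bob equals the number of edge-disjoint paths. Dividing this count by $n$ and letting $n\to\infty$, the floor functions and the sub-capacity slack $Q^{\leftrightarrow}-R^{\leftrightarrow}$ become negligible; since the max-flow LP is positively homogeneous, hence continuous, in its edge capacities, the rate converges to $f^{s\to t}_{\max}(G',\{c'_{Q^{\leftrightarrow}}(\{wv\},p_{vw},p_{wv})\}_{\{wv\}\in E'})$ of Supplementary Equation (\ref{LP2}).

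Two points need care, and the second is where I expect the real obstacle. The fractional-to-integer passage --- the physical statement that there is no ``half a Bell pair'' --- is resolved precisely by the scaling with $n$: rounding the capacities $n\,c'$ down to integers costs a vanishing fraction of the flow, and the integrality theorem certifies that the integer-constrained flow loses nothing asymptotically against the LP optimum. The genuinely delicate step is error control. Each of the $N=O(n)$ final pairs inherits error from at most $|V|-1$ swaps of near-perfect pairs, so its trace-distance deviation is $O(|V|\,\delta_n)$, and by subadditivity of trace distance over tensor factors the $N$-fold product deviates from the target $\Phi^{2^{N}}_{M_sM_t}$ by at most $N\cdot O(|V|\,\delta_n)$. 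Because $N$ grows linearly in $n$, I must ensure $\delta_n=o(1/n)$; this is supplied by the exponential decay of the distillation error for rates strictly below capacity (error exponents), which makes $\delta_n$ shrink far faster than the pair count grows. The total error then drops below any fixed $\epsilon$ for all large $n$, which is exactly the order of limits ($n\to\infty$ before $\epsilon\to0$) used in the definition of $\mathcal{Q}_{\{p_e\}_{e\in E}}$.
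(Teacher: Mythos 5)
Your overall strategy is the same as the paper's: run the aggregated repeater protocol of \cite{AK16} at the given frequencies, view the distilled Bell pairs as a unit-capacity multigraph, extract edge-disjoint $s$--$t$ paths, and fuse them by swapping. Your handling of the fractional-to-integer passage (scale by $n$, floor the capacities, invoke integrality of max-flow together with Menger's theorem, and absorb the $O(|E'|)$ rounding loss in the $n\to\infty$ limit) is a legitimate and in fact slightly cleaner substitute for the paper's Lemma \ref{intFlow}, which reaches the same conclusion by explicitly decomposing the optimal flow into path flows and rounding each path flow down to a multiple of $1/(kN)$.

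The genuine gap is in your error control, and it is not merely cosmetic. You take each distilled pair to be individually $\delta_n$-close to a Bell state and then charge the final state an error of order $N\cdot|V|\cdot\delta_n$ with $N=O(n)$, which forces you to demand $\delta_n=o(1/n)$ and to justify this by ``error exponents'' for two-way-assisted distillation. No such error-exponent result is available for $Q^{\leftrightarrow}$ of arbitrary channels, so as written this step does not go through; moreover, closeness of the single-pair marginals does not by itself control the distance of the joint output from a tensor power of Bell states, so the premise of your union bound is also shaky. The paper avoids all of this by using the block formulation of achievability that is built into the definition of $Q^{\leftrightarrow}$: for fixed $\epsilon>0$ and rate $R^{\leftrightarrow}_{\epsilon}(\mc{N}^e)$, the \emph{entire} block of $\lfloor m kNp_e\rfloor R^{\leftrightarrow}_{\epsilon}(\mc{N}^e)$ pairs on edge $e$ is produced within trace distance $\epsilon$ of the ideal tensor power, so the whole Bell network is $|E|\epsilon$-close with $|E|$ fixed; entanglement swapping is LOCC and cannot increase trace distance, so the final state is $|E|\epsilon$-close to the target independently of how many pairs are produced. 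One then takes $m\to\infty$ at fixed $\epsilon$, lets $\epsilon\to0$ so that $R^{\leftrightarrow}_{\epsilon}\to Q^{\leftrightarrow}$, and uses continuity of the LP optimum in the edge capacities (as you also note) before sending $k\to\infty$. Replacing your per-pair accounting by this block accounting repairs the proof without any appeal to error exponents.
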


For distillable channels, which is a subclass of Choi-stretchable channels, it has been shown that $Q^\leftrightarrow=E_R$ \cite{PLOB17}. Hence combining Proposition \ref{uni:upper} and Proposition \ref{lowerbound} provides
\begin{corollary}
For a network described by a finite directed graph $G=(E,V)$ with distillable channels $\mc{N}^e$ for $e\in E$ it holds
\begin{equation}
\mathcal{Q}_{\{p_e\}_{e\in E}}\left(G,\{\mc{N}^e\}_{e\in E}\right)= f^{s\to t}_{\max}(G',\{c'_{Q^\leftrightarrow}(\{wv\},p_{vw},p_{wv})\}_{\{vw\}\in E'}).
\end{equation}
\end{corollary}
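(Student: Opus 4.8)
The plan is to establish the equality by sandwiching the fixed-frequency quantum capacity $\mathcal{Q}_{\{p_e\}_{e\in E}}$ between the linear-program upper bound of Proposition \ref{uni:upper} and the matching lower bound of Proposition \ref{lowerbound}, and then collapsing the two bounds with the distillability identity $Q^\leftrightarrow=E_R$. Since every distillable channel is in particular Choi-stretchable, the relative entropy of entanglement $E_R$ is an admissible entanglement measure in Proposition \ref{uni:upper} (it satisfies properties 1 and 2). I would therefore first instantiate Proposition \ref{uni:upper} with $\mathcal{E}=E_R$, obtaining
\begin{equation}
\mathcal{Q}_{\{p_e\}_{e\in E}}\left(G,\{\mc{N}^e\}_{e\in E}\right)\leq f^{s\to t}_{\max}\left(G',\{c'_{E_R}(\{wv\},p_{vw},p_{wv})\}_{\{wv\}\in E'}\right).
\end{equation}

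For the reverse inequality I would invoke Proposition \ref{lowerbound} verbatim, which supplies
\begin{equation}
\mathcal{Q}_{\{p_e\}_{e\in E}}\left(G,\{\mc{N}^e\}_{e\in E}\right)\geq f^{s\to t}_{\max}\left(G',\{c'_{Q^\leftrightarrow}(\{wv\},p_{vw},p_{wv})\}_{\{vw\}\in E'}\right).
\end{equation}
The only thing left is to identify the two flow values on the right-hand sides. Here I would use the result of \cite{PLOB17} that $Q^\leftrightarrow(\mc{N}^e)=E_R(\mc{N}^e)$ for every distillable channel. Substituting this channel-wise equality into the definitions Supplementary Equations (\ref{cE}) and (\ref{cQ}) of the edge capacities gives $c'_{E_R}(\{wv\},p_{vw},p_{wv})=c'_{Q^\leftrightarrow}(\{wv\},p_{vw},p_{wv})$ for every $\{wv\}\in E'$, where the convention that a summand vanishes when the corresponding directed edge is absent is applied on both sides. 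Consequently the two linear programs share identical input data, so their optima coincide, and the upper and lower bounds pin $\mathcal{Q}_{\{p_e\}_{e\in E}}$ to the same value.

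The main, albeit mild, obstacle is the step that legitimizes using $E_R$ in the upper bound: property 2 (non-increase under amortization) for $E_R$ is established only for teleportation-simulable / Choi-stretchable channels, which is exactly why the distillability hypothesis is imposed here rather than the weaker assumptions used elsewhere. One should also be careful that the relevant identity is the channel-level statement $Q^\leftrightarrow=E_R$ entering the edge capacities, not merely the Choi-state identity $E_R(\sigma^e)=Q^\leftrightarrow(\mc{N}^e)$; once the edge capacities are seen to agree, the equality of the two flow optima is immediate, since a linear program depends only on its data. Note that no optimization over the usage frequencies $\{p_e\}_{e\in E}$ enters, as both bounds are stated at fixed frequencies.
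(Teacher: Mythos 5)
Your proposal is correct and follows essentially the same route as the paper: the paper also obtains the equality by combining Proposition \ref{uni:upper} (instantiated with $\mathcal{E}=E_R$, admissible since distillable channels are Choi-stretchable) with Proposition \ref{lowerbound}, and collapsing the two bounds via the identity $Q^\leftrightarrow(\mc{N}^e)=E_R(\mc{N}^e)$ from \cite{PLOB17}. Your additional remarks on matching the edge-capacity data of the two linear programs and on the channel-level versus Choi-state form of the identity are careful elaborations of steps the paper leaves implicit, not a different argument.
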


Before proving Proposition \ref{lowerbound}, we need to define a directed path from $s$ to $t$ in $G'$ as
\begin{equation}
P_{s\to t}=\{v_jv_{j+1}:i,j=1,\ldots,l-1,v_j\in V,\{v_jv_{j+1}\}\in E',v_1=s,v_{l}=t,v_j \neq v_i, i \neq j\}.
 \end{equation}
In a finite graph with nonzero flow $f^{s\to t}$, obtained from a solution $\{f_{vw},f_{wv} \}_{ \{vw\} \in E' }$ of the LP given by Supplementary Equations (\ref{LP2})-(\ref{LP2b}), we can always find a finite number $N$ of paths $P_{s\to t}^{(1)},...,P_{s\to t}^{(N)}$ whose flow consists of path-flows $f^{(1)},...,f^{(N)}$ such that $f^{s\to t}=\sum_{i=1}^Nf^{(i)}$  \cite{ford1956maximal}\footnote{The authors of \cite{ford1956maximal} use the notion of chains (i.e. undirected paths) linking $s$ and $t$. Here we add a direction such that they are directed from $s$ to $t$.}. It will be convenient to define for every path $P_{s\to t}^{(i)}$ and every edge $vw\in E$ the quantity
\begin{equation}\label{eq:fvwi}
f_{vw}^{(i)}:=\begin{cases}
&f^{(i)}\text{ if }vw\in P_{s\to t}^{(i)}\\&0\text{ else.}
\end{cases}
\end{equation}
Note that an edge $vw$ can be part of more than one paths. The sum of path-flows passing through the edge, however, has to be upper bounded by the edge flow $f_{vw}$. Hence it holds
\begin{equation}\label{eq:sumfvwi}
\sum_{i=1}^Nf_{vw}^{(i)}\leq f_{vw}.
\end{equation} 
The edge flow $f_{vw}$, in turn, is constraint by the capacity the constraint given by Supplementary Equation (\ref{LP2a}) \cite{ford1956maximal}. If all $f^{(i)}$ take integer values, there exist $f^{s\to t}=\sum_{i=1}^Nf^{(i)}$ \emph{edge-disjoint paths} from $s$ to $t$ in the multigraph $G_{\lfloor c' \rfloor}''$. We can now show:

\begin{lemma}\label{intFlow}
Let us assume we have a finite undirected graph $G'$ and $N$ the number of directed paths  from $s$ to $t$. Let $k,m\in\mathbb{N}$ and $c'(\{vw\})$ capacities that can depend on $m$ and $k$. Then we can obtain, in the unit-capacity multigraph $G''_{\lfloor mkNc'\rfloor}$, $F^{s\to t}$ edge-disjoint paths from $s$ to $t$, where
\begin{equation}
F^{s\to t}\geq mkN\left(f^{s\to t}_{\max}\left(G',\{c'(\{vw \}) \}_{\{vw\}\in E'}\right)-\frac{1}{k}\right).
\end{equation}
\end{lemma}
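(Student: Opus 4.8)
The plan is to convert the fractional optimal flow of the LP \eqref{LP2}--\eqref{LP2b} into many edge-disjoint integral paths by scaling and rounding, and then to control the loss incurred by the rounding. First I would fix an optimal solution $\{f_{vw},f_{wv}\}_{\{vw\}\in E'}$ attaining $f^{s\to t}_{\max}(G',\{c'(\{vw\})\}_{\{vw\}\in E'})$, assuming without loss of generality (by cancelling opposing flows on any edge, which only decreases $f_{vw}+f_{wv}$ while preserving both the value and flow conservation) that no edge carries flow in both directions. Invoking the flow-decomposition statement already recalled above \eqref{eq:fvwi}, I write this flow as $N$ directed path-flows $f^{(1)},\dots,f^{(N)}$ along paths $P^{(1)}_{s\to t},\dots,P^{(N)}_{s\to t}$ with $f^{s\to t}_{\max}=\sum_{i=1}^N f^{(i)}$, whose per-edge contributions obey $\sum_{i=1}^N f^{(i)}_{vw}\le f_{vw}$ as in \eqref{eq:sumfvwi}.

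Next I would scale the whole flow by the integer $mkN$ and round each path-flow down, routing $\big\lfloor mkN f^{(i)}\big\rfloor$ parallel copies along $P^{(i)}$. The crucial feasibility check is that these copies can be realised edge-disjointly inside $G''_{\lfloor mkN c'\rfloor}$. For each undirected edge $\{vw\}$ the total number of path-copies demanding it is
\[
\sum_{i:\,vw\in P^{(i)}}\big\lfloor mkN f^{(i)}\big\rfloor \;\le\; mkN\sum_{i:\,vw\in P^{(i)}} f^{(i)} \;\le\; mkN\,f_{vw}\;\le\; mkN\,c'(\{vw\}),
\]
where the last inequality uses $f_{vw}\le f_{vw}+f_{wv}\le c'(\{vw\})$. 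Since the left-hand side is an integer not exceeding the real number $mkN\,c'(\{vw\})$, it is in fact at most $\big\lfloor mkN\,c'(\{vw\})\big\rfloor$, i.e. at most the number of unit-capacity parallel edges present between $v$ and $w$ in $G''_{\lfloor mkN c'\rfloor}$. Hence a greedy assignment of distinct parallel copies to the path-copies produces $F^{s\to t}:=\sum_{i=1}^N\big\lfloor mkN f^{(i)}\big\rfloor$ genuinely edge-disjoint $s$--$t$ paths.

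Finally I would bound $F^{s\to t}$ from below using $\lfloor x\rfloor> x-1$ together with $m\ge 1$:
\[
F^{s\to t}=\sum_{i=1}^N\big\lfloor mkN f^{(i)}\big\rfloor
> \sum_{i=1}^N\big(mkN f^{(i)}-1\big)
= mkN\,f^{s\to t}_{\max}-N
\;\ge\; mkN\Big(f^{s\to t}_{\max}-\tfrac1k\Big),
\]
the last step using $N\le mN$. This is precisely the claimed inequality (in fact strictly). The only genuinely delicate point is the feasibility step, and I expect the main obstacle to be arguing cleanly that the $N$ independent roundings, summed over all paths sharing a given edge, never overload that edge beyond its floored capacity. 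This is settled by the elementary but essential observation that an integer bounded above by a real number $X$ is automatically bounded by $\lfloor X\rfloor$, which is exactly what allows the separate per-path floors to coexist inside the single floored edge capacity $\lfloor mkN c'\rfloor$.
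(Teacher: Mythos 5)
Your proposal is correct and follows essentially the same route as the paper's proof: decompose the optimal fractional flow into at most $N$ directed path-flows, round each down to an integer number of parallel path-copies, verify the floored edge-capacity constraint via the observation that an integer bounded by $mkN\,c'(\{vw\})$ is bounded by $\lfloor mkN\,c'(\{vw\})\rfloor$, and absorb the total rounding loss into the $mkN/k$ slack. The only (harmless) difference is that you round $mkN f^{(i)}$ directly, losing at most $1$ per path, whereas the paper rounds $kN f^{(i)}$ to an integer $\bar{n}^{(i)}$ and sets $F^{(i)}=m\bar{n}^{(i)}$, losing up to $m$ per path; both losses are at most $mN$, which is exactly what the stated bound tolerates.
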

\begin{proof}
Let $m,k\in\mathbb{N}$ and $\{f_{vw}\}_{\{vw\}\in E'}$ be the set of edge flows maximizing the LP given by Supplementary Equations (\ref{LP2})-(\ref{LP2b}) for $G'$ for capacities $c'(\{vw\})$, which can depend on $m$ and $k$. In particular all $\{f_{vw}\}_{\{vw\}\in E'}$ can depend on $m$ and $k$. As $G'$ is finite, we can always find a finite number $N$ of directed paths $P_{s \to t}^{(i)}$ from $s$ to $t$.  For each path we can assign a path-flow $f^{(i)}\geq0$, such that $f^{s\to t}_{\max}\left(G',\{c'(\{e\})\}_{\{e\}\in E'}\right)=\sum_{i=1}^Nf^{(i)}$ \cite{ford1956maximal}. By Supplementary Equation (\ref{eq:sumfvwi}) it holds for every edge $\{vw\}\in E'$ that
\begin{align}
&f_{wv}\geq\sum_{i=1}^Nf^{(i)}_{wv}=\sum_{i=1}^Nf^{(i)}\delta(i,wv),\label{eq20}\\
&f_{vw}\geq\sum_{i=1}^Nf^{(i)}_{vw}=\sum_{i=1}^Nf^{(i)}\delta(i,vw),\label{eq21}
\end{align}
where $f^{(i)}_{vw}$ is defined by Supplementary Equation (\ref{eq:fvwi}) and
\begin{equation}
\delta(i,uv)=\begin{cases}
1 \text{ if }uv\in P^{(i)}\\ 0\text{ else.}
\end{cases}
\end{equation}
Then for each $f^{(i)}\geq0$ there exists $\bar{n}^{(i)}\in\mathbb{N}_0$ such that
\begin{equation}
f^{(i)}-\frac{1}{kN}\leq \frac{\bar{n}^{(i)}}{kN}\leq  f^{(i)}.
\end{equation}
Let us also define $F^{(i)}=mkN\frac{\bar{n}^{(i)}}{kN}=m\bar{n}^{(i)}$ and $F_{vw}=\sum_{i=1}^NF^{(i)}\delta(i,vw)$. As the $f_{vw}$ are feasible solutions of the LP given by Supplementary Equations (\ref{LP2})-(\ref{LP2b}), from Supplementary Equations (\ref{eq20}) and (\ref{eq21}) it holds for any edge $\{vw\}\in E'$ that
\begin{equation}
F_{wv}+F_{vw}\leq \lfloor mkN( f_{wv}+f_{vw})\rfloor \leq \lfloor mkNc'(\{vw\})\rfloor
\end{equation}
and for all $w\in V$ with $w\neq s,t$ that
\begin{equation}\label{FlowCons}
\sum_{v:\{vw\}\in E'}\left( F_{vw} - F_{wv}\right)=\sum_{i=1}^NF^{(i)}\sum_{v:\{vw\}\in E'}\left(\delta(i,vw)  -\delta(i,wv)\right)=0.
\end{equation}
To see the last equality, let us consider a path $P_{s\to t}^{(i)}$. If the path does not pass through vertex $w$, $\delta(i,vw)$ and $\delta(i,vw)$ vanish for all vertices $v$. Since $w\neq s,t$, if the path does pass through $w$, there will be two distinct vertices $v_0$ and $v_1$ such that $\delta(i,v_0w)=1$ and $\delta(i,wv_1)=1$. By definition, the path can only pass through $w$ once, and hence $\delta(i,vw)$ vanishes for all $v\neq v_{0,1}$. Hence $\sum_{v:\{vw\}\in E'}\left(\delta(i,vw)  -\delta(i,wv)\right)=0$ for every $i$ and $w\neq s,t$. 

Hence, $\{F_{vw}\}_{\{wv\}\in E'}$ is a feasible solution of the LP given by Supplementary Equations (\ref{LP2})-(\ref{LP2b}) with capacities $\lfloor mkNc'(\{vw\})\rfloor$, providing a total flow of
\begin{equation}\label{**}
F^{s\to t}=\sum_{i=1}^NF^{(i)}\geq m kN\left(f^{s\to t}_{\max}\left(G',\{c'(\{e\})\}_{\{e\}\in E'}\right)-\frac{1}{k}\right).
\end{equation}
As any integer flow of value $F^{(i)}$ corresponds to $F^{(i)}$ edge-disjoint paths in $G''_{\lfloor mkNc'\rfloor}$, we can conclude that there are $F^{s\to t}$ edge-disjoint paths from $s$ to $t$.
\end{proof}

In order to prove Proposition \ref{lowerbound}, we will now show that, given the solution of the LP given by Supplementary Equations (\ref{LP2})-(\ref{LP2b}), we can physically construct a network of Bell states corresponding to a graph where we can apply Lemma \ref{intFlow}. 

\begin{proof} (of Proposition \ref{lowerbound})
Let $k,m\in\mathbb{N}$. Let $G'$ be the graph as defined above and $N$ the number of directed paths $P_{s \to t}^{(i)}$ from $s$ to $t$. Without loss of generality we can assume $N\geq1$. Following \cite{AK16}, we can employ the following \emph{aggregated repeater protocol}: Across each channel $\mc{N}^e$, we perform Bell state generation protocols assisted two-way classical communication, using the channel $\lfloor m kNp_e\rfloor$ times. This provides us with states $\rho^e$ across $e$ such that $\left\|\rho^e-{\Phi_e^+}^{\otimes \lfloor m kNp_e\rfloor R_{\epsilon}^\leftrightarrow(\mc{N}^e)}\right\|_1\leq\epsilon$, where $\Phi_e^+ :=\ket{\Phi^2} \bra{\Phi^2}_e$ is a maximally entangled state across $e$ and $R_{\epsilon}^\leftrightarrow(\mc{N}^e)$ denotes the rate at which Bell states are generated across edge $e$ with some error $\epsilon>0$. Hence for the entire network we have
\begin{equation}
\left\|\bigotimes_{e\in E}\rho^e-\bigotimes_{e\in E}{\Phi_e^+}^{\otimes \lfloor m kNp_e\rfloor R_{\epsilon}^\leftrightarrow(\mc{N}^e)}\right\|_1\leq|E|\epsilon.
\end{equation}
It further holds for any $e\in E$ with $p_e>0$ and any $k\geq1$ that
\begin{equation}
 \lfloor m kNp_e\rfloor\geq  m kNp_e-1\geq \left(m-\frac{1}{p_e}\right)kNp_e\geq \left(m-\left\lceil\frac{1}{p_e}\right\rceil\right)kNp_e\geq \left(m-\tilde{m}\right)kNp_e,
\end{equation}
where we have defined $\tilde{m}:=\max_{e\in E, p_e>0}\left\lceil\frac{1}{p_e}\right\rceil$. Hence the state $\bigotimes_{e\in E}{\Phi_e^+}^{\otimes \lfloor m kNp_e\rfloor R_{\epsilon}^\leftrightarrow(\mc{N}^e)}$ can be transformed into $\bigotimes_{e\in E}{\Phi_e^+}^{\otimes \lfloor (m-\tilde{m}) kNp_e R_{\epsilon}^\leftrightarrow(\mc{N}^e)\rfloor}$ by removal of Bell pairs. Let us from now on assume that $m\geq\tilde{m}$. The resulting state can be interpreted as a network of Bell states, which can be described by the unit-capacity multigraph $G''_{\lfloor (m-\tilde{m}) kNc'_{R_{\epsilon}^\leftrightarrow}\rfloor}$, where we have defined $c'_{R_{\epsilon}^\leftrightarrow}$ by Supplementary Equation (\ref{cE}) with $\bar{c}(vw)=R_{\epsilon}^\leftrightarrow(\mc{N}^{vw})$. Let us note that $R_{\epsilon}^\leftrightarrow(\mc{N}^{vw})$ can depend on $m$ and $k$. By Lemma \ref{intFlow} there exist
\begin{equation}
F_{\epsilon}^{s\to t}\geq (m-\tilde{m})  kN\left(f^{s\to t}_{\max}\left(G',\{c'_{R_{\epsilon}^\leftrightarrow}(\{wv\},p_{vw},p_{wv})\}_{\{vw\}\in E'}\right)-\frac{1}{k}\right)
\end{equation}
edge-disjoint paths from $s$ to $t$ in $G''_{\lfloor(m-\tilde{m}) kNc'_{R_{\epsilon}^\leftrightarrow}\rfloor}$. Each edge-disjoint path corresponds to a chain of Bell states from $s$ to $t$. By means of entanglement swapping, we can connect these chains, providing us with a rate of entanglement generation between $s$ and $t$ of

\begin{equation}\label{rate}
\frac{F_{\epsilon}^{s\to t}}{m kN}\geq \left(1-\frac{\tilde{m}}{m}\right) f^{s\to t}_{\max}\left(G',\{c'_{R_{\epsilon}^\leftrightarrow}(\{wv\},p_{vw},p_{wv})\}_{\{vw\}\in E'}\right)-\frac{1}{k}.
\end{equation}
Taking the limit of $m\to \infty$ followed by the limit of $\epsilon \to 0$, it holds 
\begin{equation}
\lim_{\epsilon\to0}\lim_{m\to\infty}R_{\epsilon}^\leftrightarrow(\mc{N}^e)=Q^\leftrightarrow(\mc{N}^e).
\end{equation}
Note that the  two-way assisted quantum capacities $Q^\leftrightarrow(\mc{N}^e)$ no longer depend on $m$ and $k$. Using the fact that the optimal value of the objective of a parametric linear program of the form of Supplementary Equations (\ref{LP2})-(\ref{LP2b}) is a continuous function of the parameters \cite{meyer1979continuity}, we can see that 
\begin{equation}
 \mathcal{Q}_{\{p_e\}_{e\in E}}\left(G,\{\mc{N}^e\}_{e\in E}\right)\geq\lim_{\epsilon\to0}\lim_{m\to\infty}\frac{F_{\epsilon}^{s\to t}}{m kN}\geq   f^{s\to t}_{\max}(G',\{c'_{Q^\leftrightarrow}(\{wv\},p_{vw},p_{wv})\}_{\{vw\}\in E'})-\frac{1}{k}.
\end{equation}
Taking the limit $k\to\infty$ finishes the proof.
\end{proof}

In order to lower bound capacities of the form of Supplementary Equation (\ref{cap2}), we can again include an optimization over all usage frequencies $p_e$ into the optimization given by Supplementary Equations (\ref{LP2})-(\ref{LP2b}), defining
\begin{equation}\label{fbarQ}
\bar{f}^{s\to t}_{\max}(G',\{{Q^{\leftrightarrow}}(\mc{N}^e)\}_{e\in E})=\max_{p_e\geq0,\sum_ep_e=1} f^{s\to t}_{\max}(G',\{c'_{Q^{\leftrightarrow}}(\{wv\},p_{vw},p_{wv})\}_{\{wv\}\in E'}).
\end{equation}
Then we have the following:
\begin{corollary}\label{Corr2}
For a network described by a finite directed graph $G$ and an undirected graph $G'$ as defined above, it holds
\begin{equation}
\mathcal{Q}\left(G,\{\mc{N}^e\}_{e\in E}\right)\geq \bar{f}^{s\to t}_{\max}(G',\{Q^{\leftrightarrow}(\mc{N}^e)\}_{e\in E}).
\end{equation}
Hence, for distillable channels it holds
\begin{equation}
\mathcal{Q}\left(G,\{\mc{N}^e\}_{e\in E}\right)=\bar{f}^{s\to t}_{\max}(G',\{Q^{\leftrightarrow}(\mc{N}^e)\}_{e\in E}).
\end{equation}
\end{corollary}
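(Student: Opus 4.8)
The plan is to obtain the first (lower-bound) inequality by maximizing the per-frequency estimate of Proposition \ref{lowerbound} over all admissible usage frequencies, and then to upgrade it to an equality for distillable channels by sandwiching the capacity between this lower bound and the upper bound supplied by Corollary \ref{Corr1}.

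First I would recall that, by the definition in Supplementary Equation (\ref{cap2}) together with $\mc{Q}:=\mc{C}^\Phi$, the capacity $\mathcal{Q}(G,\{\mc{N}^e\}_{e\in E})$ equals $\max_{p_e\geq0,\sum_ep_e=1}\mathcal{Q}_{\{p_e\}_{e\in E}}(G,\{\mc{N}^e\}_{e\in E})$. Proposition \ref{lowerbound} already furnishes, for each fixed frequency assignment, the pointwise bound $\mathcal{Q}_{\{p_e\}_{e\in E}}\geq f^{s\to t}_{\max}(G',\{c'_{Q^\leftrightarrow}(\{wv\},p_{vw},p_{wv})\}_{\{vw\}\in E'})$. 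Since this holds for every feasible $\{p_e\}$, taking the maximum over feasible frequencies on both sides preserves the inequality: evaluating the left-hand maximum at the maximizer of the right-hand side shows it dominates that value. By the definition of $\bar{f}^{s\to t}_{\max}$ in Supplementary Equation (\ref{fbarQ}) as exactly this maximum, I obtain $\mathcal{Q}(G,\{\mc{N}^e\}_{e\in E})\geq\bar{f}^{s\to t}_{\max}(G',\{Q^{\leftrightarrow}(\mc{N}^e)\}_{e\in E})$, which is the first assertion.

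For the second assertion I would invoke the identity $Q^\leftrightarrow(\mc{N}^e)=E_R(\mc{N}^e)$ for distillable channels \cite{PLOB17}, together with the fact, recorded just before Corollary \ref{Corr1}, that $E_R$ satisfies properties 1 and 2 for Choi-stretchable channels (of which distillable channels form a subclass). Since a maximally entangled target is a trivial private state, Corollary \ref{Corr1} applies with the choice $\mathcal{E}=E_R$, giving $\mathcal{Q}(G,\{\mc{N}^e\}_{e\in E})=\mathcal{C}^\Phi(G,\{\mc{N}^e\}_{e\in E})\leq\bar{f}^{s\to t}_{\max}(G',\{E_R(\mc{N}^e)\}_{e\in E})$. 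Substituting $E_R=Q^\leftrightarrow$ edge by edge makes the right-hand side coincide with the lower bound established above, and combining the two inequalities forces equality.

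I do not anticipate a substantive obstacle, as all the heavy lifting resides in Proposition \ref{lowerbound} (the aggregated-repeater construction) and Corollary \ref{Corr1} (the amortization and min-cut argument). The only points demanding care are the elementary observation that maximizing over a shared parameter set preserves a pointwise inequality, and the verification that $E_R$ meets properties 1 and 2 and that the per-edge substitution $c'_{E_R}=c'_{Q^\leftrightarrow}$ is legitimate for distillable channels; both reduce to citing \cite{PLOB17} and the list of admissible entanglement measures.
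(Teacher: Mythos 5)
Your proposal is correct and follows essentially the same route as the paper: the first inequality is obtained by maximizing the pointwise bound of Proposition \ref{lowerbound} over usage frequencies (which is all the paper does, implicitly, when it defines $\bar{f}^{s\to t}_{\max}$ in Supplementary Equation (\ref{fbarQ}) and states the corollary), and the equality for distillable channels follows by sandwiching with Corollary \ref{Corr1} applied with $\mathcal{E}=E_R=Q^{\leftrightarrow}$, exactly as the paper combines its upper and lower bounds. Your added care about $E_R$ satisfying properties 1 and 2 for Choi-stretchable channels and about maximally entangled states being private states matches the paper's own remarks preceding these results.
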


\subsection{Multiple pairs of users}\label{sec:multiuni}
We now move on to the scenario of multi-pairs of users $s_1 \cdots  s_r$ and $t_1 \cdots  t_r$ who wish to establish maximally entangled states or private states concurrently. In this scenario the target state is of the form
$\theta^{d_1\cdots d_r}_{X_{s_1}X_{t_1}  \cdots  X_{s_r}X_{t_r}}=\bigotimes_{i=1}^r\theta^{d_i}_{X_{s_i}X_{t_i}}$. The states $\theta^{d_i}_{X_{s_i}X_{t_i}}$ can be maximally entangled states or private states. There are several ways to measure the performance of a protocol performing concurrent entanglement distribution. We consider the following three figures of merit: (1) the total multi-pair capacity, i.e. the sum of achievable rates over all user pairs. The drawback of this approach is that it does not distinguish between fair protocols where each user pair gets a similar amount of the resource and unfair ones where some user pairs get more then others. This drawback can be overcome by using our second figure of merit: (2) worst-case multi-pair capacity, i.e. the least achievable rate that is guaranteed for any user pair. Finally we consider (3) the case where we assign weights to each user pair independently. 

For case (1), we define the \emph{total multi-pair network capacity with respect to target state $\theta^{d_1 \cdots  d_r}_{X_{s_1}X_{t_1}  \cdots  X_{s_r}X_{t_r}}$ with fixed average usage frequencies $\{p_e\}_{e\in E}$} as 
\begin{equation}\label{cap3a}
\mathcal{C}^{\theta,\text{total}}_{\{p_e\}_{e\in E}}\left(G,\{\mc{N}^e\}_{e\in E}\right)=\lim_{\epsilon\to0}\lim_{n\to\infty}\sup_{\Lambda(n,\epsilon, \{p_e\}_{e\in E})}\left\{\frac{\sum_{i=1}^r\langle\log d_i^{(k)}\rangle_{k}}{n}:\left\|\rho_{X_{s_1}X_{t_1}  \cdots  X_{s_r}X_{t_r}}^{(n,k)}-\theta^{d^{(k)}_1 \cdots  d^{(k)}_r}_{X_{s_1}X_{t_1}  \cdots  X_{s_r}X_{t_r}}\right\|_1\leq\epsilon\right\},
\end{equation}
where, again, the supremum is over all adaptive $(n,\epsilon, \{p_e\}_{e\in E})$ protocols $\Lambda$ and $k=(k_1,  \ldots  ,k_{m+1})$ is a vector of outcomes of the $m+1$ LOCC rounds in $\Lambda$, the averaging is over all those outcomes and $\rho_{X_{s_1} X_{t_1} \ldots X_{s_r} X_{t_r} } ^{(n,k)}  $ is the final state of $\Lambda$ for given outcomes $k$. For case (2), we define the \emph{worst-case multi-pair network capacity with respect to target state $\theta^{d_1 \cdots  d_r}_{X_{s_1}X_{t_1}  \cdots  X_{s_r}X_{t_r}}$ with fixed average usage frequencies $\{p_e\}_{e\in E}$} as 
\begin{equation}\label{cap3}
\mathcal{C}^{\theta,\text{worst}}_{\{p_e\}_{e\in E}}\left(G,\{\mc{N}^e\}_{e\in E}\right)=\lim_{\epsilon\to0}\lim_{n\to\infty}\sup_{\Lambda(n,\epsilon, \{p_e\}_{e\in E})}\min_{i\in\{1 \cdots  r\}}\left\{\frac{\langle\log d_i^{(k)}\rangle_{k}}{n}:\left\|\rho_{X_{s_1}X_{t_1}  \cdots  X_{s_r}X_{t_r}}^{(n,k)}-\theta^{d^{(k)}_1 \cdots  d^{(k)}_r}_{X_{s_1}X_{t_1}  \cdots  X_{s_r}X_{t_r}}\right\|_1\leq\epsilon\right\}.
\end{equation}
In scenario (3), where some user pairs are more important than others, we can also assign  nonnegative weights $q_1,\cdots,q_r$, with $\sum_iq_i=1$, to user pairs $(s_1,t_1), \cdots  ,(s_r,t_r)$ and define a \emph{weighted multi-pair network capacity} as
\begin{equation}\label{cap3c}
\mathcal{C}^{\theta,q_1,\cdots,q_r}_{\{p_e\}_{e\in E}}\left(G,\{\mc{N}^e\}_{e\in E}\right)=\lim_{\epsilon\to0}\lim_{n\to\infty}\sup_{\Lambda(n,\epsilon, \{p_e\}_{e\in E})}\left\{\frac{\sum_{i=1}^rq_i\langle\log d_i^{(k)}\rangle_{k}}{n}:\left\|\rho_{X_{s_1}X_{t_1}  \cdots  X_{s_r}X_{t_r}}^{(n,k)}-\theta^{d^{(k)}_1 \cdots  d^{(k)}_r}_{X_{s_1}X_{t_1}  \cdots  X_{s_r}X_{t_r}}\right\|_1\leq\epsilon\right\}.
\end{equation}
In the case where $q_i=1/r$ for all $i=1,...,r$, Supplementary Equation (\ref{cap3c}) reduces to Supplementary Equation (\ref{cap3a}), up to a normalization factor. Again, we can optimize over usage frequencies as
\begin{align}\label{cap4}
&\mathcal{C}^{\theta,\text{total}}\left(G,\{\mc{N}^e\}_{e\in E}\right)=\max_{p_e\geq0,\sum_ep_e=1}\mathcal{C}^{\theta,\text{total}}_{\{p_e\}_{e\in E}}\left(G,\{\mc{N}^e\}_{e\in E}\right),\\
&\mathcal{C}^{\theta,\text{worst}}\left(G,\{\mc{N}^e\}_{e\in E}\right)=\max_{p_e\geq0,\sum_ep_e=1}\mathcal{C}^{\theta,\text{worst}}_{\{p_e\}_{e\in E}}\left(G,\{\mc{N}^e\}_{e\in E}\right),\\
&\mathcal{C}^{\theta,q_1,\cdots,q_r}\left(G,\{\mc{N}^e\}_{e\in E}\right)=\max_{p_e\geq0,\sum_ep_e=1}\mathcal{C}^{\theta,q_1,\cdots,q_r}_{\{p_e\}_{e\in E}}\left(G,\{\mc{N}^e\}_{e\in E}\right).
\end{align}
If the target state is a product of maximally entangled (or private) states, we use the notation $\mathcal{C}^{\bigotimes_i\Phi_i}=\mc{Q}$ (or $\mathcal{C}^{\bigotimes_i\gamma_i}=\mc{P}$) and speak of quantum (or private) multi-pair network capacities.

\subsubsection{Upper bounding the capacity}
Several upper bounds on various multi-user pair capacities have been obtained \cite{AML16,pirandola2016capacities,bauml2017fundamental}. In particular it follows as a special case of Theorem 2 in \cite{bauml2017fundamental} (and also from equation (4) in \cite{AML16}) that for an $(m,\epsilon, \{p_e\}_{e\in E})$ key generation protocol, it holds for every $V'\subset V$ that
\begin{equation}\label{21}
\frac{1}{m}\sum_{i=1}^r\delta_{i|\partial(V')}\langle\log d_i^{(k)}\rangle_{k}\leq\frac{1}{1-b\epsilon}\left(\sum_{e\in E:\{e\} \in\partial(V')} p_e E_{\sq}(\mc{N}^e)+g(\epsilon)\right),
\end{equation}
where $b>0$, $g(\epsilon)\to0$ as $\epsilon\to0$ and
\begin{equation}\delta_{i|\partial(V')}=\begin{cases}
1\text{ if }(s_i,t_i)\in(V'\times V\setminus V')\text{ or }(s_i,t_i)\in(V\setminus V'\times  V')\\
0\text{ else,}
\end{cases}
\end{equation}
where $\times$ denotes a cartesian product. Using the same reasoning as in \cite{AML16}, it is also possible to extend the results of \cite{rigovacca2018versatile} to the multi-pair case, which includes the bound given by Supplementary Equation (\ref{21}). Given a cut $\partial(V')$, we consider all user pairs $(s_i,t_i)$ such that $(s_i,t_i)\in(V'\times V\setminus V')$ or $(s_i,t_i)\in(V\setminus V'\times  V')$. Assume that in the protocol, each such pair $(s_i,t_i)$ obtains $\langle\log d_i^{(k)}\rangle_{k}$ target bits. As the target states are invariant under permutation of the parties we can, without loss of generality, relabel the users in the following way: If $(s_i,t_i)\in(V'\times V\setminus V')$, we define $(\tilde{s}_i,\tilde{t}_i):=(s_i,t_i)$, while if $(s_i,t_i)\in(V\setminus V'\times  V')$, we define $(\tilde{s}_i,\tilde{t}_i):=(t_i,s_i)$, such that it always holds $(\tilde{s}_i,\tilde{t}_i)\in(V'\times V\setminus V')$. Let us now assume a hypothetical scenario where all $\tilde{s}_i$ are in the same place and thus have the full control of their quantum systems, forming a `superuser' $\tilde{s}$ and similarly their partners $\tilde{t}_i$ in $V\setminus V'$ can form a `superuser' $\tilde{t}$. By combining their outcomes the pair of superusers $(\tilde{s},\tilde{t})$ can achieve at least $\sum_{i=1}^r\delta_{i|\partial(V')}\langle\log d_i^{(k)}\rangle_{k}$ target bits. As, by assumption, $\partial(V')$ separates superusers $\tilde{s}$ and $\tilde{t}$, the total number of obtainable target bits between the pair $(\tilde{s},\tilde{t})$ is upper bounded by $\sum_{e\in E:\{e\} \in\partial(V')} p_e {\cal E}(\mc{N}^e)$ for entanglement measures satisfying properties 1 and 2. Hence, we have
\begin{equation}\label{19}
\frac{1}{m}\sum_{i=1}^r\delta_{i|\partial(V')}\langle\log d_i^{(k)}\rangle_{k}\leq\frac{1}{1-f(\epsilon)}\left(\sum_{e\in E:\{e\} \in\partial(V')} p_e {\cal E}(\mc{N}^e)+g(\epsilon)\right),
\end{equation}
where $f(\epsilon)\to 0$ and $g(\epsilon)\to 0$ as $\epsilon\to0$. Combining Supplementary Equation (\ref{19}) with the results of \cite{aumann1998log}, we can, up to a factor $g_{\rm w}(r)=\mathcal{O}(\log r^*)$, upper bound the worst-case capacity using a worst-case multi-commodity flow optimization, which can be expressed as a linear program of the following form (see the Methods section for details):
\begin{align}
f^{\text{\text{worst}}}_{\max}\left(G',\{c'(\{vw\})\}_{\{wv\}\in E'}\right)=\textrm{max} &\ f\label{LP4}\\
\forall i:\ &\ f-\sum_{v:\{s_iv\}\in E'} \left(f^{(i)}_{s_iv}-f^{(i)}_{vs_i}\right)\leq0\label{LP4a}\\
\forall \{vw\}\in E':\ &\  \sum_{i=1}^r\left(f^{(i)}_{vw}+f^{(i)}_{wv}\right) \leq c'(\{wv\})\label{LP4b}\\
\forall i,\ \forall w\in V,w\neq s_i,t_i:\ &\ \sum_{v:\{vw\}\in E'} \left(f^{(i)}_{vw} -f^{(i)}_{wv}\right)=0,\label{LP4c}
\end{align}
where the maximization is over edge flows $f^{(i)}_{vw}\geq 0$ and $f^{(i)}_{wv}\geq 0$ for all edges $\{vw\}\in E'$. In the LP given by Supplementary Equations (\ref{LP4})-(\ref{LP4c}) we have introduced an additional variable $f$, which we maximize under the constraint that it lower bounds the source-sink flows for all pairs $i=1,...,r$, as expressed in  Supplementary Equation (\ref{LP4a}). This ensures that $f$ is the `worst-case' flow that can be achieved for any user pair. 

This allows us to prove the following result:
\begin{proposition}\label{multiuni:upper}
In a network described by a graph $G$ with associated undirected graph $G'$ it holds in a scenario of $r$ user pairs $(s_1,t_1),...,(s_r,t_r)$ for any entanglement measure with properties 1 and 2,
\begin{align}
\mc{P}^{\text{\emph{worst}}}_{\{p_e\}_{e\in E}}\left(G,\{\mc{N}^e\}_{e\in E}\right)\leq g_{\rm w}(r)f^{\text{\emph{worst}}}_{\max}\left(G',\{c'_{\cal E}(\{vw\},p_{vw},p_{wv})\}_{\{wv\}\in E'}\right),
\end{align}
where  $g_{\rm w}(r)=\mathcal{O}(\log r^*)$ and $f^{\text{\emph{worst}}}_{\max}$ is given by the LP defined by Supplementary Equations (\ref{LP4})-(\ref{LP4c}). 
\end{proposition}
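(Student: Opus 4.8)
The plan is to combine the single-cut key-rate bound already established in Supplementary Equation (\ref{19}) with the defining feature of the worst-case figure of merit---that every user pair is guaranteed at least the worst-case rate---so as to produce a bound in terms of the minimum cut ratio $R_{\min}$ of Supplementary Equation (\ref{cutratio}), and then to invoke the approximate max-flow min-cut-ratio relation, Supplementary Equation (\ref{maxFlowMinCutRatio}), to pass from $R_{\min}$ to the worst-case multi-commodity flow LP defined by Supplementary Equations (\ref{LP4})--(\ref{LP4c}).

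First I would fix an arbitrary adaptive $(n,\epsilon,\{p_e\}_{e\in E})$ protocol and denote by $W:=\min_{i}\langle\log d_i^{(k)}\rangle_{k}/n$ its worst-case rate. By definition of the minimum, $\langle\log d_i^{(k)}\rangle_{k}/n\ge W$ for every $i$, and since $d(\partial(V'))=\sum_{i=1}^r\delta_{i|\partial(V')}$, any cut $\partial(V')$ satisfies $\tfrac{1}{n}\sum_{i=1}^r\delta_{i|\partial(V')}\langle\log d_i^{(k)}\rangle_{k}\ge W\,d(\partial(V'))$. Inserting this into Supplementary Equation (\ref{19}) and using that, by the definition (\ref{cE}) of the undirected capacities, $\sum_{e\in E:\{e\}\in\partial(V')}p_e\,{\cal E}(\mc{N}^e)=\sum_{\{vw\}\in\partial(V')}c'_{\cal E}(\{vw\},p_{vw},p_{wv})$, I get for every pair-separating cut
\begin{equation}
W\le\frac{1}{1-f(\epsilon)}\,\frac{\sum_{\{vw\}\in\partial(V')}c'_{\cal E}(\{vw\},p_{vw},p_{wv})+g(\epsilon)}{d(\partial(V'))}.
\end{equation}

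Next, because this holds for every $V'$ separating at least one pair, I would minimize the right-hand side over all such cuts; bounding $g(\epsilon)/d(\partial(V'))\le g(\epsilon)$ via $d(\partial(V'))\ge1$ isolates exactly $R_{\min}(G',\{c'_{\cal E}(\{vw\},p_{vw},p_{wv})\}_{\{vw\}\in E'})$ and yields $W\le (R_{\min}+g(\epsilon))/(1-f(\epsilon))$. As this bound is independent of $n$ and of the particular protocol, taking the supremum over protocols and then the limits $n\to\infty$ and $\epsilon\to0$ (so that $f(\epsilon),g(\epsilon)\to0$) gives $\mc{P}^{\text{worst}}_{\{p_e\}_{e\in E}}(G,\{\mc{N}^e\}_{e\in E})\le R_{\min}(G',\{c'_{\cal E}(\{vw\},p_{vw},p_{wv})\}_{\{vw\}\in E'})$. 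Applying Supplementary Equation (\ref{maxFlowMinCutRatio}), which bounds $R_{\min}\le g_{\rm w}(r)\,f^{\text{worst}}_{\max}$ with $g_{\rm w}(r)=\mathcal{O}(\log r^*)$, then closes the argument.

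The main obstacle is the bookkeeping around the additive error term $g(\epsilon)$ together with the correct ordering of operations: the per-cut inequality must be minimized over cuts while $\epsilon$ is still finite (so that it refers to the actual protocol), and only then may one take $\epsilon\to0$; the step $g(\epsilon)/d(\partial(V'))\le g(\epsilon)$ is what allows the additive term to be discarded uniformly, and it relies precisely on $d(\partial(V'))\ge1$ for every pair-separating cut. The two dictionary identities---$d(\partial(V'))=\sum_i\delta_{i|\partial(V')}$ and the equality of the directed-edge cut sum with the undirected capacity $c'_{\cal E}$---are routine but should be stated explicitly, as they are what license the transition from the information-theoretic cut bound to the purely graph-theoretic flow-cut language.
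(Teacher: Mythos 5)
Your argument is correct and follows essentially the same route as the paper's proof: both combine the per-cut bound of Supplementary Equation (\ref{19}) with the elementary fact that the minimum rate is at most the average of the rates over the pairs separated by a cut (your rearrangement $\sum_i\delta_{i|\partial(V')}\langle\log d_i^{(k)}\rangle_k\ge W\,d(\partial(V'))$ is the same inequality), minimize over pair-separating cuts to isolate $R_{\min}$ via the unit-demand identity $\sum_i\delta_{i|\partial(V')}=d(\partial(V'))$, and then invoke the flow-cut gap of Supplementary Equation (\ref{maxFlowMinCutRatio}). Your explicit tracking of the $g(\epsilon)$ term and of the order of the cut-minimization versus the $\epsilon\to0$ limit is slightly more careful than the paper's one-line treatment, but it is the same proof.
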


\begin{proof}
It holds for any cut with $\delta_{i|\partial(V')}>0$ for at least one user pair,
\begin{equation}
\min_{j\in\{1 \cdots  r\}}\langle\log d_j^{(k)}\rangle_{k}\leq\frac{\sum_{i=1}^r\delta_{i|\partial(V')}\langle\log d_i^{(k)}\rangle_{k}}{\sum_{i=1}^r\delta_{i|\partial(V')}}.
\end{equation}
Using Supplementary Equation (\ref{19}) and taking the limit $m\to\infty$ and $\epsilon\to0$, we obtain
\begin{align}\label{20}
&\mc{P}^{\text{worst}}_{\{p_e\}_{e\in E}}\left(G,\{\mc{N}^e\}_{e\in E}\right)\leq\min_{V'}\frac{\sum_{e\in E:\{e\} \in\partial(V')} p_e {\cal E}(\mc{N}^e)}{\sum_{i=1}^r\delta_{i|\partial(V')}}=R_{\min} \left(G',\{c'_{\cal E}(\{vw\},p_{vw},p_{wv})\}_{\{wv\}\in E'}\right),
\end{align}
where we have used that in the case of unit demands $\sum_{i=1}^r\delta_{i|\partial(V')}=d(\partial(V'))$. Application of the results of \cite{aumann1998log} (equation (26) in the Methods section) finishes the proof.
\end{proof}

Let us now consider the total throughput scenario. It follows from Theorem 2 in \cite{bauml2017fundamental} that for every multicut $\{S\}\leftrightarrow \{T\}$ (See the Methods section for the definition) it holds \footnote{Let us note that in Theorem 2 of \cite{bauml2017fundamental} we have used the multipartite squashed entanglement with respect to the partition defined by the multicut. However, as every channel $\mc{N}^e$ is a point-to-point channel linking only two parts of the partition, it reduces to the usual bipartite squashed entanglement. This becomes obvious from the squashed entanglement with respect to a partition defined in Eq. (13)-(15) in \cite{bauml2017fundamental}.}

\begin{equation}\label{multicutversion}
\frac{1}{m}\sum_{i=1}^r\langle\log d_i^{(k)}\rangle_{k}\leq\frac{1}{1-b\epsilon}\left(\sum_{e\in E:\{e\} \in \{S\}\leftrightarrow \{T\}} p_e E_{\sq}(\mc{N}^e)+g(\epsilon)\right).
\end{equation}

As was shown in \cite{garg1996approximate}, we can, up to a factor $g_{\rm t}(r)$ of order $\mathcal{O}(\log r)$, upper bound the minimum multicut by a multi-commodity flow optimization, which can be formulated as a linear program of the following form (See the Methods section for details):
\begin{align}
f^{\text{total}}_{\max}\left(G',\{c'(\{vw\})\}_{\{vw\}\in E'}\right)=\textrm{max} &\ \sum_{i=1}^r\sum_{v:\{s_iv\}\in E'} \left(f^{(i)}_{s_iv}-f^{(i)}_{vs_i}\right)\label{LP3}\\
\forall \{vw\}\in E':\ &\  \sum_{i=1}^r\left(f^{(i)}_{vw}+f^{(i)}_{wv}\right) \leq c'(\{vw\})\label{LP3a}\\
\forall i,\ \forall w\in V,w\neq s_i,t_i:\ &\ \sum_{v:\{vw\}\in E'} \left(f^{(i)}_{vw} -f^{(i)}_{wv}\right)=0\label{LP3b}.
\end{align}
Let us now describe the LP given by Supplementary Equations (\ref{LP3})-(\ref{LP3b}) in detail: The objective given by Supplementary Equation (\ref{LP3}) is the sum of source-sink flows over all user pairs/commodities $i=1,...,r$. For each user pair, the source-sink flow is expressed analogously to Supplementary Equation (\ref{LP2}). The edge capacity constraints given by Supplementary Equation (\ref{LP3a}) involve a summation over all commodities that pass through given edge in both directions. The flow conservation constraints given by Supplementary Equation (\ref{LP3b}) are of the same form as Supplementary Equation (\ref{LP2b}), but have to be observed for all commodities $i=1,...,r$. This allows us to show the following:
\begin{proposition}\label{multiuni:upper2}
In a network described by a graph $G$ with associated undirected graph $G'$ and a scenario of $r$ user pairs $(s_1,t_1),...,(s_r,t_r)$ 
\begin{align}
\mc{P}^{\text{\emph{total}}}_{\{p_e\}_{e\in E}}\left(G,\{\mc{N}^e\}_{e\in E}\right)\leq g_{\rm t}(r)f^{\text{\emph{total}}}_{\max}\left(G',\{c'_{E_{\sq}}(\{vw\},p_{vw},p_{wv})\}_{\{vw\}\in E'}\right),
\end{align}
where $f^{\text{\emph{total}}}_{\max}$ is given by the LP defined by Supplementary Equations (\ref{LP3})-(\ref{LP3b}).
\end{proposition}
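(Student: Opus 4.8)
The plan is to follow the same three-step template as the proof of Proposition~\ref{multiuni:upper}, but with the minimum cut ratio replaced by the minimum multicut and the worst-case flow replaced by the total flow. First I would invoke the single-shot cut bound of Supplementary Equation~(\ref{multicutversion}), which follows from Theorem~2 of \cite{bauml2017fundamental} and holds for \emph{every} multicut $\{S\}\leftrightarrow\{T\}$ and every $(m,\epsilon,\{p_e\}_{e\in E})$ key-generation protocol. The essential observation is that its right-hand side is, up to the prefactor $1/(1-b\epsilon)$ and the additive $g(\epsilon)$, nothing but the multicut capacity $c'(\{S\}\leftrightarrow\{T\})$ evaluated with the edge capacities $c'_{E_{\sq}}$ of Supplementary Equation~(\ref{cE}): summing $p_e E_{\sq}(\mc{N}^e)$ over all directed edges $e\in E$ whose undirected image lies in the multicut reproduces exactly $\sum_{\{vw\}\in\{S\}\leftrightarrow\{T\}} c'_{E_{\sq}}(\{vw\},p_{vw},p_{wv})$.

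Since the bound holds for each multicut separately, the next step is to minimize its right-hand side over all multicuts, obtaining the estimate $\frac{1}{m}\sum_{i=1}^r\langle\log d_i^{(k)}\rangle_k \le \frac{1}{1-b\epsilon}\big(\min_{\{S\}\leftrightarrow\{T\}} c'_{E_{\sq}}(\{S\}\leftrightarrow\{T\}) + g(\epsilon)\big)$, whose right-hand side is now independent of the protocol. Taking the supremum over protocols $\Lambda$ and then the limits $m\to\infty$ and $\epsilon\to 0$ (so that $b\epsilon\to 0$ and $g(\epsilon)\to 0$) turns the left-hand side into the total multi-pair private capacity of Supplementary Equation~(\ref{cap3a}), yielding $\mc{P}^{\text{total}}_{\{p_e\}_{e\in E}}(G,\{\mc{N}^e\}_{e\in E}) \le \min_{\{S\}\leftrightarrow\{T\}} c'_{E_{\sq}}(\{S\}\leftrightarrow\{T\})$. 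To finish, I would close the chain with the approximate max-flow/min-multicut relation of Supplementary Equation~(\ref{maxFlowMinMultiCut}), due to \cite{garg1996approximate}, which bounds the minimum multicut from above by $g_{\rm t}(r)$ times the maximum total multi-commodity flow $f^{\text{total}}_{\max}$; composing it with the previous estimate gives the claimed inequality.

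I expect the only genuinely delicate point to be the bookkeeping in the first step, namely verifying that the directed sum over $e\in E$ in Supplementary Equation~(\ref{multicutversion}) coincides with the undirected multicut capacity for $c'_{E_{\sq}}$ (in particular that the two directed edges $vw$ and $wv$ collapse correctly into the single undirected quantity $c'_{E_{\sq}}(\{vw\},p_{vw},p_{wv})$), together with justifying the interchange of the minimization over multicuts with the supremum over protocols and the $m,\epsilon$ limits. Both are routine once the right-hand side is identified as a multicut capacity; notably, unlike the worst-case case, no averaging argument relating a minimum over pairs to a demand-weighted sum is required here, since all $r$ pairs are summed directly, so the total-throughput proof is in fact simpler than that of Proposition~\ref{multiuni:upper}.
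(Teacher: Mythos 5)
Your proposal is correct and follows exactly the same route as the paper's own (very terse) proof: apply the multicut bound of Supplementary Equation (\ref{multicutversion}), minimize over multicuts while taking the limits $m\to\infty$ and $\epsilon\to0$ to bound $\mc{P}^{\text{total}}_{\{p_e\}_{e\in E}}$ by the minimum multicut with capacities $c'_{E_{\sq}}$, and then invoke the approximate max-flow/min-multicut relation of \cite{garg1996approximate} (Supplementary Equation (\ref{maxFlowMinMultiCut})). Your added bookkeeping on collapsing the directed edge sum into the undirected multicut capacity, and your observation that no demand-averaging step is needed here, are both accurate elaborations of what the paper leaves implicit.
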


\begin{proof}
Using Supplementary Equation (\ref{multicutversion}) and taking the limit $m\to\infty$ and $\epsilon\to0$, we obtain
\begin{align}
&\mc{P}^{\text{total}}_{\{p_e\}_{e\in E}}\left(G,\{\mc{N}^e\}_{e\in E}\right)\leq\min_{\{S\}\leftrightarrow \{T\}}\sum_{e\in E:\{e\} \in \{S\}\leftrightarrow \{T\}} p_e E_{\sq}(\mc{N}^e).
\end{align}
Application of the results of \cite{garg1996approximate} (equation (23) in the Methods section) finishes the proof.
\end{proof}

As in the case of a single pair of users we can include an optimization over all usage frequencies $p_e$ into the optimizations  given by Supplementary Equations (\ref{LP4})-(\ref{LP4c}) and (\ref{LP3})-(\ref{LP3b}), resulting in LPs 
\begin{align}
&\bar{f}^{\text{total}}_{\max}\left(G',\{\bar{c}(e)\}_{e\in E}\right)=\max_{p_e\geq0,\sum_ep_e=1} f^{\text{total}}_{\max}(G',\{c'_{E_{\sq}}(\{wv\},p_{vw},p_{wv})\}_{\{wv\}\in E'}),\label{fbartotal}\\
&\bar{f}^{\text{worst}}_{\max}\left(G',\{\bar{c}(e)\}_{e\in E}\right)=\max_{p_e\geq0,\sum_ep_e=1} f^{\text{worst}}_{\max}(G',\{c'_{\cal E}(\{wv\},p_{vw},p_{wv})\}_{\{wv\}\in E'}).\label{fbarworst}
\end{align}
This provides us with the following:
\begin{corollary}\label{Corr3}
In a network described by a graph $G$ with associated undirected graph $G'$ it holds in a scenario of $r$ user pairs $(s_1,t_1),...,(s_r,t_r)$,
\begin{align}&\mathcal{P}^{\text{\emph{total}}}\left(G,\{\mc{N}^e\}_{e\in E}\right)\leq g_{\rm t}(r)\bar{f}^{\text{\emph{total}}}_{\max}(G',\{E_{\sq}(\mc{N}^e)\}_{e\in E}),\\
&\mathcal{P}^{\text{\emph{worst}}}\left(G,\{\mc{N}^e\}_{e\in E}\right)\leq g_{\rm w}(r)\bar{f}^{\text{\emph{worst}}}_{\max}(G',\{{\cal E}(\mc{N}^e)\}_{e\in E}),
\end{align}
where ${\cal E}$ is an entanglement measure with properties 1 and 2 and $\bar{f}^{\text{\emph{worst}}}_{\max}$ and $\bar{f}^{\text{\emph{total}}}_{\max}$ are given by  Supplementary Equations (\ref{LP4}) and (\ref{LP3a}) with added optimization over usage frequencies, respectively.
\end{corollary}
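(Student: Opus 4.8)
The plan is to lift the two fixed-frequency bounds, Propositions \ref{multiuni:upper} and \ref{multiuni:upper2}, to the frequency-optimized capacities $\mathcal{P}^{\text{total}}$ and $\mathcal{P}^{\text{worst}}$ simply by taking the maximum over the usage-frequency simplex $\{p_e\ge 0,\ \sum_e p_e=1\}$ on both sides of each inequality. Since the capacities with optimized frequencies are \emph{defined} as such maxima in Supplementary Equation (\ref{cap4}), and the flow quantities $\bar{f}^{\text{total}}_{\max}$ and $\bar{f}^{\text{worst}}_{\max}$ are \emph{defined} as the corresponding maxima of the LP values in Supplementary Equations (\ref{fbartotal}) and (\ref{fbarworst}), the statement reduces to matching definitions, once I check that maximizing a valid pointwise inequality preserves it and that the gap factors, which do not depend on $p$, may be pulled outside the maximum.

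Concretely, for the total case I would start from Proposition \ref{multiuni:upper2}, which gives, for every fixed frequency vector $\{p_e\}_{e\in E}$,
\begin{equation}
\mc{P}^{\text{total}}_{\{p_e\}_{e\in E}}\left(G,\{\mc{N}^e\}_{e\in E}\right)\leq g_{\rm t}(r)\, f^{\text{total}}_{\max}\left(G',\{c'_{E_{\sq}}(\{vw\},p_{vw},p_{wv})\}_{\{vw\}\in E'}\right).
\end{equation}
The elementary fact I would invoke is that if $A(p)\le B(p)$ holds for all $p$ in a set, then $\max_p A(p)\le \max_p B(p)$: taking $p^\ast$ to attain the left-hand maximum gives $\max_p A(p)=A(p^\ast)\le B(p^\ast)\le \max_p B(p)$. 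Applying this with $A(p)=\mc{P}^{\text{total}}_{\{p_e\}_{e\in E}}$ and $B(p)=g_{\rm t}(r)\, f^{\text{total}}_{\max}(\cdots)$, and using that the constant $g_{\rm t}(r)$ factors out of the maximum, yields
\begin{equation}
\max_{p_e\ge0,\ \sum_e p_e=1}\mc{P}^{\text{total}}_{\{p_e\}_{e\in E}}\leq g_{\rm t}(r)\max_{p_e\ge0,\ \sum_e p_e=1} f^{\text{total}}_{\max}\left(G',\{c'_{E_{\sq}}(\{vw\},p_{vw},p_{wv})\}_{\{vw\}\in E'}\right).
\end{equation}
The left-hand side equals $\mathcal{P}^{\text{total}}(G,\{\mc{N}^e\}_{e\in E})$ by the definition in Supplementary Equation (\ref{cap4}) together with $\mathcal{C}^{\bigotimes_i\gamma_i}=\mc{P}$, while the right-hand side equals $g_{\rm t}(r)\,\bar{f}^{\text{total}}_{\max}(G',\{E_{\sq}(\mc{N}^e)\}_{e\in E})$ by Supplementary Equation (\ref{fbartotal}); this is exactly the first claimed inequality.

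The worst-case inequality is obtained identically, starting from Proposition \ref{multiuni:upper} with gap $g_{\rm w}(r)$ and flow $f^{\text{worst}}_{\max}$, and matching the optimized quantity against Supplementary Equation (\ref{fbarworst}). The only point requiring a word of care — and the closest thing to an obstacle — is the legitimacy of writing $\max$ rather than $\sup$: I would note that the frequency domain is the compact standard simplex and that the optimal value of the parametric LP given by Supplementary Equations (\ref{LP3})-(\ref{LP3b}) (respectively (\ref{LP4})-(\ref{LP4c})) is a continuous function of its capacity parameters, which in turn depend linearly, hence continuously, on $p$ through $c'_{\cal E}$ in Supplementary Equation (\ref{cE}); this continuity of the LP value in the parameters is the same fact already used in the proof of Proposition \ref{lowerbound} via \cite{meyer1979continuity}. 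Compactness together with continuity guarantees that the maxima are attained, so all suprema are genuine maxima and the pointwise-to-maximum step above is valid. No further estimates are needed, and the result follows as a direct corollary of the fixed-frequency propositions.
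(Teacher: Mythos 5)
Your proposal is correct and matches the paper's own (essentially one-line) derivation: the paper obtains Corollary \ref{Corr3} precisely by maximizing the fixed-frequency bounds of Propositions \ref{multiuni:upper} and \ref{multiuni:upper2} over the usage-frequency simplex and matching the result against the definitions in Supplementary Equations (\ref{cap4}), (\ref{fbartotal}) and (\ref{fbarworst}). Your additional remark on attainment of the maxima via compactness of the simplex and continuity of the parametric LP value is a harmless extra precaution beyond what the paper records.
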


Finally, let us consider the scenario of weighted user pairs. Given any subset $U\subset\left\{(s_1,t_1), \cdots  ,(s_r,t_r)\right\}$, we can define a LP 
\begin{align}\label{LP3c}
f^{U}_{\max}\left(G',\{c'(\{vw \}) \}_{\{vw\}\in E'}\right)=\textrm{max} &\sum_{i\in I_U}\sum_{v:\{s_iv\}\in E'} \left(f^{(i)}_{s_iv}-f^{(i)}_{vs_i}\right)\\
\forall \{vw\}\in E':\ &\  \sum_{i\in I_U}\left(f^{(i)}_{vw}+f^{(i)}_{wv}\right)\le  c'(\{vw\})\nonumber\\
\forall \{vw\}\in E', \ \forall i:\ &\  f^{(i)}_{vw}, f^{(i)}_{wv}\geq 0\nonumber\\
\forall i\in I_U,\ \forall w\in V,w\neq s_i,t_i:\ &\ \sum_{v:\{vw\}\in E'} \left(f^{(i)}_{vw} -f^{(i)}_{wv}\right)=0,\nonumber
\end{align}
where $I_U:=\{i:(s_i,t_i)\in U\}$. For subset $U$, we can also define  a multicut $\{S_U\}\leftrightarrow \{T_U\}$ as a set of edges in $E'$ whose removal disconnects all source sink pairs in $U$. By \cite{garg1996approximate}, it then holds 
\begin{equation}\label{maxFlowMinMultiCut2}
\min_{\{S_U\}\leftrightarrow \{T_U\}}c'(\{S_U\}\leftrightarrow \{T_U\})\leq g_{\rm t}(|U|)f^{U}_{\max}\left(G',\{c'(\{vw \}) \}_{\{vw\}\in E'}\right),
\end{equation}
if $|U|\geq2$. If $|U|=1$, this reduces to the max-flow min-cut theorem. For every user pair $(s_i,t_i)$, where $i\in\{1,...,r\}$, it holds by Corollary \ref{Corr1} that for all $m\in\mathbb{N}$ and $\epsilon>0$
\begin{equation}\label{polytope1}
R_i^{m,\epsilon}\leq_{\epsilon} \bar{f}^{s_i\to t_i}_{\max}\left(G',\{{\cal E}(\mc{N}^e) \}_{e\in E}\right),
\end{equation}
where $R_i^{m,\epsilon}:=\langle\log d_i^{(k)}\rangle_{k}/m$ and ${\cal E}$ is an entanglement measure satisfying properties 1 and 2 and the r.h.s. given by the LP defined by Supplementary Equations (\ref{LP2})-(\ref{LP2b}). Further, by Corollary \ref{Corr3}, it holds for any subset $U\subset\left\{(s_1,t_1), \cdots ,(s_r,t_r)\right\}$ with $|U|\geq2$ of user pairs that
\begin{equation}\label{polytope2}
\sum_{i\in I_U}R_i^{m,\epsilon}\leq_\epsilon g_{\rm t}(|U|)\bar{f}^U_{\max}\left(G',\{{E_{\sq}}(\mc{N}^e) \}_{e\in E}\right),
\end{equation}
where the r.h.s. is given by the LP defined by Supplementary Equations (\ref{LP3})-(\ref{LP3b}), considering only user pairs in $U$ and with added optimization over usage frequencies. As the inequalities given by Supplementary Equations (\ref{polytope1}) and (\ref{polytope2}), for all possible subsets $U\subset\left\{(s_1,t_1), \cdots ,(s_r,t_r)\right\}$, define facets of a polytope $\Pi^{m,\epsilon}$ in $\mathbb{R}^r_+$. Let us also define the `asymptotic' polytope $\Pi:=\lim_{\epsilon\to0}\lim_{m\to\infty}\Pi^{m,\epsilon}$. Optimization over the polytope provides us with the following result:
\begin{proposition}\label{multiuni:upper3}
In a network described by a graph $G$ with associated undirected graph $G'$ and a scenario of $r$ user pairs $(s_1,t_1),...,(s_r,t_r)$ with weights $q_1,...,q_r$ it holds
\begin{equation}
\mc{P}^{q_1,...,q_r}\left(G,\{\mc{N}^e\}_{e\in E}\right)\leq\max_{(\hat{R}_1,...,\hat{R}_r)\in\Pi}\sum_{i=1}^rq_i\hat{R}_i,
\end{equation}
which is a linear program.
\end{proposition}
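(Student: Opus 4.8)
The plan is to show that every rate vector that a single protocol can achieve simultaneously for the $r$ pairs must lie in the polytope $\Pi^{m,\epsilon}$, so that the weighted objective is controlled by a linear-programming maximum over $\Pi^{m,\epsilon}$, and then to pass to the asymptotic limit where $\Pi^{m,\epsilon}$ becomes $\Pi$.

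First I would fix an $(m,\epsilon,\{p_e\})$ protocol $\Lambda$ and collect every constraint that its achieved rate vector $(R_1^{m,\epsilon},\ldots,R_r^{m,\epsilon})$ obeys. Each single rate satisfies $R_i^{m,\epsilon}\leq_\epsilon \bar{f}^{s_i\to t_i}_{\max}(G',\{\mathcal{E}(\mc{N}^e)\}_{e\in E})$ by Supplementary Equation (\ref{polytope1}), i.e.\ Corollary \ref{Corr1}, while for every subset $U$ with $|U|\geq 2$ the partial sum satisfies $\sum_{i\in I_U}R_i^{m,\epsilon}\leq_\epsilon g_{\rm t}(|U|)\bar{f}^{U}_{\max}(G',\{E_{\sq}(\mc{N}^e)\}_{e\in E})$ by Supplementary Equation (\ref{polytope2}), i.e.\ Corollary \ref{Corr3}. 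Since there are only finitely many subsets $U$, these inequalities, together with $R_i^{m,\epsilon}\geq 0$, are precisely the finitely many linear half-spaces whose intersection cuts out $\Pi^{m,\epsilon}$; hence $(R_1^{m,\epsilon},\ldots,R_r^{m,\epsilon})\in\Pi^{m,\epsilon}$. The key structural point is that every right-hand side is a frequency-optimized quantity $\bar{f}$, so none of them depends on the particular frequencies $\{p_e\}$ nor on $\Lambda$.

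Next, because the feasible rate vector lies in $\Pi^{m,\epsilon}$ and the weighted objective is linear, I would bound $\sum_i q_i R_i^{m,\epsilon}\leq \max_{(\hat R_1,\ldots,\hat R_r)\in\Pi^{m,\epsilon}}\sum_i q_i\hat R_i$. As just noted, the right-hand side is independent of both $\Lambda$ and $\{p_e\}$, so taking the supremum over protocols $\Lambda$ and the maximum over usage frequencies in the definition of $\mc{P}^{q_1,\ldots,q_r}$ leaves it untouched. It then remains to take $m\to\infty$ and $\epsilon\to0$, invoking $\Pi=\lim_{\epsilon\to0}\lim_{m\to\infty}\Pi^{m,\epsilon}$: the $\epsilon$-correction slack hidden in $\leq_\epsilon$ vanishes and the facet data converge to those of $\Pi$. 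Combined with continuity of the optimal value of a parametric linear program in its constraint data (the same fact used, via \cite{meyer1979continuity}, in the proof of Proposition \ref{lowerbound}), this gives $\mc{P}^{q_1,\ldots,q_r}(G,\{\mc{N}^e\}_{e\in E})\leq \max_{(\hat R_1,\ldots,\hat R_r)\in\Pi}\sum_i q_i\hat R_i$.

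Finally I would observe that the bounding quantity is itself a linear program: $\Pi$ is a polytope defined by finitely many linear inequalities and we maximize the linear functional $(\hat R_1,\ldots,\hat R_r)\mapsto\sum_i q_i\hat R_i$ over it; since each facet value $\bar{f}^{U}_{\max}$ is itself a flow LP, one may fold all flow variables for the subsets $U$ into a single combined LP in the $\hat R_i$ and the flow variables without changing the optimum. The step I expect to be the main obstacle is the limit interchange: one must argue that the maximum over $\Pi^{m,\epsilon}$ converges to the maximum over $\Pi$, which requires controlling the $\leq_\epsilon$ slack uniformly across all of the (finitely many) defining inequalities and appealing to continuity of the LP optimum; keeping the feasible region bounded in $\mathbb{R}^r_+$ and using the continuity result is what makes this rigorous.
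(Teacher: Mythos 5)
Your proposal is correct and follows essentially the same route as the paper, whose own proof is a one-line appeal to the definition of $\mc{P}^{q_1,\ldots,q_r}$ as a weighted sum of asymptotic rates together with the definition of the polytope $\Pi$ via the inequalities of Supplementary Equations (\ref{polytope1}) and (\ref{polytope2}). You simply make explicit the steps the paper leaves implicit (membership of the achieved rate vector in $\Pi^{m,\epsilon}$, independence of the bound from $\Lambda$ and $\{p_e\}$, and the passage to the limit polytope $\Pi$), which is a faithful expansion rather than a different argument.
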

\begin{proof}
This follows from the definition given by Supplementary Equation (\ref{cap3c}) of $\mc{P}^{q_1,...,q_r}_{\{p_e\}_{e\in E}}\left(G,\{\mc{N}^e\}_{e\in E}\right)$ as a weighted sum of asymptotic rates and the definition of the polytope $\Pi$.
\end{proof}

Let us note that instead of the weighted sum of rates $\sum_{i=1}^rq_i{R}_i$, we could also maximize a general concave target function $f({R}_1,...,{R}_r)$ over a polytope, which would still a convex optimization problem. Let us also note that whereas the number of all possible subsets of $U\subset\left\{(s_1,t_1), \cdots ,(s_r,t_r)\right\}$ scales exponentially in $r$, we can also obtain upper bounds by using only a small number of subsets $U\subset\left\{(s_1,t_1), \cdots ,(s_r,t_r)\right\}$ or only the single-pair bounds given by Supplementary Equation (\ref{polytope1}) to define the polytope.

\subsubsection{Lower bounding the capacity}
It is straightforward to extend our lower bound, Proposition \ref{lowerbound}, to multiple user scenarios:
\begin{proposition}\label{ConcLowerBound}
In a network described by a graph $G$ with associated undirected graph $G'$ it holds in a scenario of $r$ user pairs $(s_1,t_1),...,(s_r,t_r)$,
\begin{align}
&\mc{Q}^{\text{\emph{worst}}}_{\{p_e\}_{e\in E}}\left(G,\{\mc{N}^e\}_{e\in E}\right)\geq f^{\text{\emph{worst}}}_{\max}\left(G',\{c'_{Q^{\leftrightarrow}}(\{vw\},p_{vw},p_{wv})\}_{\{vw\}\in E'}\right),\\
&\mc{Q}^{\text{\emph{total}}}_{\{p_e\}_{e\in E}}\left(G,\{\mc{N}^e\}_{e\in E}\right)\geq f^{\text{\emph{total}}}_{\max}\left(G',\{c'_{Q^{\leftrightarrow}}(\{vw\},p_{vw},p_{wv})\}_{\{vw\}\in E'}\right),
\end{align}
where $f^{\text{\emph{worst}}}_{\max}$ and $f^{\text{\emph{total}}}_{\max}$ are given by the LP defined by Supplementary Equations (\ref{LP4})-(\ref{LP4c}) and the LP given by Supplementary Equations (\ref{LP3})-(\ref{LP3b}), respectively.
\end{proposition}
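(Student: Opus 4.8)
The plan is to adapt the proof of Proposition \ref{lowerbound} to the multi-commodity setting, exploiting the fact that the edge-capacity constraints of the two LPs (Supplementary Equations (\ref{LP3a}) and (\ref{LP4b})) already bound the \emph{sum} over all commodities of the flow through each edge. First I would take an optimal solution $\{f^{(i)}_{vw}\}$ of the relevant LP (total or worst-case) and, for each commodity $i$ separately, invoke the standard path decomposition \cite{ford1956maximal} to write the single-commodity flow $f^{(i)}$ as a sum of path-flows $f^{(i,j)}\ge 0$ along directed paths $P^{(i,j)}_{s_i\to t_i}$ from $s_i$ to $t_i$. Writing $\delta(i,j;vw)$ for the indicator that $vw$ lies on $P^{(i,j)}_{s_i\to t_i}$, and letting $N$ denote the total number of such paths over all commodities, the key observation is that summing the per-commodity bound $\sum_j f^{(i,j)}\delta(i,j;vw)\le f^{(i)}_{vw}$ over all $i$ and combining with the edge-capacity constraint gives
\begin{equation}
\sum_{i}\sum_{j}f^{(i,j)}\left(\delta(i,j;vw)+\delta(i,j;wv)\right)\le\sum_{i}\left(f^{(i)}_{vw}+f^{(i)}_{wv}\right)\le c'(\{vw\}),
\end{equation}
so that paths belonging to \emph{different} commodities compete for the same edge capacity.

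Next I would establish a multi-commodity analogue of Lemma \ref{intFlow}: rounding each path-flow $f^{(i,j)}$ down to the nearest multiple of $1/(kN)$ and scaling by $mkN$ produces integer path multiplicities $F^{(i,j)}=m\bar{n}^{(i,j)}$ whose total through any edge $\{vw\}$ is bounded by $\lfloor mkN c'(\{vw\})\rfloor$, which is exactly the edge multiplicity in the multigraph $G''_{\lfloor mkNc'\rfloor}$. Hence all these paths can be realized simultaneously as paths that are edge-disjoint \emph{across all commodities}. Since commodity $i$ contributes $N_i\le N$ paths, its flow loses at most $N_i/(kN)\le 1/k$ in the rounding; thus for the total objective one gets $\sum_{i,j}F^{(i,j)}\ge mkN\left(f^{\text{total}}_{\max}-\tfrac{1}{k}\right)$, while for the worst-case objective \emph{every} commodity retains flow at least $mkN\left(f^{\text{worst}}_{\max}-\tfrac{1}{k}\right)$.

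I would then run the aggregated repeater protocol of \cite{AK16} exactly as in the proof of Proposition \ref{lowerbound}: using each channel $\lfloor mkNp_e\rfloor$ times to distribute Bell states, absorbing the $\tilde m$ correction, and replacing $c'$ by $c'_{R_{\epsilon}^{\leftrightarrow}}$, which yields precisely the Bell-state multigraph to which the rounding above applies. Each edge-disjoint path for commodity $i$ is a chain of Bell states from $s_i$ to $t_i$, and because the paths are edge-disjoint across all commodities, entanglement swapping can be performed along all chains in parallel without conflict, producing the required Bell pairs for every user pair concurrently. Counting the resulting rates and taking the limits $m\to\infty$ and $\epsilon\to0$ (using $\lim_{\epsilon\to0}\lim_{m\to\infty}R_{\epsilon}^{\leftrightarrow}(\mc{N}^e)=Q^\leftrightarrow(\mc{N}^e)$ and continuity of the parametric LP optimum \cite{meyer1979continuity}), and finally $k\to\infty$, yields both inequalities: the sum of per-pair rates is bounded below by $f^{\text{total}}_{\max}$, and the minimum over $i$ of the per-pair rates is bounded below by $f^{\text{worst}}_{\max}$.

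The main obstacle is the multi-commodity edge-disjointness: unlike the single-pair case, paths of distinct commodities share edges in the Bell-state multigraph, and one must verify that the simultaneous rounding of all path-flows never over-subscribes any edge. This is guaranteed precisely by the summed edge-capacity constraint displayed above, which is the structural feature distinguishing the multi-commodity LPs from the single-commodity one; once it is in place, the remaining estimates are routine adaptations of Lemma \ref{intFlow} and Proposition \ref{lowerbound}.
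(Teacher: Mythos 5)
Your proposal is correct and follows essentially the same route as the paper: per-commodity path decomposition, rounding the path-flows to multiples of $1/(kN)$ and using the summed edge-capacity constraint to obtain paths that are edge-disjoint across all commodities in the Bell-state multigraph, then the aggregated repeater protocol with the $\tilde{m}$ correction, entanglement swapping along the chains, and the limits $m\to\infty$, $\epsilon\to0$, $k\to\infty$ (these are the paper's Lemmas \ref{ConcIntFlow} and \ref{TpIntFlow} plus the proof of Proposition \ref{ConcLowerBound}). The only cosmetic difference is that you take a single $N=\sum_i N_i$ for both objectives whereas the paper uses $N=\max_i N_i$ for the worst case and $N=\sum_i N_i$ for the total case; since $N$ cancels in the final rate, this is immaterial.
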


Before proving Proposition \ref{ConcLowerBound}, we need the following Lemmas:
\begin{lemma}\label{ConcIntFlow}
Let us assume we have a finite undirected graph $G'$ with capacities $c'(\{vw\})$ (that can in general depend on $m$ and $k$) and $r$ source-sink pairs $(s_1,t_1),...,(s_r,t_r)$. Let $N:=\max_{i\in\{1,...,r\}}N_i$, where $N_i$ are the numbers of directed paths from $s_i$ to $t_i$ that exist in $G'$. Let further $k,m\in\mathbb{N}$. Then we can, for any $i\in\{1,...,r\}$ concurrently, obtain, in $G''_{\lfloor mkNc'\rfloor}$, $F^{s_i\to t_i}$ edge-disjoint paths from $s_i$ to $t_i$, where
\begin{equation}
F^{s_i\to t_i}\geq mkN\left(f^{\text{\emph{worst}}}_{\max}\left(G',\{c'(\{vw \}) \}_{\{vw\}\in E'}\right)-\frac{1}{k}\right).
\end{equation}
\end{lemma}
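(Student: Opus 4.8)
The plan is to mimic the proof of Lemma \ref{intFlow} commodity by commodity, but to carry all $r$ commodities along simultaneously so that the resulting integer flows can be embedded into the single multigraph $G''_{\lfloor mkNc'\rfloor}$ without collision. First I would solve the worst-case LP given by Supplementary Equations (\ref{LP4})-(\ref{LP4c}) to obtain optimal edge flows $\{f^{(i)}_{vw},f^{(i)}_{wv}\}$ for every commodity $i$ together with the optimal slack value $f^{\text{worst}}_{\max}$. The defining constraint (\ref{LP4a}) guarantees that each single-commodity flow $f^{s_i\to t_i}=\sum_{v:\{s_iv\}\in E'}(f^{(i)}_{s_iv}-f^{(i)}_{vs_i})$ is at least $f^{\text{worst}}_{\max}$, which is what ultimately yields the uniform lower bound.

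For each commodity $i$, flow decomposition as in \cite{ford1956maximal} writes $f^{s_i\to t_i}=\sum_j f^{(i,j)}$ along at most $N_i\le N$ directed paths $P^{(i,j)}_{s_i\to t_i}$, with path-flows $f^{(i,j)}\ge 0$ obeying $\sum_j f^{(i,j)}\delta((i,j),vw)\le f^{(i)}_{vw}$, the analog of Supplementary Equation (\ref{eq:sumfvwi}) (here $\delta((i,j),vw)$ indicates that $vw$ lies on $P^{(i,j)}$). I would then round each path-flow down to the grid of mesh $1/(kN)$: choose $\bar n^{(i,j)}\in\mathbb{N}_0$ with $f^{(i,j)}-\tfrac{1}{kN}\le \tfrac{\bar n^{(i,j)}}{kN}\le f^{(i,j)}$, and set $F^{(i,j)}=m\bar n^{(i,j)}$ and $F^{(i)}_{vw}=\sum_j F^{(i,j)}\delta((i,j),vw)$. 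Using the same uniform denominator $N=\max_i N_i$ for every commodity is precisely what keeps the per-commodity rounding loss bounded by $N\cdot\tfrac{1}{kN}=\tfrac1k$.

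The key verification, and the only genuine departure from the single-commodity case, is the \emph{concurrent} capacity constraint. Since $\bar n^{(i,j)}\le kN f^{(i,j)}$ gives $F^{(i)}_{vw}\le mkN f^{(i)}_{vw}$ for every edge and commodity, summing over all commodities and invoking the concurrent capacity constraint (\ref{LP4b}) yields $\sum_{i}(F^{(i)}_{vw}+F^{(i)}_{wv})\le mkN\sum_i(f^{(i)}_{vw}+f^{(i)}_{wv})\le mkNc'(\{vw\})$; as the left-hand side is an integer this sharpens to $\lfloor mkNc'(\{vw\})\rfloor$, which is exactly the number of parallel unit-capacity copies of $\{vw\}$ in $G''_{\lfloor mkNc'\rfloor}$. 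Hence the edge-disjoint paths produced for all commodities can be routed through distinct copies simultaneously, which is the meaning of ``concurrently'' in the statement. Flow conservation for each commodity follows exactly as in Supplementary Equation (\ref{FlowCons}), since each $P^{(i,j)}$ enters and leaves any intermediate vertex the same number of times.

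Finally, the value realized for commodity $i$ satisfies $F^{s_i\to t_i}=\sum_j F^{(i,j)}\ge mkN\sum_j f^{(i,j)}-mN_i\ge mkN f^{s_i\to t_i}-mN\ge mkN(f^{\text{worst}}_{\max}-\tfrac1k)$, where the middle step uses $N_i\le N$ and the last uses the worst-case constraint $f^{s_i\to t_i}\ge f^{\text{worst}}_{\max}$. An integer flow of value $F^{s_i\to t_i}$ for commodity $i$ decomposes into $F^{s_i\to t_i}$ edge-disjoint paths in $G''_{\lfloor mkNc'\rfloor}$, giving the claim. I expect the only real obstacle to be bookkeeping the concurrency: one must round the \emph{path}-flows rather than the edge-flows, so that aggregate capacity bound (\ref{LP4b}) and per-commodity integrality survive together; once that is arranged, the worst-case lower bound and the flow-conservation identity are immediate transcriptions of the single-commodity argument in Lemma \ref{intFlow}.
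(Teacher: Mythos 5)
Your proposal is correct and follows essentially the same route as the paper's proof: decompose each commodity's optimal LP flow into at most $N_i\le N$ directed path-flows, round each down to the mesh $1/(kN)$, scale by $mkN$, and verify that the resulting integer flows jointly respect the concurrent capacity constraint (using integrality to pass to $\lfloor mkNc'(\{vw\})\rfloor$) and per-commodity flow conservation, so that constraint (\ref{LP4a}) delivers the uniform bound $F^{s_i\to t_i}\geq mkN(f^{\text{worst}}_{\max}-\tfrac{1}{k})$. The paper's argument is a line-by-line match of yours, including the choice to round path-flows rather than edge-flows.
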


\begin{proof}
Let $m,k\in\mathbb{N}$ and let $\{f^{(i)}_{vw}\}_{i\in\{1...r\},\{vw\}\in E'}$ be the set of edge flows maximizing the LP given by Supplementary Equations (\ref{LP4})-(\ref{LP4c}) for $G'$ with capacities $c'(\{vw\})$, which in general can depend on $m$ and $k$. As $G'$ is finite, we can, for any $i\in\{1,...,r\}$, always find a finite number $N_i$ of directed paths $P_{s_i \to t_i}^{(ij)}$, where $j\in\{1,...,N_i\}$, from $s_i$ to $t_i$. Hence $N:=\max_{i\in\{1,...,r\}}N_i$ is finite.  For each path $P_{s_i \to t_i}^{(ij)}$ we can assign a path-flow $f^{(ij)}\geq0$ such that for every $i\in\{1...r\}$ it holds
\begin{equation}
\sum_{j=1}^{N_i}f^{(ij)}=\sum_{v:\{s_iv\}\in E'}\left( f^{(i)}_{s_iv}-f^{(i)}_{vs_i}\right)\geq f^{\text{worst}}_{\max}\left(G',\{c'(\{e\})\}_{\{e\}\in E'}\right)
\end{equation}
(see also \cite{ford1956maximal}). Let us define in analogy to Supplementary Equation (\ref{eq:fvwi})
\begin{equation}\label{fvwij}
f^{(ij)}_{vw}=\begin{cases}f^{(ij)}\text{ if }vw\in P_{s_i \to t_i}^{(ij)}\\0\text{ else.}\end{cases}
\end{equation}
As in Supplementary Equation (\ref{eq:sumfvwi}), it holds for every edge $\{vw\} \in E'$ that
\begin{align}
&f^{(i)}_{wv}\geq\sum_{j=1}^{N_i}f^{(ij)}_{wv}=\sum_{j=1}^{N_i}f^{(ij)}\delta(ij,wv),\label{A2}\\
&f^{(i)}_{vw}\geq\sum_{j=1}^{N_i}f^{(ij)}_{vw}=\sum_{j=1}^{N_i}f^{(ij)}\delta(ij,vw),\label{A3}
\end{align}
where 
\begin{equation}
\delta(ij,uv)=\begin{cases}
1 \text{ if }uv\in P^{(ij)}_{s_i\to t_i}\\ 0\text{ else.}
\end{cases}
\end{equation}
Then for each $f^{(ij)}$ there exists $\bar{n}^{(ij)}\in\mathbb{N}_0$ such that
\begin{equation}
f^{(ij)}-\frac{1}{kN}\leq \frac{\bar{n}^{(ij)}}{kN}\leq  f^{(ij)}.
\end{equation}
Let us also define $F^{(ij)}=m\bar{n}^{(ij)}$ and $F^{(i)}_{vw}=\sum_{j=1}^{N_i}F^{(ij)}\delta(ij,vw)$. As the $f^{(i)}_{vw}$ are feasible solutions of the LP given by Supplementary Equations (\ref{LP4})-(\ref{LP4c}), it holds for any edge $\{vw\}\in E'$ that
\begin{equation}\label{A6}
\sum_{i=1}^r\left(F^{(i)}_{wv}+F^{(i)}_{vw}\right)\leq \lfloor mkN\sum_{i=1}^r( f^{(i)}_{wv}+f^{(i)}_{vw})\rfloor \leq \lfloor mkNc'(\{vw\})\rfloor.
\end{equation}
Further, for all $i\in\{1...r\}$, it holds for all $w\in V$ with $w\neq s_i,t_i$ that
\begin{equation}\label{A7}
\sum_{v:\{vw\}\in E'}\left( F^{(i)}_{vw} - F^{(i)}_{wv}\right)=\sum_{j=1}^{N_i}F^{(ij)}\sum_{v:\{vw\}\in E'}\left(\delta(ij,vw)  -\delta(ij,wv)\right)=0,
\end{equation}
where we use the same argument as explained after Supplementary Equation (\ref{FlowCons}). It also holds
\begin{align}
\sum_{v:\{s_iv\}\in E'}F^{(i)}_{s_iv}&=\sum_{j=1}^{N_i}F^{(ij)}\sum_{v:\{s_iv\}\in E'}\delta(ij,s_iv)\\
&=\sum_{j=1}^{N_i}F^{(ij)}\\
&\geq mkN\sum_{j=1}^{N_i}\left(f^{(ij)}-\frac{1}{kN}\right)\\
&\geq mkN\left(f^{\text{worst}}_{\max}\left(G',\{c'(\{e\})\}_{\{e\}\in E'}\right)-\frac{1}{k}\right).
\end{align}
Hence $\{F^{(i)}_{vw}\}_{i\in\{1...r\},\{vw\}\in E'}$ is a feasible solution of the LP given by Supplementary Equations (\ref{LP4})-(\ref{LP4c}) with capacities $\lfloor mkNc'(\{vw\})\rfloor$, providing, for any $i\in\{1,...,r\}$ concurrently, a flow of
\begin{equation}
F^{s_i\to t_i}=\sum_{j=1}^{N_i}F^{(ij)}\geq m kN\left(f^{\text{worst}}_{\max}\left(G',\{c'(\{e\})\}_{\{e\}\in E'}\right)-\frac{1}{k}\right).
\end{equation}
As any integer flow of value $F^{(ij)}$ corresponds to $F^{(ij)}$ edge-disjoint paths from $s_i$ to $t_i$ in $G''_{\lfloor mkNc'\rfloor}$, we can conclude that, for all $i\in\{1,...,r\}$ concurrently, there are $F^{s_i\to t_i}$ edge-disjoint paths from $s_i$ to $t_i$.
\end{proof}

\begin{lemma}\label{TpIntFlow}
Let us assume we have a finite undirected graph $G'$ with capacities $c'(\{vw\})$ (that can in general depend on $m$ and $k$) and $r$ source sink pairs $(s_1,t_1),...,(s_r,t_r)$. Let $N:=\sum_{i=1}^rN_i$, where $N_i$ are the numbers of directed paths from $s_i$ to $t_i$ that exist in $G'$. Let further $k,m\in\mathbb{N}$. Then we can, for any $i\in\{1,..,r\}$ concurrently, obtain, in $G''_{\lfloor mkNc'\rfloor}$, $F^{s_i\to t_i}$ edge-disjoint paths from $s_i$ to $t_i$, where
\begin{equation}
\sum_{i=1}^rF^{s_i\to t_i}\geq mkN\left(f^{\text{\emph{total}}}_{\max}\left(G',\{c'(\{vw \}) \}_{\{vw\}\in E'}\right)-\frac{1}{k}\right).
\end{equation}
\end{lemma}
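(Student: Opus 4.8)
The plan is to follow the argument of Lemma~\ref{ConcIntFlow} almost verbatim, since the total-flow statement is the exact counterpart of the worst-case one; the only genuinely new ingredient is the choice $N:=\sum_{i=1}^r N_i$, which is precisely what makes the aggregated rounding error collapse to $1/k$ once we sum over all paths of all commodities. First I would take a maximizing solution $\{f^{(i)}_{vw}\}$ of the total-flow LP given by Supplementary Equations~(\ref{LP3})--(\ref{LP3b}) and, for each commodity $i$, decompose the $i$-th flow into path-flows $f^{(ij)}\geq0$ along the $N_i$ directed paths $P^{(ij)}_{s_i\to t_i}$, using the flow decomposition of \cite{ford1956maximal}, so that $\sum_{j=1}^{N_i}f^{(ij)}=\sum_{v:\{s_iv\}\in E'}(f^{(i)}_{s_iv}-f^{(i)}_{vs_i})$. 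Summing over $i$ gives $\sum_{i=1}^r\sum_{j=1}^{N_i}f^{(ij)}=f^{\text{total}}_{\max}(G',\{c'(\{vw\})\}_{\{vw\}\in E'})$, and I would record the path-edge inequalities $f^{(i)}_{vw}\geq\sum_{j=1}^{N_i}f^{(ij)}\delta(ij,vw)$ exactly as in Supplementary Equations~(\ref{A2})--(\ref{A3}).

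Next I would round each path-flow downward to a multiple of $1/(kN)$: for each pair $(i,j)$ pick $\bar{n}^{(ij)}\in\mathbb{N}_0$ with $f^{(ij)}-\tfrac{1}{kN}\leq\bar{n}^{(ij)}/(kN)\leq f^{(ij)}$, and set $F^{(ij)}=m\bar{n}^{(ij)}$ together with $F^{(i)}_{vw}=\sum_{j=1}^{N_i}F^{(ij)}\delta(ij,vw)$, in direct analogy with the definitions in Lemma~\ref{ConcIntFlow}. The verification of feasibility in the rescaled multigraph $G''_{\lfloor mkNc'\rfloor}$ then splits into the same two checks as before. For the edge-capacity constraint I would combine $F^{(i)}_{vw}\leq mkN f^{(i)}_{vw}$ (which follows from $\bar{n}^{(ij)}/(kN)\leq f^{(ij)}$ and the path-edge inequalities) with the LP constraint~(\ref{LP3a}) to obtain $\sum_{i=1}^r(F^{(i)}_{vw}+F^{(i)}_{wv})\leq mkN c'(\{vw\})$; since the left-hand side is an integer, it is bounded by $\lfloor mkN c'(\{vw\})\rfloor$, which is exactly the number of parallel unit-capacity edges across $\{vw\}$ in $G''_{\lfloor mkNc'\rfloor}$. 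Flow conservation for each commodity is identical to Supplementary Equation~(\ref{FlowCons}): for a path not passing through $w$ all $\delta$'s vanish, and for a path passing through $w\neq s_i,t_i$ exactly one incoming and one outgoing edge contribute, so the net is zero.

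Finally I would read off the total number of paths. Using $\bar{n}^{(ij)}/(kN)\geq f^{(ij)}-1/(kN)$ together with the crucial identity $\sum_{i=1}^r N_i=N$ for the number of rounding terms, I obtain
\begin{equation}
\sum_{i=1}^r F^{s_i\to t_i}=m\sum_{i=1}^r\sum_{j=1}^{N_i}\bar{n}^{(ij)}\geq mkN\left(\sum_{i=1}^r\sum_{j=1}^{N_i}f^{(ij)}-\frac{N}{kN}\right)=mkN\left(f^{\text{total}}_{\max}\left(G',\{c'(\{vw\})\}_{\{vw\}\in E'}\right)-\frac{1}{k}\right).
\end{equation}
Since an integer flow of value $F^{(ij)}$ along $P^{(ij)}_{s_i\to t_i}$ corresponds to $F^{(ij)}$ edge-disjoint paths in $G''_{\lfloor mkNc'\rfloor}$, and the edge-capacity check guarantees that all these paths across all commodities are mutually edge-disjoint, this is the claimed bound. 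The one point requiring care, and the only place where the proof diverges from the worst-case version, is the bookkeeping of the summed rounding error: had I kept $N=\max_i N_i$, the total loss $\sum_{i,j}1/(kN)$ would equal $\big(\sum_i N_i\big)/(kN)$, which can exceed $1/k$, whereas the choice $N=\sum_i N_i$ forces this total loss to be exactly $1/k$.
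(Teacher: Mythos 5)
Your proposal is correct and follows essentially the same route as the paper's proof: decompose the optimal total-flow LP solution into path-flows, round each down to a multiple of $1/(kN)$, scale by $m$, verify feasibility in $G''_{\lfloor mkNc'\rfloor}$, and sum the rounding losses. Your closing observation — that taking $N=\sum_{i=1}^r N_i$ rather than $\max_i N_i$ is exactly what caps the aggregated rounding error at $1/k$ — is precisely the one point at which the paper's argument departs from Lemma~\ref{ConcIntFlow}, so nothing is missing.
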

\begin{proof}
Let $m,k\in\mathbb{N}$ and let $\{f^{(i)}_{vw}\}_{i\in\{1...r\},\{vw\}\in E'}$ be the set of edge flows maximizing the LP given by Supplementary Equations (\ref{LP3})-(\ref{LP3b}) for $G'$ with capacities $c'(\{vw\})$, which in general can depend on $m$ and $k$. As $G'$ is finite, we can, for any $i\in\{1,...,r\}$, always find a finite number $N_i$ of directed paths $P_{s_i \to t_i}^{(ij)}$, where $j\in\{1,...,N_i\}$, from $s_i$ to $t_i$. Hence $N:=\sum_{i=1}^rN_i$ is finite. For each path $P_{s_i \to t_i}^{(ij)}$ we can assign a path-flow $f^{(ij)}\geq0$ such that for every $i\in\{1...r\}$ it holds
\begin{equation}
\sum_{i=1}^r\sum_{j=1}^{N_i}f^{(ij)}=\sum_{i=1}^r\sum_{v:\{s_iv\}\in E'}\left( f^{(i)}_{s_iv}-f^{(i)}_{vs_i}\right)=f^{\text{total}}_{\max}\left(G',\{c'(\{e\})\}_{\{e\}\in E'}\right)
\end{equation}
(see also \cite{ford1956maximal}). Let us define $f^{(ij)}_{vw}$ as in Supplementary Equation (\ref{fvwij}). As in Supplementary Equation (\ref{eq:sumfvwi}), for every edge $\{vw\} \in E'$ Supplementary Equations (\ref{A2})-(\ref{A3}) are fulfilled.
Then for each $f^{(ij)}$ there exists $\bar{n}^{(ij)}\in\mathbb{N}_0$ such that
\begin{equation}
f^{(ij)}-\frac{1}{kN}\leq \frac{\bar{n}^{(ij)}}{kN}\leq  f^{(ij)}.
\end{equation}
Let us also define $F^{(ij)}=m\bar{n}^{(ij)}$ and $F^{(i)}_{vw}=\sum_{j=1}^{N_i}F^{(ij)}\delta(ij,vw)$. As the $f^{(i)}_{vw}$ are feasible solutions of the LP given by Supplementary Equations (\ref{LP3})-(\ref{LP3b}), Supplementary Equations (\ref{A6})-(\ref{A7}) are fulfilled. It also holds
\begin{align}
\sum_{i=1}^r\sum_{v:\{s_iv\}\in E'}F^{(i)}_{s_iv}&=\sum_{i=1}^r\sum_{j=1}^{N_i}F^{(ij)}\sum_{v:\{s_iv\}\in E'}\delta(ij,s_iv)\\
&=\sum_{i=1}^r\sum_{j=1}^{N_i}F^{(ij)}\\
&\geq\sum_{i=1}^r mkN\sum_{j=1}^{N_i}\left(f^{(ij)}-\frac{1}{kN}\right)\\
&\geq mkN\left(f^{\text{total}}_{\max}\left(G',\{c'(\{e\})\}_{\{e\}\in E'}\right)-\frac{1}{k}\right).
\end{align}
Hence $\{F^{(i)}_{vw}\}_{i\in\{1...r\},\{vw\}\in E'}$ is a feasible solution of the LP given by Supplementary Equations (\ref{LP3})-(\ref{LP3b}) with capacities $\lfloor mkNc'(\{vw\})\rfloor$, providing, for any $i\in\{1,...,r\}$ concurrently, a flow of $F^{s_i\to t_i}=\sum_{j=1}^{N_i}F^{(ij)}$. As any integer flow of value $F^{(ij)}$ corresponds to $F^{(ij)}$ edge-disjoint paths from $s_i$ to $t_i$ in $G''_{\lfloor mkNc'\rfloor}$, we can conclude that, for all $i\in\{1,...,r\}$ concurrently, there are $F^{s_i\to t_i}$ edge-disjoint paths from $s_i$ to $t_i$. Further it holds
\begin{equation}
\sum_{i=1}^rF^{s_i\to t_i}=\sum_{i=1}^r\sum_{j=1}^{N_i}F^{(ij)}\geq m kN\left(f^{\text{worst}}_{\max}\left(G',\{c'(\{e\})\}_{\{e\}\in E'}\right)-\frac{1}{k}\right),
\end{equation}
finishing the proof.
\end{proof}

\begin{proof} (of Proposition \ref{ConcLowerBound})
We start with the worst-case scenario. Let $N:=\max_{i\in\{1,...,r\}}N_i$, where $N_i$ are the numbers of directed paths from $s_i$ to $t_i$ that exist in $G'$. Let further $k,m\in\mathbb{N}$. As shown in the proof of Proposition \ref{lowerbound}, we can obtain a state $\epsilon |E|$-close in trace distance to a network of Bell states $\bigotimes_{e\in E}{\Phi_e^+}^{\otimes \lfloor (m-\tilde{m}) kNp_e R_{\epsilon}^\leftrightarrow(\mc{N}^e)\rfloor}$, by using each channel $\mc{N}^e$ $\lfloor mkNp_e\rfloor$ times, which can be associated with $G''_{\lfloor (m-\tilde{m})kNc'_{R_{\epsilon}^\leftrightarrow}\rfloor}$. Here, as in the proof of Proposition \ref{lowerbound}, we have defined $\tilde{m}:=\max_{e\in E, p_e>0}\left\lceil\frac{1}{p_e}\right\rceil$. By Lemma \ref{ConcIntFlow}, for each $i\in\{1...r\}$ concurrently, there exist  
\begin{equation}
F_{\epsilon}^{s_i\to t_i}\geq (m-\tilde{m})  kN\left(f^{\text{worst}}_{\max}\left(G',\{c'_{R_{\epsilon}^\leftrightarrow}(\{wv\},p_{vw},p_{wv})\}_{\{vw\}\in E'}\right)-\frac{1}{k}\right)
\end{equation}
edge-disjoint paths from $s_i$ to $t_i$, corresponding to chains of Bell states. By means of entanglement swapping, we can connect these chains, providing us with concurrent rates of entanglement generation between each $s_i$ and $t_i$ of

\begin{equation}\label{rate}
\frac{F_{\epsilon}^{s_i\to t_i}}{m kN}\geq \left(1-\frac{\tilde{m}}{m}\right) f^{\text{worst}}_{\max}\left(G',\{c'_{R_{\epsilon}^\leftrightarrow}(\{wv\},p_{vw},p_{wv})\}_{\{vw\}\in E'}\right)-\frac{1}{k}.
\end{equation}
Going to the limit $m\to\infty$ and $\epsilon \to0$, the rates $R_{\epsilon}^\leftrightarrow(\mc{N}^e)$ reach the two-way assisted quantum capacities $Q^\leftrightarrow(\mc{N}^e)$. Hence we have
\begin{equation}
 \mathcal{Q}^{\text{worst}}_{\{p_e\}_{e\in E}}\left(G,\{\mc{N}^e\}_{e\in E}\right)\geq\lim_{\epsilon\to0}\lim_{m\to\infty}\min_{i\in\{1...r\}}\frac{F_{\epsilon}^{s_i\to t_i}}{m kN}\geq   f^{\text{worst}}_{\max}(G',\{c'_{Q^{\leftrightarrow}}(\{vw\},p_{vw},p_{wv})\}_{\{vw\}\in E'})-\frac{1}{k}.
\end{equation}
Taking the limit $k\to\infty$ finishes the proof of the worst-case scenario. The total throughput case works analogously: We define $N:=\sum_{i=1}^rN_i$ and use Lemma \ref{TpIntFlow} to show that in $G''_{\lfloor (m-\tilde{m}) kNc'_{R_{\epsilon}^\leftrightarrow}\rfloor}$ there exist $F_{\epsilon}^{s_i\to t_i}$ edge-disjoint paths for each source sink pair $(s_i,t_i)$ such that 
\begin{equation}
\sum_{i=1}^rF_{\epsilon}^{s_i\to t_i}\geq (m-\tilde{m})  kN\left(f^{\text{total}}_{\max}\left(G',\{c'_{R_{\epsilon}^\leftrightarrow}(\{wv\},p_{vw},p_{wv})\}_{\{vw\}\in E'}\right)-\frac{1}{k}\right).
\end{equation}
Connecting the corresponding chains of Bell states by entanglement swapping, we can obtain the rate
\begin{equation}\label{ratetp}
\sum_{i=1}^r\frac{F_{\epsilon}^{s_i\to t_i}}{m kN}\geq \left(1-\frac{\tilde{m}}{m}\right)  f^{\text{total}}_{\max}\left(G',\{c'_{R_{\epsilon}^\leftrightarrow}(\{wv\},p_{vw},p_{wv})\}_{\{vw\}\in E'}\right)-\frac{1}{k}.
\end{equation}
Taking all the limits completes the proof.
\end{proof}

Again, we can include the optimization over usage frequencies into the optimization, yielding the following:
\begin{corollary}
In a network described by a graph $G$ with associated undirected graph $G'$ it holds in a scenario of $r$ user pairs $(s_1,t_1),...,(s_r,t_r)$,

\begin{align}
&\mathcal{Q}^{\text{\emph{worst}}}\left(G,\{\mc{N}^e\}_{e\in E}\right)\geq \bar{f}^{\text{\emph{worst}}}_{\max}(G',\{Q^{\leftrightarrow}(\mc{N}^e)\}_{e\in E}),\\
&\mathcal{Q}^{\text{\emph{total}}}\left(G,\{\mc{N}^e\}_{e\in E}\right)\geq \bar{f}^{\text{\emph{total}}}_{\max}(G',\{Q^{\leftrightarrow}(\mc{N}^e)\}_{e\in E}),
\end{align}
where $\bar{f}^{\text{\emph{worst}}}_{\max}$ and $\bar{f}^{\text{\emph{total}}}_{\max}$ are given by adding an optimization over usage frequencies to the LPs given by Supplementary Equation (\ref{LP4})-(\ref{LP4c}) and (\ref{LP3})-(\ref{LP3b}), respectively.
\end{corollary}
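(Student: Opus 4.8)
The plan is to deduce this corollary directly from Proposition \ref{ConcLowerBound}, which already establishes both lower bounds for an arbitrary but \emph{fixed} vector of usage frequencies $\{p_e\}_{e\in E}$, simply by maximizing both sides over the frequency simplex. No new physical construction is needed: the aggregated repeater protocol, the integer-flow rounding of Lemmas \ref{ConcIntFlow} and \ref{TpIntFlow}, and the entanglement-swapping step have all been carried out in the proof of Proposition \ref{ConcLowerBound}, so what remains is a short monotonicity argument of exactly the same flavor as the one that upgraded Proposition \ref{lowerbound} to Corollary \ref{Corr2} in the bipartite case.

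Concretely, I would first fix an arbitrary $\{p_e\}_{e\in E}$ with $p_e\geq0$ and $\sum_e p_e=1$ and invoke Proposition \ref{ConcLowerBound} to obtain
\[
\mc{Q}^{\text{worst}}_{\{p_e\}_{e\in E}}\left(G,\{\mc{N}^e\}_{e\in E}\right)\geq f^{\text{worst}}_{\max}\left(G',\{c'_{Q^{\leftrightarrow}}(\{vw\},p_{vw},p_{wv})\}_{\{vw\}\in E'}\right),
\]
together with the analogous total-throughput inequality. I would then recall that, by Supplementary Equation (\ref{cap4}), the frequency-optimized capacity is defined as $\mc{Q}^{\text{worst}}\left(G,\{\mc{N}^e\}_{e\in E}\right)=\max_{p_e\geq0,\sum_ep_e=1}\mc{Q}^{\text{worst}}_{\{p_e\}_{e\in E}}\left(G,\{\mc{N}^e\}_{e\in E}\right)$, so that in particular $\mc{Q}^{\text{worst}}\geq\mc{Q}^{\text{worst}}_{\{p_e\}_{e\in E}}$ for every admissible choice of frequencies. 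Chaining the two inequalities yields
\[
\mc{Q}^{\text{worst}}\left(G,\{\mc{N}^e\}_{e\in E}\right)\geq f^{\text{worst}}_{\max}\left(G',\{c'_{Q^{\leftrightarrow}}(\{vw\},p_{vw},p_{wv})\}_{\{vw\}\in E'}\right)
\]
for every $\{p_e\}_{e\in E}$ in the simplex.

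Since the left-hand side no longer depends on the frequencies, I would next take the maximum of the right-hand side over all $\{p_e\}_{e\in E}$ with $p_e\geq0$ and $\sum_ep_e=1$; by the definition of $\bar{f}^{\text{worst}}_{\max}$ in Supplementary Equation (\ref{fbarworst}), this maximum is exactly $\bar{f}^{\text{worst}}_{\max}(G',\{Q^{\leftrightarrow}(\mc{N}^e)\}_{e\in E})$, which gives the claimed worst-case bound. The total-throughput bound then follows by the identical argument, starting from the total inequality of Proposition \ref{ConcLowerBound} and using the definition (\ref{fbartotal}) of $\bar{f}^{\text{total}}_{\max}$.

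The argument contains no genuine obstacle; the only point deserving care is that passing the maximization to the right-hand side is legitimate precisely because the pointwise inequality holds for \emph{every} frequency vector while the left-hand side is frequency-independent. It is additionally worth noting that the maxima defining $\bar{f}^{\text{worst}}_{\max}$ and $\bar{f}^{\text{total}}_{\max}$ are actually attained: the optimal value of the parametric LP is a continuous function of its capacity parameters \cite{meyer1979continuity}, these parameters depend linearly on the $p_e$ through $c'_{Q^{\leftrightarrow}}$, and the simplex $\{p_e\geq0,\sum_ep_e=1\}$ is compact, so the passage from $\max$ to $\sup$ and back is harmless.
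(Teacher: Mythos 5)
Your proposal is correct and matches the paper's intended argument: the paper derives this corollary directly from Proposition \ref{ConcLowerBound} by adding the maximization over usage frequencies, exactly as you do. Your additional remark on attainment of the maxima via continuity of the parametric LP and compactness of the simplex is a harmless (and slightly more careful) elaboration of the same route.
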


As for the scenario of weighted user pairs, let us note that by Corollary \ref{Corr2}, we can for every user pair $(s_i,t_i)$ alone, for $i=1,...,r$, achieve a rate 
\begin{equation}\label{AchievePolytope1}
R_i^{m,\epsilon}\geq_{\epsilon} \bar{f}^{s_i\to t_i}_{\max}\left(G',\{Q^\leftrightarrow(\mc{N}^e) \}_{e\in E}\right),
\end{equation}
where the r.h.s. given by the LP defined by Supplementary Equations (\ref{LP2})-(\ref{LP2b}). In the space $\mathbb{R}_+^r$ of single-pair rates, this corresponds to a set of asymptotically achievable points 
\begin{equation}
\mc{A}:=\left\{\left(\bar{f}^{s_1\to t_1}_{\max},0,0,\cdots,0\right),\left(0,\bar{f}^{s_2\to t_2}_{\max},0,\cdots,0\right),\cdots,\left(0,0,\cdots,0,\bar{f}^{s_r\to t_r}_{\max}\right)\right\}.
\end{equation}
By means of \emph{time-sharing} (see e.g. \cite{el2011network}) between the single-pair protocols, we can achieve every point in the convex hull $\text{Conv}(\mc{A})$ of $\mc{A}$, \cite{leung2010quantum}. This provides us with the following result:

\begin{proposition}\label{multiunilower3}
In a network described by a graph $G$ with associated undirected graph $G'$ and a scenario of $r$ user pairs $(s_1,t_1),...,(s_r,t_r)$ with weights $q_1,...,q_r$ it holds
\begin{equation}
\mc{P}^{q_1,...,q_r}\left(G,\{\mc{N}^e\}_{e\in E}\right)\geq\max_{(\hat{R}_1,...,\hat{R}_r)\in\text{\emph{Conv}}(\mc{A})}\sum_{i=1}^rq_i\hat{R}_i,
\end{equation}
which is a linear program.
\end{proposition}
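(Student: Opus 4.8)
The plan is to build the achievable rate region directly from the single-pair lower bound already in hand and then maximise the weighted objective over it. The starting point is Corollary \ref{Corr2}, in the form of Supplementary Equation (\ref{AchievePolytope1}): for each fixed $i$ the pair $(s_i,t_i)$ can, on its own, asymptotically achieve the rate $\bar{f}^{s_i\to t_i}_{\max}(G',\{Q^\leftrightarrow(\mc{N}^e)\}_{e\in E})$ while every other pair receives nothing. Since every maximally entangled state $\Phi^d$ is itself a private state $\gamma^d$, such a Bell-state distribution protocol is in particular a valid private-state protocol, so these single-pair quantum rates are also achievable private rates. These $r$ protocols realise precisely the $r$ corner points collected in the set $\mc{A}$.

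First I would make the time-sharing step explicit. I would split the $n$ available channel uses into $r$ consecutive blocks, dedicating a fraction $\lambda_i\geq 0$ (with $\sum_i\lambda_i=1$) of them to running the $i$-th single-pair protocol with its own optimal usage frequencies $\{p_e^{(i)}\}$. Because these protocols start from no shared entanglement and act on disjoint blocks of channel uses, they do not interfere; pair $i$ then collects roughly $\lambda_i n\,\bar{f}^{s_i\to t_i}_{\max}$ target bits out of $n$ total uses, i.e. a rate $\hat{R}_i=\lambda_i\bar{f}^{s_i\to t_i}_{\max}$ per total channel use, while the combined frequencies $p_e=\sum_i\lambda_i p_e^{(i)}$ still obey $\sum_e p_e=1$ and the total error is bounded by the sum of the individual errors, which vanishes in the limits $n\to\infty$, $\epsilon\to0$. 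Letting $\lambda$ range over the probability simplex yields the vectors $(\lambda_1\bar{f}^{s_1\to t_1}_{\max},\dots,\lambda_r\bar{f}^{s_r\to t_r}_{\max})$, which sweep out exactly the convex hull $\text{Conv}(\mc{A})$, so every point of $\text{Conv}(\mc{A})$ is asymptotically achievable.

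With the region pinned down, the conclusion is immediate: for any achievable rate vector $(\hat{R}_1,\dots,\hat{R}_r)\in\text{Conv}(\mc{A})$ the corresponding protocol realises the weighted sum $\sum_i q_i\hat{R}_i$, which by the definition in Supplementary Equation (\ref{cap3c}) of $\mc{P}^{q_1,\dots,q_r}$ as the largest achievable weighted sum of rates is a lower bound on that capacity; maximising over the region then gives the claimed inequality. Finally, maximising the linear functional $\sum_i q_i\hat{R}_i$ over $\text{Conv}(\mc{A})$---the convex hull of finitely many explicitly known points---is a linear program, which establishes the last assertion of the statement.

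The main obstacle I expect lies purely in justifying the time-sharing argument rigorously: one must check that concatenating protocols with different per-channel usage frequencies yields an admissible $(n,\epsilon,\{p_e\}_{e\in E})$ adaptive protocol with normalised combined frequencies, that the asymptotic rate of each pair scales linearly in its block fraction $\lambda_i$, and that the individual errors add sub-additively so that the joint protocol still converges to the product target state $\bigotimes_i\gamma^{d_i}$. All of these are standard facts about time-sharing (cf. \cite{el2011network,leung2010quantum}), so once they are invoked the remainder of the argument is bookkeeping.
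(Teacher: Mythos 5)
Your proposal is correct and follows essentially the same route as the paper: the corner points of $\mc{A}$ come from the single-pair lower bound (Corollary \ref{Corr2}), time-sharing between the single-pair protocols achieves every point of $\text{Conv}(\mc{A})$, and maximizing the weighted sum over that polytope gives the bound. The only difference is that you spell out the block-splitting and frequency-mixing details of the time-sharing step, which the paper simply delegates to the cited references.
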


Let us note again that instead of the weighted sum of rates $\sum_{i=1}^rq_i{R}_i$, we could also maximize a general concave target function $f({R}_1,...,{R}_r)$ over a polytope, which would still a convex optimization problem.


\subsection{Multipartite Target States}\label{sec:multi}
In this section we present our results on the distribution of multipartite entanglement. Let us consider a set of disjoint users $S=\{s_1,...,s_l\}$, who wish to establish a multipartite target state $\theta^d_{X_{s_1}...X_{s_l}}$, such as a Greenberger-Horne-Zeilinger (GHZ) state \cite{greenberger1989going}
\begin{equation}\label{eq:GHZ}
\ket{\Phi^{\text{GHZ},d}}_{M_{s_1}...M_{s_l}}=\frac{1}{\sqrt{d}}\sum_{i=0}^{d-1}\ket{i}_{M_{s_1}}\otimes\cdots\otimes\ket{i}_{M_{s_l}}
\end{equation}
or a multipartite private state \cite{augusiak2009multipartite},
\begin{equation}
\gamma^d_{K_{s_1}S_{s_1}...K_{s_l}S_{s_l}}=U^{\text{twist}}|\Phi^{\text{GHZ},d}\>\<\Phi^{\text{GHZ},d}|_{K_{s_1}...K_{s_l}}\otimes\sigma_{S_{s_1}...S_{s_l}}U^{\text{twist}\dagger},
\end{equation}
where $\sigma_{S_{s_1}...S_{s_l}}$ is an arbitrary state and the unitary $U^{\text{twist}}$ is given by
\begin{equation}
U^{\text{twist}}=\sum_{i_1,\cdots,i_l}|i_1,\cdots,i_l\>\<i_1,\cdots,i_l|_{K_{s_1}...K_{s_l}}\otimes U^{(i_1,\cdots,i_l)}_{S_{s_1}...S_{s_l}}.
\end{equation}
The corresponding capacities are defined analogously to \ref{sec:bip} as
\begin{equation}\label{multicap1}
\mathcal{C}^{\theta}_{\{p_e\}_{e\in E}}\left(G,\{\mc{N}^e\}_{e\in E}\right)=\lim_{\epsilon\to0}\lim_{n\to\infty}\sup_{\Lambda(n,\epsilon, \{p_e\}_{e\in E})}\left\{\frac{\langle\log d^{(k)}\rangle_{k}}{n}:\left\|\rho_{X_{s_1}...X_{s_l}}^{(n,k)}-\theta^{d^{(k)}}_{X_{s_1}...X_{s_l}}\right\|_1\leq\epsilon\right\},
\end{equation}
where the supremum is over all adaptive $(n,\epsilon, \{p_e\}_{e\in E})$ protocols $\Lambda$ and $k=(k_1,  \ldots  ,k_{m+1})$ is a vector of outcomes of the $m+1$ LOCC rounds in $\Lambda$, the averaging is over all those outcomes and $\rho_{X_{s_1} \ldots X_{s_l}}^{(n,k)} $ is the final state of $\Lambda$ for given outcomes $k$. Again we can optimize over usage frequencies, resulting in
\begin{equation}\label{multicap2}
\mathcal{C}^{\theta}\left(G,\{\mc{N}^e\}_{e\in E}\right)=\max_{p_e\geq0,\sum_ep_e=1}\mathcal{C}_{\{p_e\}_{e\in E}}^{\theta}\left(G,\{\mc{N}^e\}_{e\in E}\right).
\end{equation}
If the target is an $l$ party GHZ or private state among users in $S$, we also use the notation $\mc{Q}^S:=\mc{C}^{\Phi^\text{GHZ}}$ and $\mc{P}^S:=\mc{C}^\gamma$.  As the class of multipartite private states includes GHZ states, the multipartite private capacity is an upper bound on the multipartite quantum capacity. 

\subsubsection{Upper bounding the capacity}
We are now ready to provide our linear-program upper bound. As a special case of Corollary 4 in \cite{bauml2017fundamental} that for an $(n,\epsilon, \{p_e\}_{e\in E})$ multipartite key generation protocol yielding, among $S=\{s_1,...,s_l\}$, a state $\rho_{X_{s_1}...X_{s_l}}^{(n,k)}$ such that  $\left\|\rho_{X_{s_1}...X_{s_l}}^{(n,k)}-\gamma^{d^{(k)}}_{K_{s_1}S_{s_1}...K_{s_l}S_{s_l}}\right\|_1\leq\epsilon$, it holds \begin{equation}\label{Corr4}
\langle\log d^{(k)}\rangle_{k}\leq\min_{V_S}\frac{1}{1-b\epsilon}\left(\sum_{e\in E:\{e\} \in\partial(V_S)} np_e E_{\sq}(\mc{N}^e)+g(\epsilon)\right),
\end{equation}
where $\min_{V_S}$ is a minimization over all $V_S\subset V$ such that there is at least one pair of vertices $s_i,s_j\in S$ with $s_i\in V_S$ and $s_j\in V\setminus V_S$. Further, $b>0$, $g(\epsilon)\to0$ as $\epsilon\to0$. In the Methods section we have shown that
\begin{equation}\label{minSCutMaxSflow}
\min_{V_S}\sum_{\{vw\}\in \partial(V_S)}c'(\{vw\})=f^{S}_{\max}\left(G',\{c'(\{vw \}) \}_{\{vw\}\in E'}\right),
\end{equation}
where the r.h.s. is given by the LP
\begin{align}
f^{S}_{\max}\left(G',\{c'(\{vw \}) \}_{\{vw\}\in E'}\right)=\textrm{max} &\ f\label{LP5}\\
\forall i,j> i:\ &\ f-\sum_{v:\{s_iv\}\in E'} \left(f^{(ij)}_{s_iv}-f^{(ij)}_{vs_i}\right)\leq0\label{LP5a}\\
\forall i,j> i,\{vw\}\in E':\ &\  f^{(ij)}_{vw}+f^{(ij)}_{wv}\le  c'(\{vw\})\label{LP5b}\\
\forall i,j> i,\forall \{vw\}\in E':\ &\  f^{(ij)}_{vw},f^{(ij)}_{wv}\geq 0\label{LP5c}\\
\forall i,j> i,\ \forall w\in V,w\neq s_i,s_j:\ &\ \sum_{v:\{vw\}\in E'} \left(f^{(ij)}_{vw} -f^{(ij)}_{wv}\right)=0.\label{LP5d}
\end{align}
In the LP given by Supplementary Equations (\ref{LP5})-(\ref{LP5d}) we compute the $S$-connectivity using by finding the smallest max-flow for any disjoint pairs $s_i,s_j\in S$. As the direction of the flow does not matter we can w.l.o.g. assume that $j>i$. We do this by maximizing variable $f$ such that $f$ is a lower bound for the flows between all disjoint pairs $s_i,s_j\in S$ with $j>i$, as expressed in the constraint given by Supplementary Equation (\ref{LP5a}). The constraints given by Supplementary Equations (\ref{LP5b}) and (\ref{LP5d}) are simply the edge capacity and flow conservation constraints, as in the LP given by Supplementary Equations (\ref{LP2})-(\ref{LP2b}), applied to all disjoint pairs $s_i,s_j\in S$ with $j>i$.

Taking the limit $n\to\infty$ and $\epsilon\to0$, we obtain the following:

\begin{proposition}\label{multiupper}
In a network described by a graph $G$ with associated undirected graph $G'$ it holds for a set $S=\{s_1,...,s_l\}$ of users
\begin{equation}
\mc{P}^{S}_{\{p_e\}_{e\in E}}\left(G,\{\mc{N}^e\}_{e\in E}\right)\leq f^{S}_{\max}\left(G',\{c'_{E_{\sq}}(\{vw\},p_{vw},p_{wv})\}_{\{vw\}\in E'}\right),
\end{equation}
where $f^{S}_{\max}$ is given by the LP defined by Supplementary Equations (\ref{LP5}).
\end{proposition}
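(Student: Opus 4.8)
The plan is to chain together the two results already available in the excerpt: the cut bound (\ref{Corr4}), imported as a special case of Corollary 4 of \cite{bauml2017fundamental}, which controls the number of distilled private bits by a minimum weighted $S$-cut, and the graph-theoretic identity (\ref{minSCutMaxSflow}), which rewrites that minimum $S$-cut as the linear program $f^S_{\max}$. Since all of the genuinely quantum-information-theoretic content—bounding the extractable private bits across any Steiner cut by the squashed entanglement of the channels crossing it—is already packaged inside (\ref{Corr4}), the remaining task is a bookkeeping exercise: normalize, take the appropriate limits, and translate notation.

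First I would take the bound (\ref{Corr4}), which holds for every $(n,\epsilon,\{p_e\}_{e\in E})$ protocol producing a state $\epsilon$-close to $\gamma^{d^{(k)}}_{K_{s_1}S_{s_1}\ldots K_{s_l}S_{s_l}}$, and divide both sides by $n$ to obtain
\begin{equation}
\frac{\langle\log d^{(k)}\rangle_{k}}{n}\leq\min_{V_S}\frac{1}{1-b\epsilon}\left(\sum_{e\in E:\{e\}\in\partial(V_S)}p_e E_{\sq}(\mc{N}^e)+\frac{g(\epsilon)}{n}\right).
\end{equation}
The right-hand side is independent of the protocol $\Lambda$, so I may take the supremum over all admissible $\Lambda$ on the left without changing it. I would then send $n\to\infty$, which kills the additive term $g(\epsilon)/n$, followed by $\epsilon\to0$, which sends $b\epsilon\to0$ and hence $1/(1-b\epsilon)\to1$. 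Matching this chain of limits to the definition (\ref{multicap1}) of $\mc{P}^S_{\{p_e\}_{e\in E}}=\mc{C}^\gamma_{\{p_e\}_{e\in E}}$ yields
\begin{equation}
\mc{P}^S_{\{p_e\}_{e\in E}}\left(G,\{\mc{N}^e\}_{e\in E}\right)\leq\min_{V_S}\sum_{e\in E:\{e\}\in\partial(V_S)}p_e E_{\sq}(\mc{N}^e).
\end{equation}

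To finish, I would recognize the directed-edge sum as an undirected edge-capacity sum. By the definition (\ref{cE}) of $c'_{E_{\sq}}$, each undirected edge $\{vw\}\in\partial(V_S)$ carries capacity $p_{wv}E_{\sq}(\mc{N}^{wv})+p_{vw}E_{\sq}(\mc{N}^{vw})$, which collects exactly the contributions of the one or two directed edges of $E$ underlying it; hence $\sum_{e\in E:\{e\}\in\partial(V_S)}p_e E_{\sq}(\mc{N}^e)=\sum_{\{vw\}\in\partial(V_S)}c'_{E_{\sq}}(\{vw\},p_{vw},p_{wv})$. Applying the identity (\ref{minSCutMaxSflow}) with these capacities then replaces the minimum $S$-cut by $f^S_{\max}(G',\{c'_{E_{\sq}}(\{vw\},p_{vw},p_{wv})\}_{\{vw\}\in E'})$, which is the claimed inequality.

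I do not expect any serious obstacle, precisely because the two hard pieces are imported ready-made: the multipartite squashed-entanglement cut bound and the reduction of the min-$S$-cut to the LP. The only points demanding care are the legitimacy of the limit operations—ensuring the supremum over protocols can be taken before the limits and that the $\epsilon$- and $n$-dependent corrections vanish uniformly in $\Lambda$—and the elementary verification that the sum over directed edges crossing the cut coincides with the sum of undirected edge capacities, so that the capacities fed into (\ref{minSCutMaxSflow}) are exactly those appearing in the statement.
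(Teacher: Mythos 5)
Your proposal is correct and follows essentially the same route as the paper: the paper likewise obtains Proposition \ref{multiupper} by dividing the cut bound (\ref{Corr4}) by $n$, taking the limits $n\to\infty$ and $\epsilon\to0$, and then invoking the identity (\ref{minSCutMaxSflow}) between the minimum $S$-cut with capacities $c'_{E_{\sq}}$ and the linear program $f^{S}_{\max}$. Your explicit check that the directed-edge sum over the cut equals the undirected edge-capacity sum is a detail the paper leaves implicit, but it is exactly the intended bookkeeping.
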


We can, again, include an optimization over all usage frequencies $p_e$, providing

\begin{corollary}
In a network described by a graph $G$ with associated undirected graph $G'$ it holds in a scenario of a multipartite user group $S$,
\begin{equation}
\mathcal{P}^{S}\left(G,\{\mc{N}^e\}_{e\in E}\right)\leq\bar{f}^{S}_{\max}(G',\{E_{\sq}(\mc{N}^e)\}_{e\in E}),
\end{equation}
where the r.h.s. is given by the linear program defined by Supplementary Equation (\ref{LP5}) with added optimization over usage frequencies.
\end{corollary}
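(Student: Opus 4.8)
The plan is to derive this corollary directly from Proposition \ref{multiupper} by optimizing over the usage frequencies on both sides of the bound. First I would invoke the definition in Supplementary Equation (\ref{multicap2}), according to which $\mathcal{P}^{S}\left(G,\{\mc{N}^e\}_{e\in E}\right)$ is, by construction, the maximum over the probability simplex $\{p_e\geq0,\ \sum_e p_e=1\}$ of the fixed-frequency capacities $\mathcal{P}^{S}_{\{p_e\}_{e\in E}}\left(G,\{\mc{N}^e\}_{e\in E}\right)$. So the left-hand side of the claimed inequality is merely a maximum of the quantities already bounded in the preceding proposition.

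Proposition \ref{multiupper} supplies, for each individual frequency vector $\{p_e\}_{e\in E}$ in that simplex, the pointwise bound $\mathcal{P}^{S}_{\{p_e\}_{e\in E}}\left(G,\{\mc{N}^e\}_{e\in E}\right)\leq f^{S}_{\max}(G',\{c'_{E_{\sq}}(\{vw\},p_{vw},p_{wv})\}_{\{vw\}\in E'})$. Since this holds uniformly in $\{p_e\}$, I would take the maximum over all admissible frequency vectors on both sides. By monotonicity of the maximum, the left-hand side becomes $\mathcal{P}^{S}$, while the right-hand side becomes $\max_{p_e\geq0,\,\sum_e p_e=1} f^{S}_{\max}(\cdots)$, which is exactly the definition of $\bar{f}^{S}_{\max}(G',\{E_{\sq}(\mc{N}^e)\}_{e\in E})$ given in Supplementary Equation (\ref{LP5bar}). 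This is precisely the bipartite argument used to pass from Proposition \ref{uni:upper} to Corollary \ref{Corr1}, now applied to the multipartite LP (\ref{LP5})-(\ref{LP5d}).

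Two minor points remain to be checked. First, the feasible set $\{p_e\geq0,\ \sum_e p_e=1\}$ appearing in the capacity definition coincides with the set $\{0\leq p_e\leq1,\ \sum_e p_e=1\}$ written in (\ref{LP5bar}); this is immediate, since nonnegativity together with $\sum_e p_e=1$ already forces every $p_e\leq1$, so the two optimizations are over the same domain. Second, one should confirm that the maxima are actually attained rather than being mere suprema: the edge capacities $c'_{E_{\sq}}(\{vw\},p_{vw},p_{wv})$ depend linearly, hence continuously, on the $p_e$ (Supplementary Equation (\ref{cE})), and the optimal value of a parametric LP such as (\ref{LP5})-(\ref{LP5d}) is a continuous function of its parameters, so the maximum over the compact simplex exists and the passage to the maximum is justified.

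I do not anticipate any genuine obstacle here. This is the routine ``optimize the shared parameter on both sides of an inequality'' step, and the entire content has already been established in Proposition \ref{multiupper}; the only care needed is in matching the two descriptions of the feasible region and in invoking continuity of the LP value to guarantee that both maxima are attained.
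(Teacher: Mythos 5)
Your argument is correct and is exactly the (implicit) argument of the paper: apply the pointwise bound of Proposition \ref{multiupper} for each fixed frequency vector and then maximize both sides over the simplex, just as in the passage from Proposition \ref{uni:upper} to Corollary \ref{Corr1}. The two additional checks you make (equivalence of the feasible sets and attainment of the maxima via continuity of the parametric LP value) are sound and, if anything, slightly more careful than the paper itself.
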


\subsubsection{Lower bounding the capacity}
As our last result, we can obtain the following lower bound on $\mc{Q}^{S}_{\{p_e\}_{e\in E}}$:
\begin{proposition}\label{MultiLower}
In a network described by a graph $G$ with associated undirected graph $G'$ it holds for a set $S=\{s_1,...,s_r\}$ of users
\begin{equation}
\mc{Q}^{S}_{\{p_e\}_{e\in E}}\left(G,\{\mc{N}^e\}_{e\in E}\right)\geq \frac{1}{2} f^{S}_{\max}\left(G',\{c'_{Q^{\leftrightarrow}}(\{vw\},p_{vw},p_{wv})\}_{\{vw\}\in E'}\right),
\end{equation}
with $f^{S}_{\max}$ given by the LP defined by Supplementary Equation (\ref{LP5}).
\end{proposition}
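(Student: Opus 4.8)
The plan is to mirror the lower-bound arguments for the bipartite case (Proposition \ref{lowerbound}) and the multi-pair case (Proposition \ref{ConcLowerBound}), replacing edge-disjoint \emph{paths} by edge-disjoint \emph{Steiner trees} spanning $S$, and replacing the integer-flow rounding of Lemma \ref{intFlow} by the Steiner-tree packing bound Eq. (\ref{KreiselConj}). The protocol is again the aggregated repeater protocol of \cite{AK16}: first distill Bell pairs across every edge, then, instead of connecting chains by ordinary entanglement swapping, use the generalized swapping protocol of \cite{wallnofer20162d} that fuses a Steiner tree of qubit Bell pairs into one qubit GHZ state shared by the users in $S$.

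First I would run the aggregated repeater protocol exactly as in the proof of Proposition \ref{lowerbound}: using each channel $\mc{N}^e$ a number of times $\lfloor mkNp_e\rfloor$ proportional to $p_e$ (with $\tilde m:=\max_{e:p_e>0}\lceil 1/p_e\rceil$ and $m\geq\tilde m$) and discarding surplus pairs, this yields a state $\epsilon|E|$-close to the Bell network described by the unit-capacity multigraph $G''_{\lfloor\beta c'_{R_\epsilon^\leftrightarrow}\rfloor}$, where $\beta:=(m-\tilde m)kN$ and $R_\epsilon^\leftrightarrow(\mc{N}^e)\to Q^\leftrightarrow(\mc{N}^e)$ as $m\to\infty$, $\epsilon\to0$. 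Unlike the bipartite proof, no decomposition of a flow into paths is needed (so $N$ plays no special combinatorial role here); instead I would work directly with the integer $S$-connectivity $\lambda_S$ of this multigraph.

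The second step is to lower-bound $\lambda_S(G''_{\lfloor\beta c'_{R_\epsilon^\leftrightarrow}\rfloor})$ by the LP value. Since the multigraph $S$-connectivity equals the minimum, over $S$-cuts, of the number of crossing parallel edges, $\lambda_S=\min_{V_S}\sum_{\{vw\}\in\partial(V_S)}\lfloor\beta c'_{R_\epsilon^\leftrightarrow}(\{vw\})\rfloor$; applying $\lfloor x\rfloor\geq x-1$ on each of the at most $|E'|$ edges of a cut together with the cut/flow identity Eq. (\ref{minSCutMaxSflow}) gives $\lambda_S\geq\beta f^{S}_{\max}(G',\{c'_{R_\epsilon^\leftrightarrow}\})-|E'|$. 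The packing bound Eq. (\ref{KreiselConj}) with $g_1=\tfrac12$ and the finite $g_2=\tfrac{|V\setminus S|}{2}+1$ of \cite{petingi2009packing} then produces at least $\lfloor\tfrac12\lambda_S\rfloor-g_2$ edge-disjoint Steiner trees (i.e. $t_S\geq\lfloor\tfrac12\lambda_S\rfloor-g_2$), each of which \cite{wallnofer20162d} converts into one qubit GHZ state; since $t_S$ copies of a qubit GHZ state are, up to a local relabelling, one GHZ state of dimension $2^{t_S}$, this realizes $\log d=t_S$ GHZ-bits.

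Finally I would assemble the rate and take limits. Dividing $t_S$ by $mkN$ (an upper bound on the total channel uses $\sum_e\lfloor mkNp_e\rfloor$) gives a rate of at least $\tfrac12(1-\tilde m/m)f^{S}_{\max}(G',\{c'_{R_\epsilon^\leftrightarrow}\})-(\tfrac12|E'|+1+g_2)/(mkN)$; letting $m\to\infty$ kills the additive terms and the $\tilde m/m$ factor, and letting $\epsilon\to0$ sends $R_\epsilon^\leftrightarrow\to Q^\leftrightarrow$. Here I would invoke continuity of the optimal value of the parametric LP Eq. (\ref{LP5}) in its capacities \cite{meyer1979continuity} to conclude $f^{S}_{\max}(\{c'_{R_\epsilon^\leftrightarrow}\})\to f^{S}_{\max}(\{c'_{Q^\leftrightarrow}\})$, yielding the claimed bound with the factor $\tfrac12$. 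I expect the main obstacle to be precisely the passage from the real-weighted cut quantity $f^{S}_{\max}$ to the \emph{integer} $S$-connectivity of the Bell multigraph and the ensuing combinatorial packing: the factor $\tfrac12$ is forced by the best available Steiner-tree packing guarantee Eq. (\ref{KreiselConj}), and care is required to ensure that all additive losses (the floors, the constant $g_2$, and the $\epsilon|E|$ trace-distance error propagated through the GHZ-fusion step) are genuinely $o(\beta)$ so that they vanish in the per-channel-use limit.
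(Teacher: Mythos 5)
Your proposal is correct and reaches the stated bound, but it replaces the paper's key combinatorial lemma by a genuinely different (and shorter) argument. The paper's proof goes through Lemma \ref{Lemma:MultiLower}: it takes the optimal fractional solution of the LP given by Supplementary Equation (\ref{LP5}), decomposes each pairwise flow into at most $N$ path-flows, rounds these down to multiples of $1/(kN)$, scales by $2mkN$ to obtain an explicit feasible \emph{integer} multi-commodity flow in $G''_{\lfloor 2mkNc'\rfloor}$, and only then invokes Menger's theorem to translate the integer flows back into a lower bound $\lambda_S \geq 2mkN(f^{S}_{\max}-1/k)$ on the $S$-connectivity; this is why the extra parameter $k$ appears and must be sent to infinity at the very end. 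You instead observe that $\lambda_S$ of the Bell multigraph is, by its very definition in the paper, the minimum over $S$-cuts of $\sum_{\{vw\}\in\partial(V_S)}\lfloor\beta c'(\{vw\})\rfloor$, and bound this below by $\beta\,\min_{V_S}\sum_{\partial(V_S)}c' - |E'|$ via $\lfloor x\rfloor\geq x-1$, then identify the minimum Steiner cut with $f^{S}_{\max}$ through Supplementary Equation (\ref{minSCutMaxSflow}) — an identity the paper establishes independently in the Methods section. This sidesteps the path decomposition, the rounding, and Menger's theorem entirely, and yields an additive loss $|E'|$ that dies under $m\to\infty$ alone (the parameter $k$ becomes vestigial). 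The remaining steps — aggregated repeater protocol, the packing bound Eq. (\ref{KreiselConj}) with $g_1=\tfrac12$, $g_2=\tfrac{|V\setminus S|}{2}+1$, the tree-to-GHZ fusion of \cite{wallnofer20162d}, identification of $t_S$ qubit GHZ copies with $\log d = t_S$, and the limits with LP continuity in the capacities — coincide with the paper's. What the paper's longer route buys is an explicit integer flow construction uniform with Lemmas \ref{intFlow}, \ref{ConcIntFlow} and \ref{TpIntFlow} (and an evenness claim for $\lambda_S$ that is in any case not needed, since $\lfloor\lambda_S/2\rfloor\geq\lambda_S/2-\tfrac12$ is another vanishing additive constant); what your route buys is brevity and the elimination of one limiting parameter.
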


Before proving Proposition \ref{MultiLower}, we need the following:
\begin{lemma}\label{Lemma:MultiLower}
Let us assume we have a finite undirected graph $G'$ with capacities $c'(\{vw\})$ (that can in general depend on $m$ and $k$) and a set $S=\{s_1,...,s_r\}$ of users. Let $N:=\max_{i,j\in\{1,...,r\},i\neq j}N_{ij}$, where $N_{ij}$ are the numbers of directed paths from $s_i$ to $s_j$ that exist in $G'$. Let further $k,m\in\mathbb{N}$. Then it holds
\begin{equation}
\lambda_S(G''_{\lfloor 2mkNc'\rfloor})\geq 2mkN\left(f^{S}_{\max}\left(G',\{c'(\{vw \}) \}_{\{vw\}\in E'}\right)-\frac{1}{k} \right),
\end{equation}
where  $\lambda_S(G''_{\lfloor 2mkNc'\rfloor})$ is even, $G''_{\lfloor 2mkNc'\rfloor}$ is an undirected multigraph with unit capacities as introduced above and $f^{S}_{\max}\left(G',\{c'(\{vw \}) \}_{\{vw\}\in E'}\right)$ is the solution of the LP given by Supplementary Equation (\ref{LP5}). 
\end{lemma}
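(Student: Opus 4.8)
The plan is to reduce the $S$-connectivity to a family of single-pair max-flow problems and then apply the integer rounding of Lemma~\ref{intFlow} to each pair. Recall that in a unit-capacity multigraph the minimum $S$-cut equals the minimum, over all disjoint pairs $s_i,s_j\in S$ with $j>i$, of the minimum $s_i$-$s_j$ cut, i.e.\ $\lambda_S(G'')=\min_{i<j}\lambda_{s_i,s_j}(G'')$ (this is exactly the decomposition used in Eq.~(\ref{minSCutMaxSflow})); and by the integer max-flow min-cut theorem (Menger's theorem) each local connectivity $\lambda_{s_i,s_j}(G'')$ equals the maximum number of edge-disjoint $s_i$-$s_j$ paths in $G''$. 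Hence it suffices to exhibit, for every disjoint pair, a large number of edge-disjoint paths in $G''_{\lfloor 2mkNc'\rfloor}$ and then to control the parity of the resulting cut.

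First I would fix a pair $(s_i,s_j)$ with $j>i$ and run the argument of Lemma~\ref{intFlow} with $2m$ in place of $m$ and with the \emph{global} $N=\max_{i\neq j}N_{ij}\ge N_{ij}$. Decomposing the optimal solution of the LP~(\ref{LP5}) restricted to this pair into at most $N_{ij}$ path-flows $f^{(ij)}$, rounding each down to the nearest multiple of $1/(kN)$ and scaling by $2m$, I obtain integer path counts $F^{(ij)}=2m\bar n^{(ij)}$ whose total is
\begin{equation}
F^{s_i\to s_j}=\sum_j F^{(ij)}\ge 2mkN\Big(f^{s_i\to s_j}_{\max}(G',\{c'\})-\tfrac1k\Big)\ge 2mkN\Big(f^{S}_{\max}(G',\{c'\})-\tfrac1k\Big),
\end{equation}
where the last inequality uses $f^{s_i\to s_j}_{\max}\ge f^S_{\max}=\min_{i<j}f^{s_i\to s_j}_{\max}$ from Eq.~(\ref{eq:connectivity}). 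The capacity constraint~(\ref{LP5b}) guarantees these path counts are realizable as edge-disjoint paths in the multigraph, so $\lambda_{s_i,s_j}(G''_{\lfloor 2mkNc'\rfloor})\ge F^{s_i\to s_j}$. Minimizing over pairs yields the claimed lower bound on $\lambda_S$.

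It remains to argue that $\lambda_S$ is even, which is the real point of the extra factor $2$. Here I would insist that each edge $\{vw\}$ of $G'$ contribute an \emph{even} number $2\lfloor mkNc'(\{vw\})\rfloor$ of unit-capacity copies. Every $S$-cut is then a sum of even edge multiplicities and is therefore even, so $\lambda_S$ is even; moreover the integer flows built above are even and hence still feasible in this even-multiplicity multigraph, since the largest even integer not exceeding $\lfloor 2mkNc'\rfloor$ is exactly $2\lfloor mkNc'\rfloor$ and so no capacity is lost beyond the $1/k$ slack already budgeted. This parity is precisely what makes the subsequent Kriesell-type packing bound~(\ref{KreiselConj}) with $g_1=\tfrac12$ lose nothing to the floor, because $\lfloor\tfrac12\lambda_S\rfloor=\tfrac12\lambda_S$.

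The main obstacle is this last step: keeping the bookkeeping consistent so that a \emph{single} multigraph simultaneously realizes, for \emph{every} pair, enough edge-disjoint paths while having all of its cuts even. In particular one must check that rounding the per-pair path-flows down to even integers (to enforce parity) neither destroys the edge-capacity feasibility nor erodes the bound by more than the $1/k$ already accounted for. This interplay between the rounding of Lemma~\ref{intFlow}, the even-multiplicity construction, and the minimum over pairs is where the care lies; the remaining limits $m\to\infty$ and $k\to\infty$ in Proposition~\ref{MultiLower} are then routine.
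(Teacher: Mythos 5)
Your core argument is essentially the paper's: decompose the optimal LP~(\ref{LP5}) flow for each disjoint pair into path-flows, round down to multiples of $1/(kN)$, scale by $2m$ to obtain even integer path counts $F^{(ijl)}=2m\bar n^{(ijl)}$ that remain feasible against the capacities $\lfloor 2mkNc'\rfloor$, and invoke Menger's theorem together with the pairwise decomposition of the $S$-cut as in Eq.~(\ref{minSCutMaxSflow}); your per-pair invocation of Lemma~\ref{intFlow} with the global $N\ge N_{ij}$ and the inequality $f^{s_i\to s_j}_{\max}\ge f^S_{\max}$ is a cosmetic repackaging of the same construction. The one place you genuinely diverge is the parity step: you change the multigraph to one with $2\lfloor mkNc'(\{vw\})\rfloor$ copies per edge so that every cut is literally even, whereas the lemma (and the paper's proof) works with $G''_{\lfloor 2mkNc'\rfloor}$ and obtains evenness not of $\lambda_S$ itself but of the constructed lower bound $F=\min_{i<j}\sum_l F^{(ijl)}$, which is a minimum of sums of even integers; this is all that is needed downstream, since $\lfloor \lambda_S/2\rfloor\ge \lfloor F/2\rfloor=F/2\ge mkN\left(f^S_{\max}-\tfrac1k\right)$ feeds into the Kriesell-type bound~(\ref{KreiselConj}) without loss. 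Your variant is internally consistent and arguably cleaner as a justification of the word ``even,'' but as written it proves the bound for a slightly different multigraph than the $G''_{\lfloor 2mkNc'\rfloor}$ named in the statement; if you keep the statement's multigraph you should instead phrase the parity claim about the achieved flow value $F$, exactly as the paper's construction implicitly does.
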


\begin{proof}
Let $m,k\in\mathbb{N}$ and let $\{f^{(ij)}_{vw}\}_{i,j\in\{1...r\},i\neq j,\{vw\}\in E'}$ be the set of edge flows maximizing the LP given by Supplementary Equation (\ref{LP5}) for $G'$  with capacities $c'(\{vw\})$, which in general can depend on $m$ and $k$. As $G'$ is finite, we can, for any $i,j\in\{1,...,r\}$ such that $i\neq j $ always find a finite number $N_{ij}$ of directed paths $P_{s_i \to s_j}^{(ijl)}$ from $s_i$ to $s_j$. Hence $N:=\max_{i,j\in\{1,...,r\},i\neq j}N_{ij}$ is finite. For each path $P_{s_i \to s_j}^{(ijl)}$ we can assign a path-flow $f^{(ijl)}\geq0$ such that for every disjoint $i,j\in\{1...r\}$ it holds
\begin{equation}
\sum_{l=1}^{N_{ij}}f^{(ijl)}=\sum_{v:\{s_iv\}\in E'}\left( f^{(ij)}_{s_iv}-f^{(ij)}_{vs_i}\right)\geq f^{S}_{\max}\left(G',\{c'(\{e\})\}_{\{e\}\in E'}\right)
\end{equation}
(See also \cite{ford1956maximal}). Let us define in analogy to Supplementary Equation (\ref{eq:fvwi})
\begin{equation}
f^{(ijl)}_{vw}=\begin{cases}f^{(ijl)}\text{ if }vw\in P_{s_i \to s_j}^{(ijl)}\\0\text{ else.}\end{cases}
\end{equation}
As in Supplementary Equation (\ref{eq:sumfvwi}), it holds for every edge $\{vw\} \in E'$ that
\begin{align}
&f^{(ij)}_{wv}\geq\sum_{l=1}^{N_{ij}}f^{(ijl)}_{wv}=\sum_{j=1}^{N_{ij}}f^{(ijl)}\delta(ijl,wv),\\
&f^{(ij)}_{vw}\geq\sum_{l=1}^{N_{ij}}f^{(ijl)}_{vw}=\sum_{j=1}^{N_{ij}}f^{(ijl)}\delta(ijl,vw),
\end{align}
where 
\begin{equation}
\delta(ijl,uv)=\begin{cases}
1 \text{ if }uv\in P^{(ijl)}_{s_i\to s_j}\\ 0\text{ else.}
\end{cases}
\end{equation}
Then for each $f^{(ijl)}$ there exists $\bar{n}^{(ijl)}\in\mathbb{N}_0$ such that
\begin{equation}
f^{(ijl)}-\frac{1}{kN}\leq \frac{\bar{n}^{(ijl)}}{kN}\leq  f^{(ijl)}.
\end{equation}
Let us also define even integers $F^{(ijl)}=2m\bar{n}^{(ijl)}$ and $F^{(ij)}_{vw}=\sum_{l=1}^{N_{ij}}F^{(ijl)}\delta(ijl,vw)$. As the $f^{(ij)}_{vw}$ are feasible solutions of the LP given by Supplementary Equation (\ref{LP5}), it holds for any disjoint pair $i,j\in\{1,...,r\}$ and any edge $\{vw\}\in E'$ that
\begin{equation}
\left(F^{(ij)}_{wv}+F^{(ij)}_{vw}\right)\leq \lfloor 2mkN( f^{(ij)}_{wv}+f^{(ij)}_{vw})\rfloor \leq \lfloor 2mkNc'(\{vw\})\rfloor.
\end{equation}
Further, for all disjoint $i,j\in\{1...r\}$, it holds for all $w\in V$ with $w\neq s_i,s_j$ that
\begin{equation}
\sum_{v:\{vw\}\in E'}\left( F^{(ij)}_{vw} - F^{(ij)}_{wv}\right)=\sum_{l=1}^{N_{ij}}F^{(ijl)}\sum_{v:\{vw\}\in E'}\left(\delta(ijl,vw)  -\delta(ijl,wv)\right)=0,
\end{equation}
where we use the same argument as explained after Supplementary Equation (\ref{FlowCons}). It also holds
\begin{align}
\sum_{v:\{s_iv\}\in E'}F^{(ij)}_{s_iv}&=\sum_{l=1}^{N_{ij}}F^{(ijl)}\sum_{v:\{s_iv\}\in E'}\delta(ijl,s_iv)\\
&=\sum_{l=1}^{N_{ij}}F^{(ijl)}\\
&\geq 2mkN\sum_{l=1}^{N_{ij}}\left(f^{(ijl)}-\frac{1}{kN}\right)\\
&\geq 2mkN\left(f^{S}_{\max}\left(G',\{c'(\{e\})\}_{\{e\}\in E'}\right)-\frac{1}{k}\right).
\end{align}
Hence $\{F^{(ij)}_{vw}\}_{i,j\in\{1...r\},i\neq j, \{vw\}\in E'}$ is a feasible solution of the LP given by Supplementary Equation (\ref{LP5}) with capacities $\lfloor 2mkNc'(\{vw\})\rfloor$, providing, for any disjoint $i,j\in\{1,...,r\}$ a flow of
\begin{equation}
F^{s_i\to s_j}=\sum_{l=1}^{N_{ij}}F^{(ijl)}\geq 2m kN\left(f^{S}_{\max}\left(G',\{c'(\{e\})\}_{\{e\}\in E'}\right)-\frac{1}{k}\right).
\end{equation}
As any integer flow of value $F^{(ijl)}$ corresponds to $F^{(ijl)}$ edge-disjoint paths in $G''_{\lfloor 2mkNc'\rfloor}$, we can conclude that there are $F^{s_i\to s_j}$ edge-disjoint paths for every disjoint pair $i,j\in\{1,...,r\}$. Application of Menger's Theorem \cite{menger1927allgemeinen}, which is  the integer version of the max-flow min-cut theorem, finishes the proof.
\end{proof}

\begin{proof} (of Proposition \ref{MultiLower})
Let $N:=\max_{i,j\in\{1,...,r\},i\neq j}N_{ij}$, where $N_{ij}$ are the numbers of directed paths from $s_i$ to $s_j$ that exist in $G'$. Let further $\epsilon>0$ and $k,m\in\mathbb{N}$. As shown in the proof of Proposition \ref{lowerbound}, we can obtain a state $\epsilon |E|$-close in trace distance to a network of Bell states  $\bigotimes_{e\in E}{\Phi_e^+}^{\otimes \lfloor 2(m-\tilde{m}) kNp_e R_{\epsilon}^\leftrightarrow(\mc{N}^e)\rfloor}$, by using each channel $\mc{N}^e$ $\lfloor2 mkNp_e\rfloor$ times, which corresponds to an undirected unit-capacity multigraph $G''_{\lfloor 2(m-\tilde{m}) kNc'_{R_{\epsilon}^\leftrightarrow}\rfloor}$. By Lemma \ref{Lemma:MultiLower} and equation (38) in the Methods section, with $g_1=\frac{1}{2}$ and $g_2=\frac{|V\setminus S|}{2}+1$ as in \cite{petingi2009packing}, $G''_{\lfloor 2(m-\tilde{m}) kNc'_{R_{\epsilon}^\leftrightarrow}\rfloor}$ contains 
\begin{equation}
t_S\left(G''_{\left\lfloor2 (m-\tilde{m}) kNc'_{R_{\epsilon}^\leftrightarrow}\right\rfloor}\right)\geq (m-\tilde{m})kN\left(f^{S}_{\max}\left(G',\{c'_{R_{\epsilon}^\leftrightarrow}(\{vw\},p_{vw},p_{wv})\}_{\{vw\}\in E'}\right)-\frac{1}{k} \right)-g_2
\end{equation}
edge-disjoint $S$-trees. The Bell states forming an $S$-tree can be transformed into a qubit GHZ state among all vertices in the set $S$ by means of the following protocol: In a first step all Bell states are merged into a GHZ state among all nodes in the $S$-tree. This can be done by means of  projective measurement and Pauli corrections \cite{wallnofer20162d}. All the unwanted parties in the GHZ state can be removed by means projective measurements \cite{wallnofer20162d}, leaving only a GHZ state among the nodes in $S$. See also a related work by \cite{yamasaki2017graph}. Hence we can obtain the following rate:
\begin{equation}\label{GHZrate}
\frac{t_S\left(G''_{\left\lfloor 2(m-\tilde{m}) kNc'_{R_{\epsilon}^\leftrightarrow}\right\rfloor}\right)}{2m kN}\geq   \frac{1}{2}\left(1-\frac{\tilde{m}}{m}\right)\left(f^{S}_{\max}\left(G',\{c'_{R_{\epsilon}^\leftrightarrow}(\{vw\},p_{vw},p_{wv})\}_{\{vw\}\in E'}\right)-\frac{1}{k} \right) -\left(1-\frac{\tilde{m}}{m}\right)\frac{g_2}{2mkN}.
\end{equation}
Going to the limit $m\to\infty$ and $\epsilon \to0$, the rates $R_{\epsilon}^\leftrightarrow(\mc{N}^e)$ reach the two-way assisted quantum capacities $Q^\leftrightarrow(\mc{N}^e)$. Hence, as $g_2$ is finite,
\begin{equation}
 \mathcal{Q}^S_{\{p_e\}_{e\in E}}\left(G,\{\mc{N}^e\}_{e\in E}\right)\geq\lim_{\epsilon\to0}\lim_{m\to\infty}\frac{t_S\left(G''_{\left\lfloor 2(m-\tilde{m}) kNc'_{R_{\epsilon}^\leftrightarrow}\right\rfloor}\right)}{2m kN}\geq \frac{1}{2}\left(f^{S}_{\max}\left(G',\{c'_{Q^\leftrightarrow}(\{vw\},p_{vw},p_{wv})\}_{\{vw\}\in E'}\right)-\frac{1}{k} \right).
\end{equation}
Taking the limit $k\to\infty$ finishes the proof.
\end{proof}

Optimization over usage frequencies provides us with the following:
\begin{corollary}
In a network described by a graph $G$ with associated undirected graph $G'$ it holds in a scenario of a multipartite user group $S$,
\begin{equation}
\mathcal{Q}^S\left(G,\{\mc{N}^e\}_{e\in E}\right)\geq \frac{1}{2}\bar{f}^{S}_{\max}(G',\{Q^{\leftrightarrow}(\mc{N}^e)\}_{e\in E}),
\end{equation}
where the r.h.s. is given by the linear program given by Supplementary Equation (\ref{LP5}) with added optimization over usage frequencies.
\end{corollary}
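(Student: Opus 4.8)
The plan is to obtain this corollary directly from Proposition \ref{MultiLower} by carrying the maximization over admissible usage frequencies through the fixed-frequency bound. Recall that $\mathcal{Q}^S(G,\{\mc{N}^e\}_{e\in E})$ is defined (Supplementary Equation (\ref{multicap2})) as the maximum of the fixed-frequency capacity $\mathcal{Q}^S_{\{p_e\}_{e\in E}}(G,\{\mc{N}^e\}_{e\in E})$ over all frequency vectors with $p_e\geq0$ and $\sum_e p_e=1$, while $\bar{f}^{S}_{\max}(G',\{Q^{\leftrightarrow}(\mc{N}^e)\}_{e\in E})$ is defined (Supplementary Equation (\ref{LP5bar})) as precisely the same maximization applied to $f^{S}_{\max}(G',\{c'_{Q^{\leftrightarrow}}(\{vw\},p_{vw},p_{wv})\}_{\{vw\}\in E'})$. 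Since both maximizations range over exactly the same compact simplex of frequencies, the fixed-frequency inequality should transfer verbatim once the constant prefactor is handled.

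Concretely, I would proceed as follows. First, I fix an arbitrary admissible frequency vector $\{p_e\}_{e\in E}$; Proposition \ref{MultiLower} then yields
\begin{equation}
\mathcal{Q}^S_{\{p_e\}_{e\in E}}\left(G,\{\mc{N}^e\}_{e\in E}\right)\geq\frac{1}{2}f^{S}_{\max}\left(G',\{c'_{Q^{\leftrightarrow}}(\{vw\},p_{vw},p_{wv})\}_{\{vw\}\in E'}\right).
\end{equation}
Second, because $\mathcal{Q}^S$ is the maximum of $\mathcal{Q}^S_{\{p_e\}_{e\in E}}$ over all such vectors, it dominates the left-hand side for this particular choice, so
\begin{equation}
\mathcal{Q}^S\left(G,\{\mc{N}^e\}_{e\in E}\right)\geq\frac{1}{2}f^{S}_{\max}\left(G',\{c'_{Q^{\leftrightarrow}}(\{vw\},p_{vw},p_{wv})\}_{\{vw\}\in E'}\right)
\end{equation}
holds for \emph{every} admissible $\{p_e\}_{e\in E}$. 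Third, since the factor $1/2$ is frequency-independent and the left-hand side no longer depends on the particular choice, I take the maximum of the right-hand side over all admissible $\{p_e\}_{e\in E}$, pull the constant out of the maximization, and invoke the definition (\ref{LP5bar}) to identify $\max_{\{p_e\}}f^{S}_{\max}(\cdots)=\bar{f}^{S}_{\max}(G',\{Q^{\leftrightarrow}(\mc{N}^e)\}_{e\in E})$. This gives the asserted bound.

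The argument is essentially bookkeeping, so I do not anticipate a genuine obstacle; the only subtlety worth verifying is that the maximum over the frequency simplex is actually attained rather than merely a supremum, so that the stated equality in (\ref{LP5bar}) is legitimate. This follows from compactness of the simplex $\{p_e\geq0,\ \sum_e p_e=1\}$ together with the continuity of the optimal value of the parametric linear program (\ref{LP5}) in its capacity parameters $c'_{Q^{\leftrightarrow}}(\{vw\},p_{vw},p_{wv})$, which are themselves linear (hence continuous) in the $p_e$; this is the same continuity fact (cf.\ \cite{meyer1979continuity}) already used in the proof of Proposition \ref{lowerbound}. With attainment secured, the interchange of maximizing over frequencies and multiplying by $1/2$ is immediate and the corollary follows.
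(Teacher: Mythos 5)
Your proposal is correct and follows essentially the same route as the paper, which derives this corollary from Proposition \ref{MultiLower} simply by adding the optimization over usage frequencies on both sides (the paper states this in one line without further elaboration). Your extra remark about attainment of the maximum is harmless but not actually needed for the inequality: since the fixed-frequency bound holds for \emph{every} admissible $\{p_e\}_{e\in E}$, the left-hand side dominates the supremum of the right-hand side regardless of whether that supremum is attained.
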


\subsection{Numerical Examples}\label{sec:num}

\begin{figure}

\centering
\includegraphics[width=\textwidth]{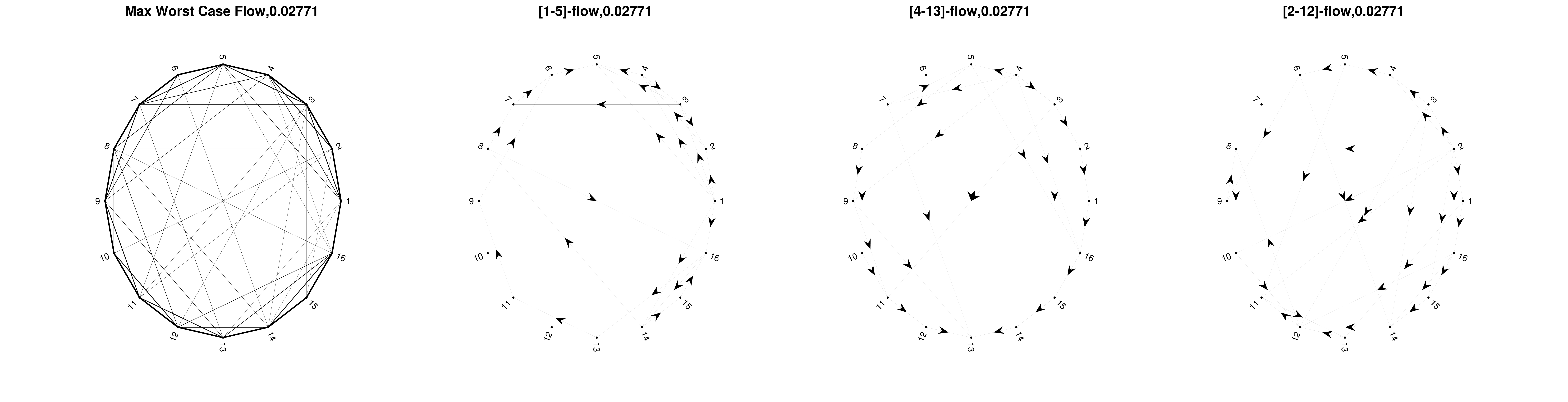}
\caption{Example of a multi-commodity flow instance, maximizing the worst-case flow, the LP given by Supplementary Equations (\ref{LP4})-(\ref{LP4c}), in a chord network with $l=4$ without optimization of the usage frequencies. On the left, one can see the network, with linewidths corresponding to its capacities. The other plots with legend above the plot `$[a-b]-$flow, $x$' show the respective edge flows between the pair of users $a$ and $b$ with flow value $x$, that were obtained in the optimization.}\label{fig:chord1}

\
\includegraphics[width=\textwidth]{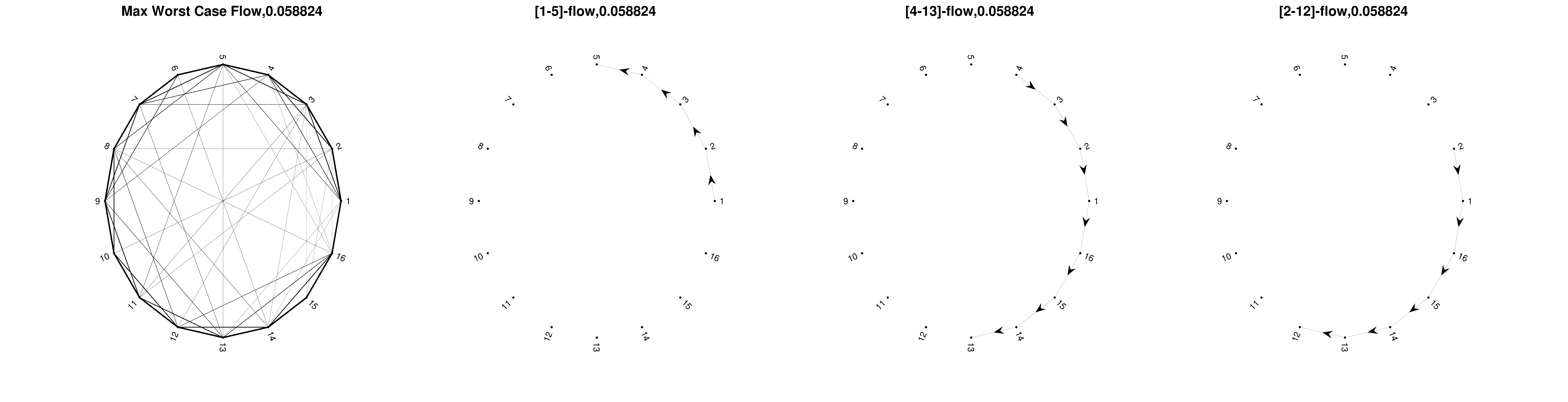}

\caption{Same network as in Supplementary Figure \ref{fig:chord1} with optimization over usage frequencies.}\label{fig:chord1a}


\

\includegraphics[width=\textwidth]{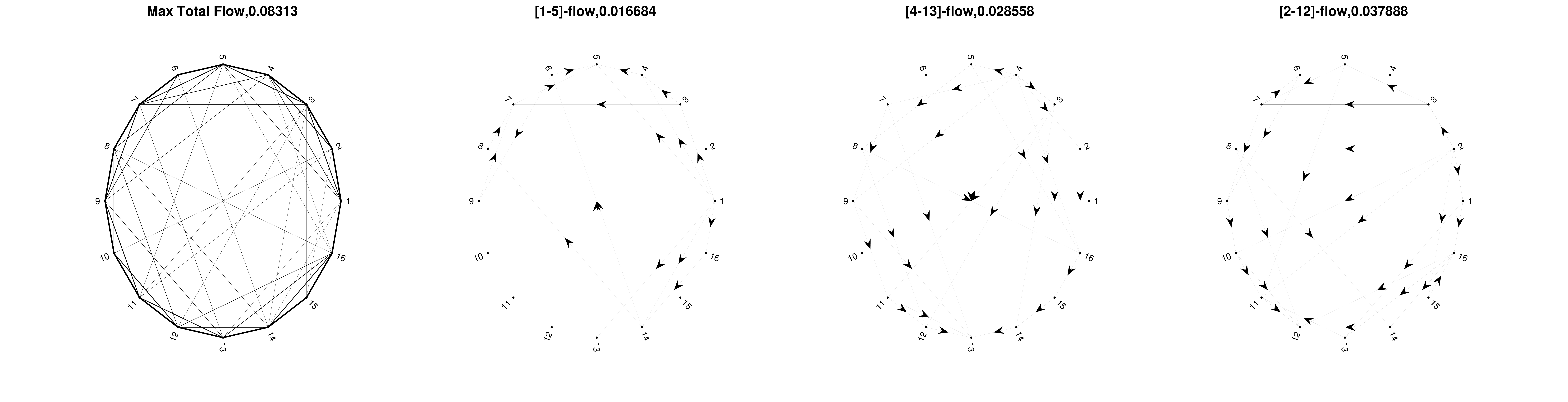}
\caption{Example of a multi-commodity flow instance, maximizing the total multi-commodity flow, the LP given by Supplementary Equations (\ref{LP3})-(\ref{LP3b}) without optimization of the usage frequencies.}\label{fig:chord2}

\
\includegraphics[width=\textwidth]{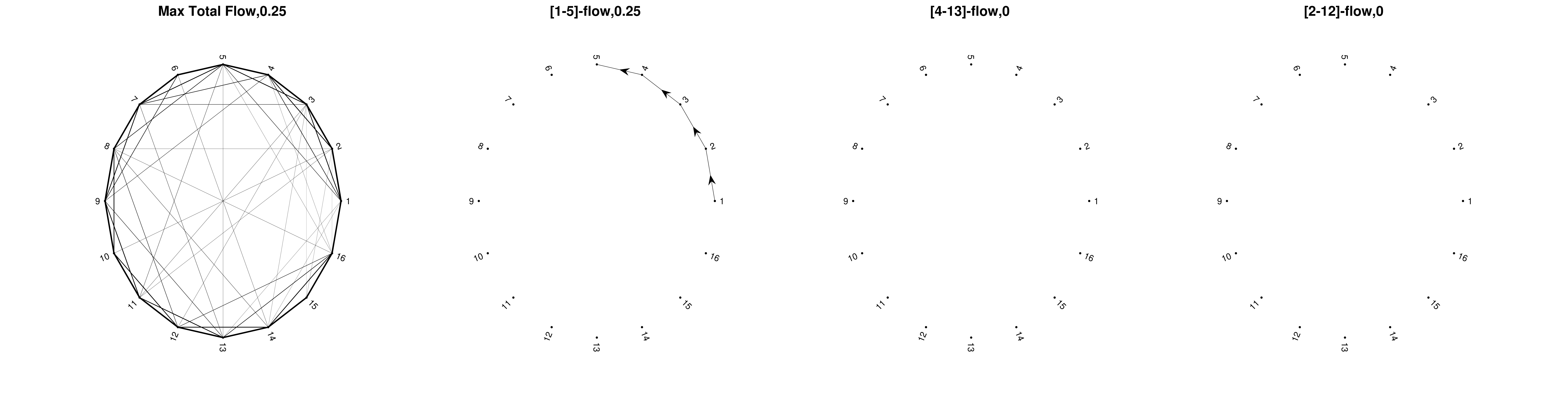}

\caption{Same network as in Supplementary Figure \ref{fig:chord2} with optimization over usage frequencies.}\label{fig:chord2a}
\end{figure}

As a proof of principle demonstration, we considered a chord network, which is a model for classical peer-to-peer networks \cite{stoica2003chord}. A chord network consists of $N=2^l$ nodes $v_0,...,v_{N-1}$ arranged in a circle and connected by edges $\{ e_{\rm circle}\}$ with some constant capacity $c'(\{e_{\text{circle}}\})=c_0$.  In addition, there are diagonal edges $\{e_{\rm diagonal}\}$ connecting randomly chosen nodes $v_n$ and $v_{n + m_i  \mod N}$.  The index $m_i$ is an integer randomly chosen out of the interval $\left[2^{i-1},2^{i}\right]$ for $1\leq i\leq l$. The capacity of these edges decrease with distance, which we model as $c'(\{e_{\text{diagonal}}\})=\frac{c_0}{|m_i|}$. 

In the case of the network consisting of lossy optical channels with transmissivity $\eta$, such that there is one channel $\mc{N}_{\eta}^e$ for each undirected edge $\{e_{\rm circle}\}$ in the circle, i.e. $|E|=|E'|$, and a flooding protocol with constant user frequencies $|E|^{-1}$ one can set $c_0=|E|^{-1}Q^\leftrightarrow(\mc{N}_{\eta}^e)=-|E|^{-1}\log(1-\eta)$ \cite{PLOB17}. As lossy optical channels are distillable, by Theorems \ref{multiuni:upper} and \ref{ConcLowerBound}, we can obtain upper and lower bounds on the worst-case and total quantum capacities by computing the LPs given by Supplementary Equations (\ref{LP4})-(\ref{LP4c}) and (\ref{LP3})-(\ref{LP3b}), respectively. 

Supplementary Figure \ref{fig:chord1} shows an example of a multi-commodity flow instance, maximizing the worst-case multi-commodity flow given by Supplementary Equations (\ref{LP4})-(\ref{LP4c}) in a  chord network with $l=4$. Supplementary Figure \ref{fig:chord2} shows the total multi-commodity flow maximizing  Supplementary Equations (\ref{LP3})-(\ref{LP3b}). 

Comparing Supplementary Figures \ref{fig:chord1} and \ref{fig:chord2} with Supplementary Figures \ref{fig:chord1a} and \ref{fig:chord2a}, respectively, one can see the effect of the the optimization over usage frequencies: For fixed usage frequencies, as in Supplementary Figures \ref{fig:chord1} and \ref{fig:chord2}, the capacities of the circle edges are exhausted and diagonal edges have to be used. When we optimize over usage frequencies, as in Supplementary Figures \ref{fig:chord1} and \ref{fig:chord2}, the capacities of the circle edges are increased and no use of diagonal edges is necessary.

\acknowledgements
We would like to thank Bill Munro, Simone Severini, Hayata Yamasaki, Kenneth Goodenough, Kaushik Chakraborty and Stephanie Wehner for insightful discussions. This work was supported by the Netherlands Organization for Scientific Research (NWO/OCW), as part of the Quantum Software Consortium program (project number 024.003.037 / 3368) and an NWO Vidi grant. K.A. thanks support from JST, PRESTO Grant Number JPMJPR1861. S.B. acknowledges support from the Spanish MINECO (QIBEQI FIS2016-80773-P, Severo Ochoa SEV-2015-0522), Fundacio Cellex, Generalitat de Catalunya (SGR 1381 and CERCA Programme) as well as from the European Union's Horizon 2020 research and innovation programme, grant agreement number 820466 (project CiViQ).

\newpage

\bibliography{Network}

\begin{thebibliography}{10}
\expandafter\ifx\csname url\endcsname\relax
  \def\url#1{\texttt{#1}}\fi
\expandafter\ifx\csname urlprefix\endcsname\relax\def\urlprefix{URL }\fi
\providecommand{\bibinfo}[2]{#2}
\providecommand{\eprint}[2][]{\url{#2}}

\bibitem{EK91}
\bibinfo{author}{Ekert, A.~K.}
\newblock \bibinfo{title}{Quantum cryptography based on {B}ell's theorem}.
\newblock \emph{\bibinfo{journal}{Physical Review Letters}}
  \textbf{\bibinfo{volume}{67}}, \bibinfo{pages}{661--663}
  (\bibinfo{year}{1991}).

\bibitem{BBM92}
\bibinfo{author}{Bennett, C.~H.}, \bibinfo{author}{Brassard, G.} \&
  \bibinfo{author}{Mermin, N.~D.}
\newblock \bibinfo{title}{Quantum cryptography without {B}ell's theorem}.
\newblock \emph{\bibinfo{journal}{Physical Review Letters}}
  \textbf{\bibinfo{volume}{68}}, \bibinfo{pages}{557} (\bibinfo{year}{1992}).

\bibitem{bennett1993teleporting}
\bibinfo{author}{Bennett, C.~H.} \emph{et~al.}
\newblock \bibinfo{title}{Teleporting an unknown quantum state via dual
  classical and {E}instein-{P}odolsky-{R}osen channels}.
\newblock \emph{\bibinfo{journal}{Physical Review Letters}}
  \textbf{\bibinfo{volume}{70}}, \bibinfo{pages}{1895} (\bibinfo{year}{1993}).

\bibitem{wehner2018quantum}
\bibinfo{author}{Wehner, S.}, \bibinfo{author}{Elkouss, D.} \&
  \bibinfo{author}{Hanson, R.}
\newblock \bibinfo{title}{Quantum internet: A vision for the road ahead}.
\newblock \emph{\bibinfo{journal}{Science}} \textbf{\bibinfo{volume}{362}},
  \bibinfo{pages}{eaam9288} (\bibinfo{year}{2018}).

\bibitem{augusiak2009multipartite}
\bibinfo{author}{Augusiak, R.} \& \bibinfo{author}{Horodecki, P.}
\newblock \bibinfo{title}{Multipartite secret key distillation and bound
  entanglement}.
\newblock \emph{\bibinfo{journal}{Physical Review A}}
  \textbf{\bibinfo{volume}{80}}, \bibinfo{pages}{042307}
  (\bibinfo{year}{2009}).

\bibitem{hillery1999quantum}
\bibinfo{author}{Hillery, M.}, \bibinfo{author}{Bu{\v{z}}ek, V.} \&
  \bibinfo{author}{Berthiaume, A.}
\newblock \bibinfo{title}{Quantum secret sharing}.
\newblock \emph{\bibinfo{journal}{Physical Review A}}
  \textbf{\bibinfo{volume}{59}}, \bibinfo{pages}{1829} (\bibinfo{year}{1999}).

\bibitem{komar2014quantum}
\bibinfo{author}{Komar, P.} \emph{et~al.}
\newblock \bibinfo{title}{A quantum network of clocks}.
\newblock \emph{\bibinfo{journal}{Nature Physics}}
  \textbf{\bibinfo{volume}{10}}, \bibinfo{pages}{582--587}
  (\bibinfo{year}{2014}).

\bibitem{raussendorf2003measurement}
\bibinfo{author}{Raussendorf, R.}, \bibinfo{author}{Browne, D.~E.} \&
  \bibinfo{author}{Briegel, H.~J.}
\newblock \bibinfo{title}{Measurement-based quantum computation on cluster
  states}.
\newblock \emph{\bibinfo{journal}{Physical review A}}
  \textbf{\bibinfo{volume}{68}}, \bibinfo{pages}{022312}
  (\bibinfo{year}{2003}).

\bibitem{smith2008quantum}
\bibinfo{author}{Smith, G.} \& \bibinfo{author}{Yard, J.}
\newblock \bibinfo{title}{Quantum communication with zero-capacity channels}.
\newblock \emph{\bibinfo{journal}{Science}} \textbf{\bibinfo{volume}{321}},
  \bibinfo{pages}{1812--1815} (\bibinfo{year}{2008}).

\bibitem{pirandola2016capacities}
\bibinfo{author}{Pirandola, S.}
\newblock \bibinfo{title}{Capacities of repeater-assisted quantum
  communications}.
\newblock \emph{\bibinfo{journal}{arXiv preprint arXiv:1601.00966}}
  (\bibinfo{year}{2016}).

\bibitem{pirandola2019capacities}
\bibinfo{author}{Pirandola, S.}
\newblock \bibinfo{title}{End-to-end capacities of a quantum communication
  network}.
\newblock \emph{\bibinfo{journal}{Communications Physics}}
  \textbf{\bibinfo{volume}{2}}, \bibinfo{pages}{51} (\bibinfo{year}{2019}).

\bibitem{el2011network}
\bibinfo{author}{El~Gamal, A.} \& \bibinfo{author}{Kim, Y.-H.}
\newblock \emph{\bibinfo{title}{Network information theory}}
  (\bibinfo{publisher}{Cambridge university press}, \bibinfo{year}{2011}).

\bibitem{AML16}
\bibinfo{author}{Azuma, K.}, \bibinfo{author}{Mizutani, A.} \&
  \bibinfo{author}{Lo, H.-K.}
\newblock \bibinfo{title}{Fundamental rate-loss tradeoff for the quantum
  internet}.
\newblock \emph{\bibinfo{journal}{Nature Communications}}
  \textbf{\bibinfo{volume}{7}}, \bibinfo{pages}{13523} (\bibinfo{year}{2016}).

\bibitem{AK16}
\bibinfo{author}{Azuma, K.} \& \bibinfo{author}{Kato, G.}
\newblock \bibinfo{title}{Aggregating quantum repeaters for the quantum
  internet}.
\newblock \emph{\bibinfo{journal}{Physical Review A}}
  \textbf{\bibinfo{volume}{96}}, \bibinfo{pages}{032332}
  (\bibinfo{year}{2017}).

\bibitem{rigovacca2018versatile}
\bibinfo{author}{Rigovacca, L.} \emph{et~al.}
\newblock \bibinfo{title}{Versatile relative entropy bounds for quantum
  networks}.
\newblock \emph{\bibinfo{journal}{New Journal of Physics}}
  \textbf{\bibinfo{volume}{20}}, \bibinfo{pages}{013033}
  (\bibinfo{year}{2018}).

\bibitem{van2013designing}
\bibinfo{author}{Van~Meter, R.} \& \bibinfo{author}{Touch, J.}
\newblock \bibinfo{title}{Designing quantum repeater networks}.
\newblock \emph{\bibinfo{journal}{IEEE Communications Magazine}}
  \textbf{\bibinfo{volume}{51}}, \bibinfo{pages}{64--71}
  (\bibinfo{year}{2013}).

\bibitem{van2013path}
\bibinfo{author}{Van~Meter, R.}, \bibinfo{author}{Satoh, T.},
  \bibinfo{author}{Ladd, T.~D.}, \bibinfo{author}{Munro, W.~J.} \&
  \bibinfo{author}{Nemoto, K.}
\newblock \bibinfo{title}{Path selection for quantum repeater networks}.
\newblock \emph{\bibinfo{journal}{Networking Science}}
  \textbf{\bibinfo{volume}{3}}, \bibinfo{pages}{82--95} (\bibinfo{year}{2013}).

\bibitem{epping2016quantum}
\bibinfo{author}{Epping, M.}, \bibinfo{author}{Kampermann, H.} \&
  \bibinfo{author}{Bru{\ss}, D.}
\newblock \bibinfo{title}{Robust entanglement distribution via quantum network
  coding}.
\newblock \emph{\bibinfo{journal}{New Journal of Physics}}
  \textbf{\bibinfo{volume}{18}}, \bibinfo{pages}{103052}
  (\bibinfo{year}{2016}).

\bibitem{epping2016large}
\bibinfo{author}{Epping, M.}, \bibinfo{author}{Kampermann, H.} \&
  \bibinfo{author}{Bru{\ss}, D.}
\newblock \bibinfo{title}{Large-scale quantum networks based on graphs}.
\newblock \emph{\bibinfo{journal}{New Journal of Physics}}
  \textbf{\bibinfo{volume}{18}}, \bibinfo{pages}{053036}
  (\bibinfo{year}{2016}).

\bibitem{wallnofer20162d}
\bibinfo{author}{Walln{\"o}fer, J.}, \bibinfo{author}{Zwerger, M.},
  \bibinfo{author}{Muschik, C.}, \bibinfo{author}{Sangouard, N.} \&
  \bibinfo{author}{D{\"u}r, W.}
\newblock \bibinfo{title}{Two-dimensional quantum repeaters}.
\newblock \emph{\bibinfo{journal}{Physical Review A}}
  \textbf{\bibinfo{volume}{94}}, \bibinfo{pages}{052307}
  (\bibinfo{year}{2016}).

\bibitem{hahn2019quantum}
\bibinfo{author}{Hahn, F.}, \bibinfo{author}{Pappa, A.} \&
  \bibinfo{author}{Eisert, J.}
\newblock \bibinfo{title}{Quantum network routing and local complementation}.
\newblock \emph{\bibinfo{journal}{npj Quantum Information}}
  \textbf{\bibinfo{volume}{5}}, \bibinfo{pages}{1--7} (\bibinfo{year}{2019}).

\bibitem{chakraborty2019distributed}
\bibinfo{author}{Chakraborty, K.}, \bibinfo{author}{Rozpedek, F.},
  \bibinfo{author}{Dahlberg, A.} \& \bibinfo{author}{Wehner, S.}
\newblock \bibinfo{title}{Distributed routing in a quantum internet}.
\newblock \emph{\bibinfo{journal}{arXiv preprint arXiv:1907.11630}}
  (\bibinfo{year}{2019}).

\bibitem{pirandola2019bounds}
\bibinfo{author}{Pirandola, S.}
\newblock \bibinfo{title}{Bounds for multi-end communication over quantum
  networks}.
\newblock \emph{\bibinfo{journal}{Quantum Science and Technology}}
  \textbf{\bibinfo{volume}{4}}, \bibinfo{pages}{045006} (\bibinfo{year}{2019}).

\bibitem{bauml2017fundamental}
\bibinfo{author}{B{\"a}uml, S.} \& \bibinfo{author}{Azuma, K.}
\newblock \bibinfo{title}{Fundamental limitation on quantum broadcast
  networks}.
\newblock \emph{\bibinfo{journal}{Quantum Science and Technology}}
  \textbf{\bibinfo{volume}{2}}, \bibinfo{pages}{024004} (\bibinfo{year}{2017}).

\bibitem{yamasaki2017graph}
\bibinfo{author}{Yamasaki, H.}, \bibinfo{author}{Soeda, A.} \&
  \bibinfo{author}{Murao, M.}
\newblock \bibinfo{title}{Graph-associated entanglement cost of a multipartite
  state in exact and finite-block-length approximate constructions}.
\newblock \emph{\bibinfo{journal}{Physical Review A}}
  \textbf{\bibinfo{volume}{96}}, \bibinfo{pages}{032330}
  (\bibinfo{year}{2017}).

\bibitem{ye1991n3l}
\bibinfo{author}{Ye, Y.}
\newblock \bibinfo{title}{An {O}(n$^3${L}) potential reduction algorithm for
  linear programming}.
\newblock \emph{\bibinfo{journal}{Mathematical programming}}
  \textbf{\bibinfo{volume}{50}}, \bibinfo{pages}{239--258}
  (\bibinfo{year}{1991}).

\bibitem{aumann1998log}
\bibinfo{author}{Aumann, Y.} \& \bibinfo{author}{Rabani, Y.}
\newblock \bibinfo{title}{An {O}(log k) approximate min-cut max-flow theorem
  and approximation algorithm}.
\newblock \emph{\bibinfo{journal}{SIAM Journal on Computing}}
  \textbf{\bibinfo{volume}{27}}, \bibinfo{pages}{291--301}
  (\bibinfo{year}{1998}).

\bibitem{gunluk2007new}
\bibinfo{author}{G{\"u}nl{\"u}k, O.}
\newblock \bibinfo{title}{A new min-cut max-flow ratio for multicommodity
  flows}.
\newblock \emph{\bibinfo{journal}{SIAM Journal on Discrete Mathematics}}
  \textbf{\bibinfo{volume}{21}}, \bibinfo{pages}{1--15} (\bibinfo{year}{2007}).

\bibitem{garg1996approximate}
\bibinfo{author}{Garg, N.}, \bibinfo{author}{Vazirani, V.~V.} \&
  \bibinfo{author}{Yannakakis, M.}
\newblock \bibinfo{title}{Approximate max-flow min-(multi) cut theorems and
  their applications}.
\newblock \emph{\bibinfo{journal}{SIAM Journal on Computing}}
  \textbf{\bibinfo{volume}{25}}, \bibinfo{pages}{235--251}
  (\bibinfo{year}{1996}).

\bibitem{garg1997primal}
\bibinfo{author}{Garg, N.}, \bibinfo{author}{Vazirani, V.~V.} \&
  \bibinfo{author}{Yannakakis, M.}
\newblock \bibinfo{title}{Primal-dual approximation algorithms for integral
  flow and multicut in trees}.
\newblock \emph{\bibinfo{journal}{Algorithmica}} \textbf{\bibinfo{volume}{18}},
  \bibinfo{pages}{3--20} (\bibinfo{year}{1997}).

\bibitem{PLOB17}
\bibinfo{author}{{Pirandola}, S.}, \bibinfo{author}{{Laurenza}, R.},
  \bibinfo{author}{{Ottaviani}, C.} \& \bibinfo{author}{{Banchi}, L.}
\newblock \bibinfo{title}{Fundamental limits of repeaterless quantum
  communications}.
\newblock \emph{\bibinfo{journal}{Nature Communications}}
  \textbf{\bibinfo{volume}{8}}, \bibinfo{pages}{15043} (\bibinfo{year}{2017}).

\bibitem{wilde2013quantum}
\bibinfo{author}{Wilde, M.~M.}
\newblock \emph{\bibinfo{title}{Quantum information theory}}
  (\bibinfo{publisher}{Cambridge University Press}, \bibinfo{year}{2013}).

\bibitem{Note1}
\bibinfo{note}{In order to avoid confusion, we will use the terms `edge
  capacity' when referring to edges and `capacity', when referring to quantum
  or private capacities of a channel or the entire network.}

\bibitem{horodecki2005secure}
\bibinfo{author}{Horodecki, K.}, \bibinfo{author}{Horodecki, M.},
  \bibinfo{author}{Horodecki, P.} \& \bibinfo{author}{Oppenheim, J.}
\newblock \bibinfo{title}{Secure key from bound entanglement}.
\newblock \emph{\bibinfo{journal}{Physical review letters}}
  \textbf{\bibinfo{volume}{94}}, \bibinfo{pages}{160502}
  (\bibinfo{year}{2005}).

\bibitem{horodecki2009general}
\bibinfo{author}{Horodecki, K.}, \bibinfo{author}{Horodecki, M.},
  \bibinfo{author}{Horodecki, P.} \& \bibinfo{author}{Oppenheim, J.}
\newblock \bibinfo{title}{General paradigm for distilling classical key from
  quantum states}.
\newblock \emph{\bibinfo{journal}{IEEE Transactions on Information Theory}}
  \textbf{\bibinfo{volume}{55}}, \bibinfo{pages}{1898--1929}
  (\bibinfo{year}{2009}).

\bibitem{christandl2004squashed}
\bibinfo{author}{Christandl, M.} \& \bibinfo{author}{Winter, A.}
\newblock \bibinfo{title}{Squashed entanglement: An additive entanglement
  measure}.
\newblock \emph{\bibinfo{journal}{Journal of Mathematical Physics}}
  \textbf{\bibinfo{volume}{45}}, \bibinfo{pages}{829--840}
  (\bibinfo{year}{2004}).

\bibitem{datta2009min}
\bibinfo{author}{Datta, N.}
\newblock \bibinfo{title}{Min-and max-relative entropies and a new entanglement
  monotone}.
\newblock \emph{\bibinfo{journal}{IEEE Transactions on Information Theory}}
  \textbf{\bibinfo{volume}{55}}, \bibinfo{pages}{2816--2826}
  (\bibinfo{year}{2009}).

\bibitem{BDSW96}
\bibinfo{author}{Bennett, C.~H.}, \bibinfo{author}{DiVincenzo, D.~P.},
  \bibinfo{author}{Smolin, J.~A.} \& \bibinfo{author}{Wootters, W.~K.}
\newblock \bibinfo{title}{Mixed-state entanglement and quantum error
  correction}.
\newblock \emph{\bibinfo{journal}{Physical Review A}}
  \textbf{\bibinfo{volume}{54}}, \bibinfo{pages}{3824--3851}
  (\bibinfo{year}{1996}).

\bibitem{HHH99}
\bibinfo{author}{Horodecki, M.}, \bibinfo{author}{Horodecki, P.} \&
  \bibinfo{author}{Horodecki, R.}
\newblock \bibinfo{title}{General teleportation channel, singlet fraction, and
  quasidistillation}.
\newblock \emph{\bibinfo{journal}{Physical Review A}}
  \textbf{\bibinfo{volume}{60}}, \bibinfo{pages}{1888--1898}
  (\bibinfo{year}{1999}).

\bibitem{Mul12}
\bibinfo{author}{Mueller-Hermes, A.}
\newblock \emph{\bibinfo{title}{Transposition in Quantum Information Theory}}.
\newblock Master's thesis, \bibinfo{school}{Technical University of Munich}
  (\bibinfo{year}{2012}).

\bibitem{vedral1997quantifying}
\bibinfo{author}{Vedral, V.}, \bibinfo{author}{Plenio, M.~B.},
  \bibinfo{author}{Rippin, M.~A.} \& \bibinfo{author}{Knight, P.~L.}
\newblock \bibinfo{title}{Quantifying entanglement}.
\newblock \emph{\bibinfo{journal}{Physical Review Letters}}
  \textbf{\bibinfo{volume}{78}}, \bibinfo{pages}{2275} (\bibinfo{year}{1997}).

\bibitem{EFS96}
\bibinfo{author}{Elias, P.}, \bibinfo{author}{Feinstein, A.} \&
  \bibinfo{author}{Shannon, C.}
\newblock \bibinfo{title}{A note on the maximum flow through a network}.
\newblock \emph{\bibinfo{journal}{IRE Transactions on Information Theory}}
  \textbf{\bibinfo{volume}{2}}, \bibinfo{pages}{117--119}
  (\bibinfo{year}{1956}).

\bibitem{ford1956maximal}
\bibinfo{author}{Ford, L.~R.} \& \bibinfo{author}{Fulkerson, D.~R.}
\newblock \bibinfo{title}{Maximal flow through a network}.
\newblock \emph{\bibinfo{journal}{Canadian journal of Mathematics}}
  \textbf{\bibinfo{volume}{8}}, \bibinfo{pages}{399--404}
  (\bibinfo{year}{1956}).

\bibitem{murty1983linear}
\bibinfo{author}{Murty, K.~G.}
\newblock \emph{\bibinfo{title}{Linear programming}}
  (\bibinfo{publisher}{Springer}, \bibinfo{year}{1983}).

\bibitem{nishizeki1990planar}
\bibinfo{author}{Nishizeki, T.}
\newblock \bibinfo{title}{Planar graph problems}.
\newblock In \emph{\bibinfo{booktitle}{Computational Graph Theory}},
  \bibinfo{pages}{53--68} (\bibinfo{publisher}{Springer},
  \bibinfo{year}{1990}).

\bibitem{greenberger1989going}
\bibinfo{author}{Greenberger, D.~M.}, \bibinfo{author}{Horne, M.~A.} \&
  \bibinfo{author}{Zeilinger, A.}
\newblock \bibinfo{title}{Going beyond {B}ell's theorem}.
\newblock In \emph{\bibinfo{booktitle}{{B}ell's theorem, quantum theory and
  conceptions of the universe}}, \bibinfo{pages}{69--72}
  (\bibinfo{publisher}{Springer}, \bibinfo{year}{1989}).

\bibitem{cheriyan2006hardness}
\bibinfo{author}{Cheriyan, J.} \& \bibinfo{author}{Salavatipour, M.~R.}
\newblock \bibinfo{title}{Hardness and approximation results for packing
  steiner trees}.
\newblock \emph{\bibinfo{journal}{Algorithmica}} \textbf{\bibinfo{volume}{45}},
  \bibinfo{pages}{21--43} (\bibinfo{year}{2006}).

\bibitem{kriesell2003edge}
\bibinfo{author}{Kriesell, M.}
\newblock \bibinfo{title}{Edge-disjoint trees containing some given vertices in
  a graph}.
\newblock \emph{\bibinfo{journal}{Journal of Combinatorial Theory, Series B}}
  \textbf{\bibinfo{volume}{88}}, \bibinfo{pages}{53--65}
  (\bibinfo{year}{2003}).

\bibitem{lau2004approximate}
\bibinfo{author}{Lau, L.~C.}
\newblock \bibinfo{title}{An approximate max-{S}teiner-tree-packing
  min-{S}teiner-cut theorem}.
\newblock In \emph{\bibinfo{booktitle}{Proceedings of the 45th Annual IEEE
  Symposium on Foundations of Computer Science}}, \bibinfo{pages}{61--70}
  (\bibinfo{organization}{IEEE}, \bibinfo{year}{2004}).

\bibitem{petingi2009packing}
\bibinfo{author}{Petingi, L.} \& \bibinfo{author}{Talafha, M.}
\newblock \bibinfo{title}{Packing the steiner trees of a graph}.
\newblock \emph{\bibinfo{journal}{Networks}} \textbf{\bibinfo{volume}{54}},
  \bibinfo{pages}{90--94} (\bibinfo{year}{2009}).

\bibitem{li2004network}
\bibinfo{author}{Li, Z.} \& \bibinfo{author}{Li, B.}
\newblock \bibinfo{title}{Network coding: The case of multiple unicast
  sessions}.
\newblock In \emph{\bibinfo{booktitle}{Allerton Conference on Communications}},
  vol.~\bibinfo{volume}{16}, \bibinfo{pages}{8} (\bibinfo{publisher}{IEEE},
  \bibinfo{year}{2004}).

\bibitem{harvey2004comparing}
\bibinfo{author}{Harvey, N.~J.}, \bibinfo{author}{Kleinberg, R.~D.} \&
  \bibinfo{author}{Lehman, A.~R.}
\newblock \bibinfo{title}{Comparing network coding with multicommodity flow for
  the k-pairs communication problem}  (\bibinfo{year}{2004}).

\bibitem{ford2015flows}
\bibinfo{author}{Ford~Jr, L.~R.} \& \bibinfo{author}{Fulkerson, D.~R.}
\newblock \emph{\bibinfo{title}{Flows in networks (part III)}}
  (\bibinfo{publisher}{Princeton university press}, \bibinfo{year}{2015}).

\bibitem{klein1997approximation}
\bibinfo{author}{Klein, P.~N.}, \bibinfo{author}{Plotkin, S.~A.},
  \bibinfo{author}{Rao, S.} \& \bibinfo{author}{Tardos, E.}
\newblock \bibinfo{title}{Approximation algorithms for steiner and directed
  multicuts}.
\newblock \emph{\bibinfo{journal}{Journal of Algorithms}}
  \textbf{\bibinfo{volume}{22}}, \bibinfo{pages}{241--269}
  (\bibinfo{year}{1997}).

\bibitem{kannan2004clusterings}
\bibinfo{author}{Kannan, R.}, \bibinfo{author}{Vempala, S.} \&
  \bibinfo{author}{Vetta, A.}
\newblock \bibinfo{title}{On clusterings: Good, bad and spectral}.
\newblock \emph{\bibinfo{journal}{Journal of the ACM (JACM)}}
  \textbf{\bibinfo{volume}{51}}, \bibinfo{pages}{497--515}
  (\bibinfo{year}{2004}).

\bibitem{schaeffer2007graph}
\bibinfo{author}{Schaeffer, S.~E.}
\newblock \bibinfo{title}{Graph clustering}.
\newblock \emph{\bibinfo{journal}{Computer science review}}
  \textbf{\bibinfo{volume}{1}}, \bibinfo{pages}{27--64} (\bibinfo{year}{2007}).

\bibitem{leon2003communication}
\bibinfo{author}{Leon-Garcia, A.} \& \bibinfo{author}{Widjaja, I.}
\newblock \emph{\bibinfo{title}{Communication networks (chapter 4)}}
  (\bibinfo{publisher}{McGraw-Hill, Inc.}, \bibinfo{year}{2003}).

\bibitem{dantzig1955max}
\bibinfo{author}{Dantzig, G.} \& \bibinfo{author}{Fulkerson, D.}
\newblock \bibinfo{title}{On the max flow min cut theorem of networks}.
\newblock \bibinfo{type}{Tech. Rep.}, \bibinfo{institution}{Rand Corp Santa
  Monica CA} (\bibinfo{year}{1955}).

\bibitem{shahrokhi1990maximum}
\bibinfo{author}{Shahrokhi, F.} \& \bibinfo{author}{Matula, D.~W.}
\newblock \bibinfo{title}{The maximum concurrent flow problem}.
\newblock \emph{\bibinfo{journal}{Journal of the ACM (JACM)}}
  \textbf{\bibinfo{volume}{37}}, \bibinfo{pages}{318--334}
  (\bibinfo{year}{1990}).

\bibitem{leighton1999multicommodity}
\bibinfo{author}{Leighton, T.} \& \bibinfo{author}{Rao, S.}
\newblock \bibinfo{title}{Multicommodity max-flow min-cut theorems and their
  use in designing approximation algorithms}.
\newblock \emph{\bibinfo{journal}{Journal of the ACM (JACM)}}
  \textbf{\bibinfo{volume}{46}}, \bibinfo{pages}{787--832}
  (\bibinfo{year}{1999}).

\bibitem{linial1995geometry}
\bibinfo{author}{Linial, N.}, \bibinfo{author}{London, E.} \&
  \bibinfo{author}{Rabinovich, Y.}
\newblock \bibinfo{title}{The geometry of graphs and some of its algorithmic
  applications}.
\newblock \emph{\bibinfo{journal}{Combinatorica}}
  \textbf{\bibinfo{volume}{15}}, \bibinfo{pages}{215--245}
  (\bibinfo{year}{1995}).

\bibitem{gupta2004cuts}
\bibinfo{author}{Gupta, A.}, \bibinfo{author}{Newman, I.},
  \bibinfo{author}{Rabinovich, Y.} \& \bibinfo{author}{Sinclair, A.}
\newblock \bibinfo{title}{Cuts, trees and l 1-embeddings of graphs}.
\newblock \emph{\bibinfo{journal}{Combinatorica}}
  \textbf{\bibinfo{volume}{24}}, \bibinfo{pages}{233--269}
  (\bibinfo{year}{2004}).

\bibitem{chekuri2006embedding}
\bibinfo{author}{Chekuri, C.}, \bibinfo{author}{Gupta, A.},
  \bibinfo{author}{Newman, I.}, \bibinfo{author}{Rabinovich, Y.} \&
  \bibinfo{author}{Sinclair, A.}
\newblock \bibinfo{title}{Embedding k-outerplanar graphs into l1}.
\newblock \emph{\bibinfo{journal}{SIAM Journal on Discrete Mathematics}}
  \textbf{\bibinfo{volume}{20}}, \bibinfo{pages}{119--136}
  (\bibinfo{year}{2006}).

\bibitem{lee2009geometry}
\bibinfo{author}{Lee, J.~R.} \& \bibinfo{author}{Sidiropoulos, A.}
\newblock \bibinfo{title}{On the geometry of graphs with a forbidden minor}.
\newblock In \emph{\bibinfo{booktitle}{Proceedings of the forty-first annual
  ACM symposium on Theory of computing}}, \bibinfo{pages}{245--254}
  (\bibinfo{organization}{ACM}, \bibinfo{year}{2009}).

\bibitem{chakrabarti2012cut}
\bibinfo{author}{Chakrabarti, A.}, \bibinfo{author}{Fleischer, L.} \&
  \bibinfo{author}{Weibel, C.}
\newblock \bibinfo{title}{When the cut condition is enough: A complete
  characterization for multiflow problems in series-parallel networks}.
\newblock In \emph{\bibinfo{booktitle}{Proceedings of the forty-fourth annual
  ACM symposium on Theory of computing}}, \bibinfo{pages}{19--26}
  (\bibinfo{organization}{ACM}, \bibinfo{year}{2012}).

\bibitem{salmasi2017constant}
\bibinfo{author}{Salmasi, A.}, \bibinfo{author}{Sidiropoulos, A.} \&
  \bibinfo{author}{Sridhar, V.}
\newblock \bibinfo{title}{On constant multi-commodity flow-cut gaps for
  directed minor-free graphs}.
\newblock \emph{\bibinfo{journal}{arXiv preprint arXiv:1711.01370}}
  (\bibinfo{year}{2017}).

\bibitem{wright1997primal}
\bibinfo{author}{Wright, S.~J.}
\newblock \emph{\bibinfo{title}{Primal-dual interior-point methods}}
  (\bibinfo{publisher}{Siam}, \bibinfo{year}{1997}).

\bibitem{kaur2017amortized}
\bibinfo{author}{Kaur, E.} \& \bibinfo{author}{Wilde, M.~M.}
\newblock \bibinfo{title}{Amortized entanglement of a quantum channel and
  approximately teleportation-simulable channels}.
\newblock \emph{\bibinfo{journal}{Journal of Physics A: Mathematical and
  Theoretical}} \textbf{\bibinfo{volume}{51}}, \bibinfo{pages}{035303}
  (\bibinfo{year}{2017}).

\bibitem{TGW14}
\bibinfo{author}{Takeoka, M.}, \bibinfo{author}{Guha, S.} \&
  \bibinfo{author}{Wilde, M.~M.}
\newblock \bibinfo{title}{Fundamental rate-loss tradeoff for optical quantum
  key distribution}.
\newblock \emph{\bibinfo{journal}{Nature Communications}}
  \textbf{\bibinfo{volume}{5}}, \bibinfo{pages}{5235} (\bibinfo{year}{2014}).

\bibitem{takeoka2014squashed}
\bibinfo{author}{Takeoka, M.}, \bibinfo{author}{Guha, S.} \&
  \bibinfo{author}{Wilde, M.~M.}
\newblock \bibinfo{title}{The squashed entanglement of a quantum channel}.
\newblock \emph{\bibinfo{journal}{IEEE Transactions on Information Theory}}
  \textbf{\bibinfo{volume}{60}}, \bibinfo{pages}{4987--4998}
  (\bibinfo{year}{2014}).

\bibitem{wilde2016squashed}
\bibinfo{author}{Wilde, M.~M.}
\newblock \bibinfo{title}{Squashed entanglement and approximate private
  states}.
\newblock \emph{\bibinfo{journal}{Quantum Information Processing}}
  \textbf{\bibinfo{volume}{15}}, \bibinfo{pages}{4563--4580}
  (\bibinfo{year}{2016}).

\bibitem{christandl2017relative}
\bibinfo{author}{Christandl, M.} \& \bibinfo{author}{M{\"u}ller-Hermes, A.}
\newblock \bibinfo{title}{Relative entropy bounds on quantum, private and
  repeater capacities}.
\newblock \emph{\bibinfo{journal}{Communications in Mathematical Physics}}
  \textbf{\bibinfo{volume}{353}}, \bibinfo{pages}{821--852}
  (\bibinfo{year}{2017}).

\bibitem{Note2}
\bibinfo{note}{The authors of \cite {ford1956maximal} use the notion of chains
  (i.e. undirected paths) linking $s$ and $t$. Here we add a direction such
  that they are directed from $s$ to $t$.}

\bibitem{meyer1979continuity}
\bibinfo{author}{Meyer, R.}
\newblock \emph{\bibinfo{title}{Continuity properties of linear programs}}
  (\bibinfo{publisher}{University of Wisconsin-Madison. Computer Sciences
  Department}, \bibinfo{year}{1979}).

\bibitem{Note3}
\bibinfo{note}{Let us note that in Theorem 2 of \cite {bauml2017fundamental} we
  have used the multipartite squashed entanglement with respect to the
  partition defined by the multicut. However, as every channel $\protect
  \mathcal {N}^e$ is a point-to-point channel linking only two parts of the
  partition, it reduces to the usual bipartite squashed entanglement. This
  becomes obvious from the squashed entanglement with respect to a partition
  defined in Eq. (13)-(15) in \cite {bauml2017fundamental}.}

\bibitem{leung2010quantum}
\bibinfo{author}{Leung, D.}, \bibinfo{author}{Oppenheim, J.} \&
  \bibinfo{author}{Winter, A.}
\newblock \bibinfo{title}{Quantum network communication?the butterfly and
  beyond}.
\newblock \emph{\bibinfo{journal}{IEEE Transactions on Information Theory}}
  \textbf{\bibinfo{volume}{56}}, \bibinfo{pages}{3478--3490}
  (\bibinfo{year}{2010}).

\bibitem{menger1927allgemeinen}
\bibinfo{author}{Menger, K.}
\newblock \bibinfo{title}{Zur allgemeinen {K}urventheorie}.
\newblock \emph{\bibinfo{journal}{Fundamenta Mathematicae}}
  \textbf{\bibinfo{volume}{10}}, \bibinfo{pages}{96--115}
  (\bibinfo{year}{1927}).

\bibitem{stoica2003chord}
\bibinfo{author}{Stoica, I.} \emph{et~al.}
\newblock \bibinfo{title}{Chord: a scalable peer-to-peer lookup protocol for
  internet applications}.
\newblock \emph{\bibinfo{journal}{IEEE/ACM Transactions on Networking (TON)}}
  \textbf{\bibinfo{volume}{11}}, \bibinfo{pages}{17--32}
  (\bibinfo{year}{2003}).

\end{thebibliography}

\end{document}